\def\dOi{13(1:6)2017}
\newcommand{\m}[1]{\mathsf{#1}}
\newcommand{\CC}{\mathcal{C}}
\newcommand{\DD}{\mathcal{D}}
\newcommand{\FF}{\mathcal{F}}
\newcommand{\GG}{\mathcal{G}}
\newcommand{\HH}{\mathcal{H}}
\newcommand{\OO}{\mathcal{O}}
\newcommand{\PP}{\mathcal{P}}
\newcommand{\RR}{\mathcal{R}}
\newcommand{\TT}{\mathcal{T}}
\newcommand{\TTp}{\TT_\mathrm{p}}
\newcommand{\TTi}{\TT_\mathrm{i}}
\newcommand{\VV}{\mathcal{V}}
\newcommand{\A}{\mathcal{A}}
\newcommand{\N}{\mathbb{N}}
\newcommand{\Cost}{\mathcal{C}\kern-0.1em\mathit{ost}}
\renewcommand{\Cost}{\mathit{cost}}
\newcommand{\Size}{\mathcal{S}\kern-0.1em\mathit{ize}}
\renewcommand{\Size}{\mathit{size}}
\newcommand{\Extra}{\mathcal{E}\kern-0.1em\mathit{xtra}}
\newcommand{\IC}{\mathcal{C}}
\newcommand{\IS}{\mathcal{S}}
\newcommand{\Var}{\VV\m{ar}}
\newcommand{\Pos}{\PP\m{os}}
\newcommand{\seq}[2][n]{#2_1,\dots,#2_{#1}}
\newcommand{\nulseq}[2][k]{#2_0,\dots,#2_{#1}}
\newcommand{\instance}{\mathrel{\makebox[0pt]{\makebox[8pt][r]%
{\raise 1pt \hbox{$\cdot$}}}{\geq}}}
\newcommand{\dom}{\mathsf{dom}}
\newcommand{\I}{\mathcal{I}}
\newcommand{\J}{\mathcal{J}}
\newcommand{\nurc}{\NU_{\m{rc}}}
\newcommand{\interpret}[2]{\llbracket #1 \rrbracket^{\I}_{#2}}
\renewcommand{\interpret}[2]{[#1]_{\I}^{#2}}
\newcommand{\rhrhd}{\xrightharpoonup{\raisebox{-1mm}{$\scriptstyle
\smash{\rhd}$\,}}}
\newcommand{\quasistep}{\xrightarrow{\raisebox{-1mm}{$\scriptstyle
\smash{\rhd}$}}}
\def\systemname#1{\mbox{\textsf{#1}}\xspace}
\newcommand{\TCT}{%
 \systemname{T\kern-0.2em\raisebox{-0.2em}C\kern-0.2emT\kern-0.2em}%
}
\theoremstyle{definition}\newtheorem*{notation}{Notation}
\newcounter{recipe}
\renewcommand{\therecipe}{\Alph{recipe}}
\newcommand{\NU}{\upsilon}
\begin{document}

\title[Complexity of Conditional Term Rewriting]{Complexity of Conditional Term Rewriting\rsuper*}
\thanks{This research was supported by the Austrian Science Fund (FWF)
project I963 and partially supported by the Marie Sk{\l}odowska-Curie
action ``HORIP'', program H2020-MSCA-IF-2014, 658162.}

\thanks{{\lsuper a}Cynthia Kop's former affiliation: Department of Computer Science,
University of Insbruck, Austria.}
\author[C.~Kop]{Cynthia Kop\rsuper a}
\address{{\lsuper a}Department of Computer Science,
University of Copenhagen, Denmark}
\email{kop@di.ku.dk}
\author[A.~Middeldorp]{Aart Middeldorp\rsuper b}
\address{{\lsuper{b,c}}Department of Computer Science \newline
University of Innsbruck, Austria}
\email{\{aart.middeldorp, thomas.sternagel\}@uibk.ac.at}
\author[T.~Sternagel]{Thomas Sternagel\rsuper c}
\address{\vspace{-18 pt}}

\keywords{conditional term rewriting, complexity}
\subjclass{F.4.2 Grammars and Other Rewriting Systems}
\titlecomment{{\lsuper*}This article is an extended version of~\cite{KMS15},
with a more elegant transformation including a completeness proof in
Section~\ref{sec:transformation} and
a drastic extension of the interpretation-based methods in
Sections~\ref{sec:polynomial}--\ref{sec:splitsize}.}

\begin{abstract}
We propose a notion of complexity for oriented conditional
rewrite systems satisfying certain restrictions.
This notion is realistic in the sense that it measures not only successful
computations, but also partial computations that result in a
failed rule application. A transformation to unconditional
context-sensitive rewrite systems is presented which reflects this
complexity notion, as well as a technique
to derive runtime and derivational complexity bounds for the result
of this transformation.
\end{abstract}

\maketitle

\section{Introduction}

Conditional term rewriting~\cite[Chapter~7]{O02} is a well-known
computational paradigm. First studied in the eighties and early nineties
of the previous century, in more recent years transformation techniques
have received a lot of attention. Various automatic tools for
(operational) termination \cite{GBEFFOPSST14,LM14,SG09}
as well as confluence \cite{SM14} have been developed.

In this paper we consider the following question: What is the
greatest number of steps that can be done when evaluating terms,
for starting terms of a given size? For unconditional
rewrite systems this question has been investigated extensively
and numerous techniques have been developed that give
an upper bound on the resulting notions of derivational and
runtime complexity (e.g.\ \cite{BCMT01,HM14,HL89,MMNWZ11,MS11}).
Tools that support complexity methods (\cite{MM13,NEG13,ZK14})
are under active development and compete annually in the complexity
competition.\footnote{\url{http://cbr.uibk.ac.at/competition/}}

We are not aware of any techniques or tools for conditional (derivational
and runtime) complexity---or indeed, even of a \emph{definition} for
conditional complexity. This may be for a good reason, as it is not
obvious what such a definition should be.
Of course, simply counting steps without taking the conditions
into account will not do.
Counting successful rewrite steps both
in the reduction and in the evaluation of conditions
is a natural idea. This two-dimensional view is seen for instance
in studies of (operational) termination~\cite{LMM05,LM14} and certain
transformations from conditional rewrite systems to unconditional ones
(e.g., unravelings~\cite{M96,O02}).
However, we
will argue that this approach---considering only the \emph{successful}
evaluation steps---still gives rise to an
unrealistic notion of complexity. Modern rewrite engines like
Maude~\cite{Maude} that support conditional rewriting can spend
significant resources on evaluating conditions that in the end prove to
be useless for rewriting the term at hand.
This should be taken into account when defining complexity.

\subparagraph*{\textbf{Contribution.}}
We propose a new notion of conditional complexity for a relatively large
class of reasonably well-behaved conditional rewrite systems. This notion
aims to capture the maximal number of rewrite steps that can be performed
when reducing a term to normal form, including the steps that were
computed but
turned out to be
ultimately not useful.
In order to reuse existing methodology for deriving complexity bounds,
we present a transformation into unconditional rewrite systems
that can be used to estimate the conditional complexity, building
on the ideas of \emph{structure-preserving}
transformations~\cite{ABH03,SR06} but including several new ideas. The
transformed system is context-sensitive
(Lucas~\cite{L98,L02}), which is not yet supported by current complexity
tools; however, ignoring the corresponding restrictions,
we still obtain an upper bound on the conditional complexity.

\subparagraph*{\textbf{Organization.}}
The remainder of the paper is organized as follows. In the next section
we recall some preliminaries.
Based on the analysis of conditional complexity in
Section~\ref{sec:analysis}, we introduce our new notion
formally in Section~\ref{sec:complexity}.
Section~\ref{sec:transformation} presents a transformation to
context-sensitive rewrite systems, and in
Section~\ref{sec:polynomial}--\ref{sec:splitsize} we present an
interpretation-based method targeting the resulting systems, as well
as two optimizations of the technique to demonstrate that we can
obtain tight bounds on realistic systems.
Section~\ref{sec:conclusion} concludes with initial experiments,
related work, and suggestions for future work.

\section{Preliminaries}
\label{sec:preliminaries}

We assume familiarity with (conditional) term rewriting and all that
(e.g., \cite{BN98,O02,TeReSe})
and only shortly recall important notions that are used in the following.

In this paper we consider \emph{oriented} conditional (term) rewrite
systems (CTRSs for short).
\emph{Conditional rewrite rules} have the form
$\ell \to r \Leftarrow c$, where $c$ is a sequence
$a_1 \approx b_1, \dots, a_k \approx b_k$ of equations.
An oriented CTRS is a set $\RR$ of conditional rules.
The rewrite relation $\to_\RR$ associated with $\RR$
is formally defined as the union of a series of
approximations $\to_{\RR_i}$, where
\begin{itemize}
\item
$\RR_0 = \varnothing$,
\smallskip
\item
$\RR_{i+1} = \{ \ell\sigma \to r\sigma \mid
\text{$\ell \to r \Leftarrow c \in \RR$ and
$a\sigma \to^*_{\RR_i} b\sigma$ for all $a \approx b \in c$} \}$.
\end{itemize}
In the sequel we will primarily use the observation that
$s \to_\RR t$ if and only if there exist a position $p$ in $s$, a
rule $\ell \to r \Leftarrow c$ in $\RR$, and a substitution $\sigma$
such that $s|_p = \ell\sigma$, $t = s[r\sigma]_p$, and
$\RR \vdash c\sigma$, where the latter
denotes that $a\sigma \to^*_\RR b\sigma$ for all $a \approx b \in c$.
We may write $s \xrightarrow{\epsilon} t$ for a rewrite
step at the root position and $s \xrightarrow{\smash{> \epsilon}} t$
for a non-root step.

Given a (C)TRS $\RR$ over a signature $\FF$, the root symbols of
left-hand sides of rules in $\RR$ are called \emph{defined symbols}
and every other symbol in $\FF$ is a \emph{constructor symbol}.
These sets are denoted by $\FF_\mathcal{D}$ and $\FF_\mathcal{C}$,
respectively.
For a given symbol $f$, we write $\RR{\restriction}f$ for the
set of rules in $\RR$ whose left-hand sides have root symbol $f$.
A \emph{constructor term} consists of constructor symbols and
variables. A \emph{basic term} is a term $f(\seq{t})$ where
$f \in \FF_\mathcal{D}$ and $\seq{t}$ are constructor terms.
We call $\RR$ \emph{semi-finite} if $\RR{\restriction}f$ is finite
for every $f \in \FF_\mathcal{D}$.
Let $\Sigma(\FF,\VV)$ be the set of substitutions mapping to
$\TT(\FF,\VV)$. For substitutions $\sigma$ and $\tau$ we write
$\sigma \to_\RR^* \tau$ to denote $\sigma(x) \to_\RR^* \tau(x)$
for all variables $x \in \VV$.
A term $s$ is \emph{terminating} if there is no infinite reduction
$s \to_\RR s_1 \to_\RR s_2 \to_\RR \cdots$. A \emph{normal form} 
is a term $s$ such that there is no term $t$ with $s \to_\RR t$.
We say that
$t$ is a normal form of $s$ if $s \to_\RR^* t$ and $t$ is a normal
form. Note that it is possible for a normal form to instantiate the
left-hand side of a rule, which is not true for TRSs.

A (C)TRS is \emph{finitely branching} if there are only
finitely many distinct terms reachable in one rewrite step from any
given term. All semi-finite (C)TRSs are finitely branching, but
they may have an infinite signature.
Given a terminating and finitely branching TRS $\RR$ over a signature
$\FF$, the \emph{derivation height} of a term $t$ is defined as
$\m{dh}(t) = \max\,\{ n \mid \text{$t \to^n u$ for some term $u$} \}$.
This leads to the notion of \emph{derivational complexity}
$\m{dc}_\RR(n) = \max\,\{ \m{dh}(t) \mid |t| \leqslant n \}$,
where $|t|$ is the number of symbols occurring in $t$.
If we restrict the definition to basic terms $t$ we get the notion of
\emph{runtime complexity} $\m{rc}_\RR(n)$~\cite{HM08}.

Rewrite rules $\ell \to r \Leftarrow c$ of CTRSs are classified according
to the distribution of variables among $\ell$, $r$, and $c$.
In this paper we consider 3-CTRSs, where the rules satisfy
$\Var(r) \subseteq \Var(\ell,c)$. A CTRS $\RR$ is
\emph{deterministic} if for every rule
$\ell \to r \Leftarrow a_1 \approx b_1, \dots, a_k \approx b_k$ in $\RR$
we have $\Var(a_i) \subseteq \Var(\ell,\seq[i-1]{b})$ for
$1 \leqslant i \leqslant k$.

We write $s \quasistep t$ if there exist a
position $p$ in $s$, a rule
$\ell \to r \Leftarrow a_1 \approx b_1, \dots, a_k \approx b_k$, a
substitution $\sigma$, and an index $1 \leqslant i \leqslant k$ such
that $s|_p = \ell\sigma$, $a_j\sigma \to^* b_j\sigma$ for all
$1 \leqslant j < i$, and $t = a_i\sigma$.
A CTRS is \emph{quasi-decreasing} if there exists a well-founded
order $>$ with the subterm property (i.e., ${\rhd} \subseteq {>}$
where $s \rhd t$ if $t$ is a proper subterm of $s$) such that
both $\to$ and $\quasistep$ are included in $>$~\cite{DO90}.
We additionally define here that a term $s$ is quasi-decreasing if there
is no infinite sequence $s = u_0 \mathrel{(\to \cup \quasistep)} u_1
\mathrel{(\to \cup \quasistep)} \cdots$.
Clearly, a CTRS is quasi-decreasing if and only if all its terms are,
but individual terms may be quasi-decreasing even if the CTRS is not.
Quasi-decreasingness ensures termination and, for finite CTRSs,
computability of the rewrite relation.
Quasi-decreasingness coincides with
\emph{operational termination}~\cite{LMM05}.
We call a CTRS \emph{constructor-based} if the right-hand sides of
conditions as well as proper subterms of the left-hand sides of
rules are constructor terms.

\subparagraph*{\textbf{Limitations.}}
We restrict ourselves to constructor-based deterministic
3-CTRSs, where the right-hand sides of conditions
use only variables not occurring in the left-hand side
or in earlier conditions.
That is, for every rule $f(\seq{\ell}) \to r \Leftarrow
a_1 \approx b_1, \dots, a_k \approx b_k \in \RR$:
\begin{itemize}
\item
$\seq{\ell},\seq[k]{b}$ are
constructor terms without common variables,
\item
$\Var(r) \subseteq \Var(\seq{\ell},\seq[k]{b})$ and
$\Var(a_i) \subseteq \Var(\seq{\ell},\seq[i-1]{b})$ for
$1 \leqslant i \leqslant k$.
\end{itemize}
We will call such systems \emph{CCTRSs}. 
Furthermore, we will focus on \emph{strong} CCTRSs:
semi-finite CCTRSs such that, for every rule
$f(\seq{\ell}) \to r \Leftarrow a_1 \approx b_1, \dots, a_k
\approx b_k \in \RR$,
\begin{itemize}
\item
$f(\seq{\ell})$ and $\seq[k]{b}$ are linear terms: no variable
occurs more than once in them.
\end{itemize}
Note that, even in strong CCTRSs,
the left-hand sides of conditions are \emph{not} required to be linear.
We will develop a complexity notion for the general case of
CCTRSs, but limit the work on transformations (in
Section~\ref{sec:transformation} and beyond) to strong CCTRSs.
We will particularly consider \emph{confluent} CCTRSs.
While confluence is not needed for the formal development in this
paper, without it the complexity notion we define is not meaningful,
as discussed below.

\smallskip

To appreciate the limitations, note that in CTRSs which are not
deterministic 3-CTRSs, the rewrite relation is undecidable in
general, which makes it hard to define what complexity means.
The restrictions with regards to variables and constructors in
strong CCTRSs are the natural extension of the common restriction to
left-linear constructor TRSs in unconditional rewriting.
They closely correspond to \emph{pattern guards} \cite{EPJ01},
a language extension of Haskell.
Semi-finiteness actually \emph{weakens} the standard restriction
that $\RR$ must be finite.

The limitation to CCTRSs is important because, in confluent CCTRSs,
the approach to computation is unambiguous:
To evaluate whether a term $\ell\sigma$ reduces with
a rule $\ell \to r \Leftarrow a_1 \approx b_1, \dots, a_k \approx b_k$
of a CCTRS,
we start by reducing $a_1\sigma$ and, finding an instance of $b_1$,
extend $\sigma$ to the new variables in $b_1$ resulting in $\sigma'$,
continue with $a_2\sigma'$, and so on.
Assuming confluence, if there is an extension of $\sigma$
which satisfies all conditions then, no matter how we
reduce, this procedure will either find it or---if
$\ell\sigma$ is not quasi-decreasing---enter into an infinite reduction,
a possibility which is also interesting from a complexity standpoint.
However, if confluence or any of the restrictions on the conditions
were dropped, this would no longer be the case and we might be unable
to verify the applicability of a rule without enumerating all
possible reducts of its conditions.
The restrictions are needed to obtain Lemma~\ref{lem:local}, which
will be essential to justify the way we handle failure.

We do not limit interest to quasi-decreasing CCTRSs---which
would correspond to the usual approach of limiting interest to
terminating TRSs in the unconditional setting---but will rather
define the complexity of non-quasi-decreasing terms to be
infinite. This is done in order to unify proof efforts,
especially for Theorem~\ref{thm:transformcomplete}.

\begin{exa}
\label{ex:fibonacci}
The CTRS $\RR_{\m{fib}}$ consisting of the rewrite rules \\
\begin{minipage}{.3\textwidth}
\begin{align}
\m{0} + y &\to y \label{fibplus1} \\
\m{s}(x) + y &\to \m{s}(x + y) \label{fibplus2}
\end{align}
\end{minipage}
\begin{minipage}{.69\textwidth}
\begin{align}
\m{fib}(\m{0}) &\to \langle \m{0},\m{s}(\m{0}) \rangle \label{fib3} \\
\m{fib}(\m{s}(x)) &\to \langle z, w \rangle
~\Leftarrow~ \m{fib}(x) \approx \langle y, z \rangle,~ y + z \approx w
\label{fib4}
\end{align}
\end{minipage}

\medskip

\noindent
is a quasi-decreasing and confluent strong CCTRS. The requirements
for quasi-decreasingness are satisfied (e.g.) by the lexicographic path
order with precedence
$\m{fib} > {\langle \cdot,\cdot \rangle} > \m{+} > \m{s}$.
Because the 3-CTRS $\RR_{\m{fib}}$ is orthogonal, right-stable, and
properly oriented, confluence follows from the result of \cite{SMI95}.
\end{exa}

\begin{notation}
To simplify the notation and shorten proofs, we will use the
following convention throughout the paper.
Given a rule $\rho\colon \ell \to r \Leftarrow c$,
\begin{itemize}
\item
the conditional part $c$ consists of the conditions
$a_1 \approx b_1$, $\dots$, $a_k \approx b_k$ for some
$k \geqslant 0$ (which depends on $\rho$),
\smallskip
\item
for all $0 \leqslant j \leqslant k$,
$c_j^{\leqslant}$ denotes the
sequence $a_1 \approx b_1$, $\dots$, $a_j \approx b_j$.
\end{itemize}
In addition, we will sometimes refer to $\ell$ as $b_0$ and to
$r$ as $a_{k+1}$.
\end{notation}

With these conventions, the limitations on rules can be
reformulated as follows. For every rule $\ell \to r \Leftarrow c$:
\begin{itemize}
\item
$\seq[k]{b}$ and the proper subterms of $b_0$ are constructor terms,
\smallskip
\item
$\Var(b_i) \cap \Var(b_j) = \varnothing$
for all $0 \leqslant i, j \leqslant k$ with $i \neq j$
and, in a strong CCTRS, the terms $\nulseq[k]{b}$ are linear,
\smallskip
\item
$\Var(a_i) \subseteq \Var(\nulseq[i-1]{b})$
for all $1 \leqslant i \leqslant k+1$.
\end{itemize}\newpage

\section{Analysis}
\label{sec:analysis}

Before we can define a notion of complexity, we must consider a
model of computation. Unlike unconditional term rewriting, it is not
obvious how a term in a CTRS is reduced to normal form. Even taking
the approach for confluent CCTRSs sketched in
Section~\ref{sec:preliminaries} as a basis, some unresolved
questions remain.
In this section, we will study both computation and complexity by
an appeal to intuition. In the next section we will formalize the
results.

We start our analysis with a deceivingly simple CCTRS to illustrate
that the notion of complexity for conditional systems is not obvious.

\begin{exa}
\label{ex:evenodd}
The CCTRS $\RR_{\m{even}}$ consists of the following six rewrite rules:
\\
\hspace*{-2mm}
\begin{minipage}{.5\textwidth}
\begin{align}
\m{even}(\m{0}) &\to \m{true} \label{eo1}
\\
\m{even}(\m{s}(x)) &\to \makebox[7mm]{$\m{true}$} \,\Leftarrow\,
\m{odd}(x) \approx \m{true} \label{eo2}
\\
\m{even}(\m{s}(x)) &\to \makebox[7mm]{$\m{false}$} \,\Leftarrow\,
\m{even}(x) \approx \m{true} \label{eo3}
\end{align}
\end{minipage}
\begin{minipage}{.5\textwidth}
\begin{align}
\m{odd}(\m{0}) &\to \m{false} \label{eo4}
\\
\m{odd}(\m{s}(x)) &\to \makebox[7mm]{$\m{true}$} \,\Leftarrow\,
\m{even}(x) \approx \m{true} \label{eo5}
\\
\m{odd}(\m{s}(x)) &\to \makebox[7mm]{$\m{false}$} \,\Leftarrow\,
\m{odd}(x) \approx \m{true} \label{eo6}
\end{align}
\end{minipage}
\end{exa}

If, like in the unconditional case, we count the number of steps
needed to normalize a term, then a term
$t_n = \m{even}(\m{s}^n(\m{0}))$ has derivation height $1$, since
$t_n \to \m{false}$ or $t_n \to \m{true}$ in a single step.
To reflect actual computation, the
rewrite steps to verify the condition should be taken into account.
Viewed like this, normalizing $t_n$ takes $n+1$ rewrite steps.

However, this still seems unrealistic
as a rewriting engine cannot know in advance which
rule to attempt first. For example, when rewriting $t_9$,
rule \eqref{eo2} may be tried first, which requires normalizing
$\m{odd}(\m{s}^8(\m{0}))$ to verify the condition.
After finding that the condition fails, rule \eqref{eo3} is attempted.
Thus, for $\RR_{\m{even}}$, a tool implementing conditional term
rewriting with a random rule selection strategy would select a rule
with a failing condition about half the time. If we assume a worst
possible selection and count all rewrite steps performed during
the computation, we need $2^{n+1}-1$ steps to normalize $t_n$.

Although this exponential upper bound may come as a surprise, a powerful
rewrite engine like Maude~\cite{Maude} does not perform much better,
as can be seen from the data in Table~\ref{even/odd/Maude}.
Unlike term rewriting (which is non-deterministic by nature),
Maude employs a top-down rule selection strategy, so the order in
which the rules are presented makes a difference in the
outcome---although, as it turns out, not a substantial
one for Example~\ref{ex:evenodd} or other examples in this paper.
For rows three and four we presented
the rules to Maude in the order given in Example~\ref{ex:evenodd}.
If we change the order to \eqref{eo4}, \eqref{eo6}, \eqref{eo5},
\eqref{eo1}, \eqref{eo3}, \eqref{eo2} we obtain
the last two rows, showing an exponential number of steps in all cases.
Regardless of the order on the rules, we never obtain the optimal
linear bound for all tested terms.

\begin{table}[b]
\renewcommand{\arraystretch}{1.25}
\begin{tabular}{c|rrrrrrrrrrrrr}
$n$ & 0 & 1 & 2 & 3 & 4 & 5 & 6 & 7 & 8 & 9 & 10 & 11 & 12
\\
\hline
$2^{n+1}-1$ & 1 & 3 & 7 & 15 & 31 & 63 & 127 & 255 & 511 & 1023 &
2047 & 4095 & 8191
\\
\hline
$\m{even}(\m{s}^n(\m{0}))$ & 1 & 3 & 3 & 11 & 5 & 37 & 7 & 135 & 9 & 521 &
11 & 2059 & 13
\\
$\m{odd}(\m{s}^n(\m{0}))$ & 1 & 2 & 6 & 4 & 20 & 6 & 70 & 8 & 264 & 10 &
1034 & 12 & 4108
\\
\hline
$\m{even}(\m{s}^n(\m{0}))$ & 1 & 2 & 7 & 8 & 31 & 32 & 127 & 128 & 511 &
512 & 2047 & 2048 & 8191
\\
$\m{odd}(\m{s}^n(\m{0}))$ & 1 & 3 & 4 & 15 & 16 & 63 & 64 & 255 & 256 &
1023 & 1024 & 4095 & 4096
\end{tabular}
\renewcommand{\arraystretch}{1}
\medskip
\caption{Number of steps required to normalize
$\m{even}(\m{s}^n(\m{0}))$ and $\m{odd}(\m{s}^n(\m{0}))$ in Maude.}
\label{even/odd/Maude}
\end{table}

\medskip

From the above we conclude that \emph{a realistic definition of
conditional complexity should take failed computations into account}.
This conclusion opens new questions, however; most
pertinently, the question of how to handle repeated failed attempts.
It is obvious that we cannot allow repeatedly trying (and failing)
the same rule at the same position. For instance, it would be
foolish to attempt to reduce $\m{even}(\m{s}(\m{0}))$ with rule
\eqref{eo2}, fail, then try the same rule \emph{again} ten more times
before turning to \eqref{eo3} and count the steps for all the
failed attempts in the reduction cost. Thus, we must impose some
restrictions on duplicated attempts. To this end, let us consider
what constitutes a duplicated attempt.

\begin{exa}
\label{ex:poschoice}
The CCTRS $\RR_{\m{fg}}$ consists of the following two rewrite rules: \\
\begin{minipage}{.49\textwidth}
\begin{align}
\m{f}(x) &\to x &
\label{fg1}
\end{align}
\end{minipage}
\begin{minipage}{.5\textwidth}
\begin{align}
\m{g}(x) &\to \m{a} \,\Leftarrow\, x \approx \m{b}
\label{fg2}
\end{align}
\end{minipage}

\medskip

\noindent
Consider
$t_{n,m} = \m{f}^n(\m{g}(\m{f}^m(\m{a})))$. As we have not imposed an
evaluation strategy, one approach to evaluate this term could be as
follows.
We try using \eqref{fg2} on the subterm $\m{g}(\m{f}^m(\m{a}))$.
This fails in $m$ steps.
With \eqref{fg1} at the root position we obtain $t_{n-1,m}$.
We again attempt \eqref{fg2}, failing in $m$ steps. Repeating
this results in $n \cdot m$ rewrite steps before we reach
$t_{0,m}$.
\end{exa}

In this example we repeatedly attempt---and fail---to rewrite an
\emph{unmodified copy} of a subterm we tried before, with the same
rule. 
Although the position of the subterm $\m{g}(\m{f}^m(\m{a}))$ changes, we
already know that this reduction will fail.
Hence, once we fail a conditional rule on given subterms, it is reasonable
not to try the same rule again
on (copies of) the same subterms, even after a successful step.
In our model of computation we therefore wish to
keep track of previous failed attempts. This will be formalized in
Section~\ref{sec:complexity}.

\begin{exa}
\label{ex:poschoiceextended}
Continuing with $t_{0,m}$ from the preceding example, we could
try to use \eqref{fg2}, which fails in $m$ steps. Next, \eqref{fg1}
is applied on a subterm, and we obtain $t_{0,m-1}$. Again we try
\eqref{fg2}, failing after executing $m-1$ steps. Repeating this
alternation results eventually in the normal form $t_{0,0}$, but not
before computing $\frac{1}{2}(m^2 + 3m)$ rewrite steps in total.
\end{exa}

As in Example~\ref{ex:poschoice}, we keep returning to a subterm
which we have already tried before in an unsuccessful attempt. The
difference is that the subterm has been rewritten between successive
attempts.
According to the following general result, we need not reconsider a
failed attempt to apply a conditional rewrite rule if only the arguments
were changed.

\begin{lem}
\label{lem:local}
Given a CCTRS $\RR$,
suppose $s \xrightarrow{\smash{> \epsilon}}^* t$ and let
$\rho\colon \ell \to r \Leftarrow c$ be a rule such that
$s$ is an instance of $\ell$. If $t \xrightarrow{\epsilon}_\rho u$
then there exists a term $v$ such that 
$s \xrightarrow{\epsilon}_\rho v$ and $v \to^* u$.
\end{lem}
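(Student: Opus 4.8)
The plan is to exploit the constructor-based shape of $\ell$ in order to transport the root step from $t$ back to $s$. Since $s$ is an instance of $\ell$, write $s = \ell\sigma$. Since $t \xrightarrow{\epsilon}_\rho u$, the term $t$ is also an instance of $\ell$, so by the characterisation of conditional rewriting there is a substitution $\tau$ (defined on $\Var(b_0,\dots,b_k)$, using the conventions $b_0 = \ell$ and $a_{k+1} = r$) with $t = \ell\tau$, $a_i\tau \to^* b_i\tau$ for $1 \leqslant i \leqslant k$, and $u = r\tau$. The crucial first step is to establish that $\sigma(x) \to^* \tau(x)$ for every $x \in \Var(\ell)$.

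To prove this I would argue that the reduction $s \xrightarrow{\smash{> \epsilon}}^* t$ never disturbs the constructor skeleton of $\ell$. Because every proper subterm of $b_0 = \ell$ is a constructor term, every non-root, non-variable position of $\ell$ carries a constructor symbol, and hence cannot be the position of a redex, whose root must be a defined symbol. By induction on the length of the reduction, every step therefore takes place at or below a variable position of $\ell$, and the skeleton is preserved throughout. Reading off the subterm at one occurrence $p$ of a variable $x$ then yields $\sigma(x) = s|_p \to^* t|_p = \tau(x)$. The main obstacle is that in a (non-strong) CCTRS the left-hand side $\ell$ need not be linear, so distinct occurrences of $x$ may be rewritten along different paths and intermediate terms need not match $\ell$; I would handle this by taking the preservation of the constructor skeleton (rather than ``being an instance of $\ell$'') as the induction invariant, and by reading $\tau(x)$ off any single occurrence, which is well defined precisely because $t = \ell\tau$ forces all occurrences of $x$ in $t$ to coincide.

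With $\sigma \to^* \tau$ on $\Var(\ell)$ in hand, I would assemble the witnessing substitution for the step out of $s$. Define $\tau'$ to agree with $\sigma$ on $\Var(b_0)$ and with $\tau$ on $\Var(b_i)$ for $1 \leqslant i \leqslant k$; this is well defined since the sets $\Var(b_0),\dots,\Var(b_k)$ are pairwise disjoint. Then $\tau' \to^* \tau$ as substitutions ($\sigma \to^* \tau$ on $\Var(b_0)$, equality elsewhere), so $w\tau' \to^* w\tau$ for every term $w$ by closure of $\to^*$ under contexts. In particular $b_0\tau' = \ell\sigma = s$, and for each condition $a_i\tau' \to^* a_i\tau \to^* b_i\tau = b_i\tau'$, where the final equality uses that $\tau'$ and $\tau$ coincide on $\Var(b_i)$ for $i \geqslant 1$, the middle reduction is the given one, and the first uses $\Var(a_i) \subseteq \Var(b_0,\dots,b_{i-1})$. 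Hence $s \xrightarrow{\epsilon}_\rho v$ with $v = r\tau'$, and finally $v = r\tau' \to^* r\tau = u$, again by $\tau' \to^* \tau$ together with $\Var(r) \subseteq \Var(b_0,\dots,b_k)$. This produces the required $v$, and essentially each CCTRS restriction (constructor proper subterms, pairwise-disjoint right-hand-side variables of conditions, and the determinism variable conditions) is used exactly once.
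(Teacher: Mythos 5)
Your proposal is correct and follows essentially the same route as the paper: write $s=\ell\sigma$, obtain $\tau$ from the root step at $t$, show $\sigma(x)\to^*\tau(x)$ on $\Var(\ell)$ because $\ell$ is a basic term whose proper subterms are constructor terms, and then patch $\sigma$ with $\tau$ on the condition variables (using $\Var(b_i)\cap\Var(\ell)=\varnothing$) to fire $\rho$ at $s$ with $v=r\sigma'$. The only difference is that you spell out the constructor-skeleton/non-linearity argument that the paper compresses into one sentence, which is a harmless (indeed welcome) elaboration.
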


So if we can rewrite a term at the root position eventually,
and the term already matches the left-hand side of the rule with which
we can do so, then we can rewrite the term with this rule
immediately and obtain the same result.
Note that this lemma does \emph{not} assume confluence,
  quasi-decreasingness or left-linearity, and so is broadly applicable.

  \begin{proof}
  Suppose $s = \ell\sigma$ with
  $\dom(\sigma) \subseteq \Var(\ell)$, and let $\tau$ be a
  substitution such that $t = \ell\tau$, $u = r\tau$, and
  $\RR \vdash c\tau$, which exists since $\rho$ applies to $t$
  at the root position.
  Because $\ell$ is a basic term, all steps in
  $s \xrightarrow{\smash{> \epsilon}}^* t$ take place in
  the substitution part $\sigma$ of $\ell\sigma$ and thus
  $\sigma(x) \to^* \tau(x)$ for all $x \in \Var(\ell)$.
  Defining the substitution $\sigma'$ as follows,
  we have $s = \ell\sigma = \ell\sigma'$ and $\sigma' \to^* \tau$:
  \[
  \sigma'(x) = \begin{cases}
  \sigma(x) & \text{if $x \in \Var(\ell)$} \\
  \tau(x) & \text{if $x \notin \Var(\ell)$}
  \end{cases}
  \]
  Let $a \approx b$ be a condition in $c$. 
  From $\Var(b) \cap \Var(\ell) = \varnothing$ we infer
  $a\sigma' \to^* a\tau \to^* b\tau = b\sigma'$. It follows that
  $\RR \vdash c\sigma'$ and thus $s \xrightarrow{\epsilon}_\rho r\sigma'$.
  Hence we can take $v = r\sigma'$ as $r\sigma' \to^* r\tau = u$.
  \end{proof}

  From the above observations we conclude that, to avoid unnecessary
  repetitions, we can simply mark occurrences
  of defined symbols with the rules we have already tried without
  success---or, symmetrically, with the rules we have yet to try,
  as we will do in Section~\ref{sec:complexity}.

  Table~\ref{fg/Maude} compares these theoretical
  considerations to actual computations of $\RR_{\m{fg}}$ in Maude.
  Interestingly, Maude seems to perform worse on evaluating
  $\m{g}(\m{f}^m(\m{a}))$ than the realistic $m+1$ bound.
  Thus, it seems that Maude could benefit from incorporating
  the implications of Lemma~\ref{lem:local}. However, it should be
  remarked that when presenting $\RR_{\m{fg}}$ as a \emph{functional module}
  \cite[Chapter~6]{Maude}, Maude will switch to an innermost evaluation
  strategy and compute the normal form $\m{g}(\m{a})$ of
  $\m{f}^n(\m{g}(\m{f}^m(\m{a})))$ in $m+n$ steps.

  \begin{table}[tb]
  \renewcommand{\arraystretch}{1.25}
  \begin{tabular}{c|rrrrrrrrrrrrr}
  $n$ & 0 & 1 & 2 & 3 & 4 & 5 & 6 & 7 & 8 & 9 & 10 & 11 & 12
  \\ 
  $m$ & 0 & 1 & 2 & 3 & 4 & 5 & 6 & 7 & 8 & 9 & 10 & 11 & 12
  \\
  $n \cdot m$ & 0 & 1 & 4 & 9 & 16 & 25 & 36 & 49 & 64 & 81 & 100 & 121 &
  144 \\
  $\frac{1}{2}(m^2 + 3m)$ & 0 & 2 & 5 & 9 & 14 & 20 & 27 & 35 & 44 & 54 &
  65 & 77 & 90 \\
  \hline
  $\m{f}^n(\m{g}(\m{f}^m(\m{a})))$ & 0 & 3 & 8 & 16 & 28 & 45 & 68 & 98 &
  136 & 183 & 240 & 308 & 388 \\
  $\m{g}(\m{f}^m(\m{a}))$ & 0 & 2 & 6 & 13 & 24 & 40 & 62 & 91 & 128 & 174 &
  230 & 297 & 376
  \end{tabular}
  \renewcommand{\arraystretch}{1}
  \medskip
  \caption{Number of steps required to normalize
  $\m{f}^n(\m{g}(\m{f}^m(\m{a})))$ and $\m{g}(\m{f}^m(\m{a}))$ in Maude.}
  \label{fg/Maude}
  \end{table}

  In this paper, we will assume that rewriting takes
  Lemma~\ref{lem:local} into account, and thus avoids repeatedly
  reevaluating the same term. Also unlike Maude, we will not impose
  an evaluation order on the rules, nor a strategy for the position in
  a term that must be rewritten first, but allow free choice as is
  common in term rewriting.

  \medskip

  Another important aspect to consider is how to define a
  ``failed'' reduction. Intuitively, a rule $\ell \to r \Leftarrow c$
  should be considered not
  applicable on a term $\ell\sigma$ if there is no extension $\sigma'$
  of $\sigma$ such that $\RR \vdash c\sigma'$.
  Yet in
  Example~\ref{ex:evenodd} we already concluded that the second rule
  was not applicable to $t_9$ simply after reducing
  $\m{odd}(\m{s}^8(\m{0}))$ to its normal form $\m{false}$, because
  $\m{false}$ does not match the right-hand side $\m{true}$ of the
  condition.
  As remarked in Section~\ref{sec:preliminaries}, this is possible due
  to our restrictions.
  The following lemma makes this observation formal.

  \begin{lem}\label{lem:whatisfailure}
  Let $\rho\colon \ell \to r \Leftarrow c$
  be a rule in a confluent CCTRS $\RR$ and $\sigma$ a substitution 
  such that $\dom(\sigma) \subseteq \Var(\ell)$ and
  $\ell\sigma$ is quasi-decreasing.
Then $\rho$ is not applicable to $\ell\sigma$ if and only if there
is an extension $\sigma'$ of $\sigma$, and some
$1 \leqslant i \leqslant k$ such that
$\RR \vdash c_{i-1}^{\leqslant}\sigma$
and $a_i\sigma' \to^* u$ for some normal
form $u$ which is not an instance of $b_i$.
\end{lem}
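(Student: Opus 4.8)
The plan is to prove the two implications separately, using quasi-decreasingness for the forward direction and confluence for the backward one. Throughout I rely on the CCTRS conventions recalled above: each $b_j$ (with $b_0 = \ell$) is a constructor term, the sets $\Var(b_0),\dots,\Var(b_k)$ are pairwise disjoint, and $\Var(a_i) \subseteq \Var(\nulseq[i-1]{b})$. I write $s \downarrow t$ when $s$ and $t$ have a common reduct. For the forward direction, assume $\rho$ is not applicable to $\ell\sigma$ and evaluate the conditions stepwise, building extensions $\sigma = \sigma_0 \subseteq \sigma_1 \subseteq \cdots$ of $\sigma$. Having built $\sigma_{i-1}$ so that $\RR \vdash c_{i-1}^{\leqslant}\sigma_{i-1}$, I observe $\ell\sigma \quasistep a_i\sigma_{i-1}$ (take the root position, the substitution $\sigma_{i-1}$, and index $i$ in the definition of $\quasistep$), so quasi-decreasingness of $\ell\sigma$ forces $a_i\sigma_{i-1}$ to be terminating and hence to have a normal form $u_i$. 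If $u_i$ is not an instance of $b_i$, I stop and return $\sigma' := \sigma_{i-1}$ together with this $i$. Otherwise $u_i = b_i\tau_i$ for the (unique, since $b_i$ is a constructor term) matcher $\tau_i$ on the fresh variables $\Var(b_i)$, and I set $\sigma_i := \sigma_{i-1} \cup \tau_i$, which satisfies the $i$-th condition because $a_i\sigma_i = a_i\sigma_{i-1} \to^* u_i = b_i\sigma_i$. If no condition ever fails, $\sigma_k$ satisfies all conditions and $\rho$ is applicable, contradicting the assumption. This direction uses neither confluence nor linearity.

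For the backward direction, assume witnesses $\sigma'$, $i$, and $u$ as in the statement and, for contradiction, that $\rho$ is applicable, so there is an extension $\sigma''$ of $\sigma$ with $a_j\sigma'' \to^* b_j\sigma''$ for all $1 \leqslant j \leqslant k$. The heart of the argument is to show $a_i\sigma' \downarrow a_i\sigma''$, and for this I would prove by induction on $m < i$ that $\sigma'(x) \downarrow \sigma''(x)$ for every $x \in \Var(\nulseq[m]{b})$. The base case $m = 0$ is immediate, since $\sigma'$ and $\sigma''$ both agree with $\sigma$ on $\Var(\ell)$. For the step, the induction hypothesis gives $\sigma'(y) \downarrow \sigma''(y)$ for all $y \in \Var(a_m) \subseteq \Var(\nulseq[m-1]{b})$, whence $a_m\sigma' \downarrow a_m\sigma''$ by reducing each variable's image (in every copy) to a common reduct. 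Since the first $i-1$ conditions hold under both substitutions, $a_m\sigma' \to^* b_m\sigma'$ and $a_m\sigma'' \to^* b_m\sigma''$, so confluence yields a common reduct $w$ of $b_m\sigma'$ and $b_m\sigma''$. Because $b_m$ is a constructor term, every reduction out of an instance of it stays within the substituted parts; hence, reading off $w$ at any one position of a variable $x \in \Var(b_m)$ produces a term to which both $\sigma'(x)$ and $\sigma''(x)$ reduce, giving the required joinability. Possible nonlinearity of $b_m$ causes no trouble here, as I only need a single common reduct per variable.

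With $\sigma'(x) \downarrow \sigma''(x)$ for all $x \in \Var(a_i)$, I obtain $a_i\sigma' \downarrow a_i\sigma''$ as above. Since $a_i\sigma' \to^* u$ with $u$ a normal form, confluence gives $a_i\sigma'' \to^* u$; together with $a_i\sigma'' \to^* b_i\sigma''$ and a further use of confluence, $b_i\sigma'' \to^* u$. Finally, because $b_i$ is a constructor term, its instance $b_i\sigma''$ can only reduce inside the substituted parts, and confluence forces all copies of any repeated variable to share one normal form, so the normal form $u$ is itself an instance of $b_i$. This contradicts the assumption that $u$ is not an instance of $b_i$, so $\rho$ is not applicable.

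I expect the backward induction to be the main obstacle, specifically the bookkeeping that lets confluence transfer joinability from the condition left-hand sides $a_m$ across the constructor patterns $b_m$ to the fresh variables they introduce, while correctly accommodating nonlinear $b_m$ in a general (non-strong) CCTRS. Two auxiliary observations make this smooth and are reused both in the induction step and in the final contradiction: that any reduct of an instance of a constructor term is obtained by reducing only the substituted parts, and that a normal-form reduct of such an instance is again an instance of that term. I would isolate these as short remarks before carrying out the induction.
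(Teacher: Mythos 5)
Your proof is correct. The forward direction coincides with the paper's: both build extensions $\sigma = \sigma_0 \subseteq \sigma_1 \subseteq \cdots$ by evaluating the conditions left to right, use $\ell\sigma \quasistep a_j\sigma_{j-1}$ together with quasi-decreasingness to obtain a normal form of each $a_j\sigma_{j-1}$, and stop at the first one that fails to instantiate $b_j$. In the backward direction you and the paper share the same skeleton (contradiction with a witnessing substitution, plus an induction over the condition index that transfers information between the two substitutions across the constructor patterns $b_m$), but the technical device differs: the paper first normalizes the witnessing substitution $\tau$ to $\tau{\downarrow}$ --- which requires a separate argument that each $\tau(x)$ is terminating (via quasi-decreasingness of $\ell\tau$) and has a unique normal form (via confluence) --- and then proves the directed statement $\sigma'(x) \to^* \tau{\downarrow}(x)$, whereas you maintain only joinability $\sigma'(x) \downarrow \sigma''(x)$ throughout and invoke normal forms only at the very end, where the given normal form $u$ of $a_i\sigma'$ forces $b_i\sigma'' \to^* u$ and uniqueness of normal-form reducts (a consequence of confluence alone) handles repeated variables in $b_i$. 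Your variant is marginally more economical: it dispenses with the normalization step and does not use quasi-decreasingness in the backward implication at all, so that direction holds for any confluent CCTRS. The paper's normalized substitution buys a concrete target term $b_i(\tau{\downarrow})$ that makes the final identification with $u$ a one-liner, but otherwise the two arguments are interchangeable, and your two auxiliary observations about reducts of constructor-term instances are exactly the facts the paper also relies on implicitly.
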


\begin{proof}
(Recall that $c$ is $a_1 \approx b_1$, $\dots$, $a_k \approx b_k$ and
$c_{i-1}^\leqslant$ denotes $a_1 \approx b_1$, $\dots$,
$a_{i-1} \approx b_{i-1}$.)
We first prove the ``only if'' direction. So suppose that $\rho$ is
not applicable to $\ell\sigma$.
We define extensions $\sigma_0, \dots, \sigma_{i-1}$ of $\sigma$ such that
$\sigma_j(x) = \sigma(x)$ for all $x \in \Var(\ell)$,
$\dom(\sigma_j) \subseteq \Var(\ell,\seq[j]{b})$
for all $0 \leqslant j < i$, $\RR \vdash c_j^{\leqslant}\sigma_j$,
and $a_i\sigma_{i-1} \to^* u$ for some normal form $u$ which is not an
instance of $b_i$. Then $\sigma' = \sigma_{i-1}$ satisfies the
requirements of the lemma.
Let $\sigma_0 = \sigma$ and suppose $\seq[j-1]{\sigma}$ have been
defined. We have $\ell\sigma = \ell\sigma_{j-1} \quasistep a_j\sigma_{j-1}$
and hence $a_j\sigma_{j-1}$ is terminating by quasi-decreasingness.
Let $u$ be a normal form of $a_j\sigma_{j-1}$. If $u$ is an instance of
$b_j$, say $u = b_j\tau$ with $\dom(\tau) \subseteq \Var(b_j)$, then
we let $\sigma_j = \sigma_{j-1} \cup \tau$. Note that $\sigma_j$ is
well-defined as $\dom(\sigma_{j-1}) \cap \Var(b_j) = \varnothing$.
In this case $\sigma_j$ clearly satisfies the above conditions.
If $u$ is not an instance of $b_j$ then we are done by letting $i = j$.
Note that the latter must happen for some $j$ since we assumed that
$\rho$ is not applicable.

Next we prove the ``if'' direction. Suppose $\sigma'$, $i$, and
$u$ exist with the stated properties. For a proof by contradiction,
also suppose that the rule is applicable, so
there is an extension $\tau$ of $\sigma$ such that
$\RR \vdash c\tau$.
Define the substitution $\tau{\downarrow}$ as
$\{ x \mapsto \tau(x){\downarrow_\RR} \mid x \in \dom(\tau) \}$,
where $\tau(x){\downarrow_\RR}$ denotes the unique normal form of
$\tau(x)$.
(This is well-defined because $\ell\tau = \ell\sigma$ is quasi-decreasing
\and thus $a_j\tau$ and $b_j\tau$ are quasi-decreasing for all
$1 \leqslant j \leqslant k$. Therefore, all subterms of
$\ell\tau$, $b_1\tau$, $\dots$, $b_k\tau$ are terminating.
Since we may assume
$\dom(\tau) \subseteq \Var(\ell,\seq[k]{b})$---as each $a_j$ uses
only variables in $\Var(\ell,b_1,\dots,b_{j-1})$---confluence ensures
that $\tau(x)$ has a unique normal form for
every $x \in \dom(\tau)$.)
Fix $1 \leqslant j \leqslant k$.
We have $a_j\tau \to^* b_j\tau \to^* b_j(\tau{\downarrow})$
and $a_j\tau \to^* a_j(\tau{\downarrow})$. Since
$b_j$ is a constructor term, $b_j(\tau{\downarrow})$
is a normal form and thus
$a_j(\tau{\downarrow}) \to^* b_j(\tau{\downarrow})$ by confluence.
We claim that $\sigma'(x) \to^* \tau{\downarrow}(x)$ for all
$x \in \Var(\ell,\seq[i-1]{b})
$. If $x \in \Var(\ell)$ then
$\sigma'(x) = \sigma(x) = \tau(x)$.
Hence also $\sigma'(x) \to^* \tau{\downarrow}(x)$.
Suppose the claim holds for $x \in \Var(\ell,\seq[j-1]{b})$
with $1 \leqslant j < i$.
From $\Var(a_j) \subseteq \Var(\ell,\seq[j-1]{b})$ we infer
$a_j\sigma' \to^* a_j(\tau{\downarrow}) \to^* b_j(\tau{\downarrow})$.
Also $a_j\sigma' \to^* b_j\sigma'$ and thus
$b_j\sigma' \to^* b_j(\tau{\downarrow})$ by confluence. As $b_j$ is
a constructor term,
$\sigma'(x) \to^* \tau{\downarrow}(x)$ for all $x \in \Var(b_j)$.
This completes the proof of the claim. From the claim we find
$a_i\sigma' \to^* a_i(\tau{\downarrow}) \to^* b_i(\tau{\downarrow})$.
Using $a_i\sigma' \to^* u$ and confluence, we obtain
$b_i(\tau{\downarrow}) = u$, contradicting the assumption that
$u$ is not an instance of $b_i$.
\end{proof}

Thus, if we reduce the conditions of a rule and find a normal
form that does not instantiate the required right-hand side, we
can safely conclude that the rule does not apply.

A final aspect to consider is when to stop reducing a condition.
Should we stop once we obtain the right shape? Or should we
allow---or even enforce---reductions to normal form?

\begin{exa}\label{ex:whenstop}
Consider the following CCTRS implementing addition:
\begin{align*}
\m{plus}(x,y) &\to y \,\Leftarrow\, x \approx \m{0} &
\m{plus}(x,y) &\to \m{s}(\m{plus}(z,y)) \,\Leftarrow\, x \approx \m{s}(z)
\end{align*}
Let $t = \m{plus}(\m{plus}(\m{s}^9(\m{0}),\m{0}),\m{s}(\m{0}))$.
To reduce $t$ at the root with the second rule, we must
evalu- ate the condition $\m{plus}(\m{s}^9(\m{0}),\m{0}) \to^* \m{s}(z)$.
This is satisfied in a single step, reducing to
$\m{s}(z)\{ z \mapsto \m{plus}(\m{s}^8(\m{0}),\m{0}) \}$.
Should we therefore reduce to
$\m{s}(\m{plus}(\m{plus}(\m{s}^8(\m{0}),\m{0}),\m{s}(\m{0})))$ immediately?
Or should we continue reducing the
condition until we obtain a normal form $\m{s}^8(\m{0})$ and
then reduce to $\m{s}(\m{plus}(\m{s}^8(\m{0}),\m{s}(\m{0})))$?
Similarly, if we try to reduce $t$ at the root with the
first rule, we obtain in one step an instance of $\m{s}(z)$,
which does not unify with $\m{0}$. Since every reduct of $\m{s}(z)\sigma'$
is still an instance of $\m{s}(z)$, we could immediately
conclude that the condition will fail.
\end{exa}

Both questions are a matter of \emph{strategy}, and different
approaches might adopt different choices. One could argue that it
makes little sense to continue reducing a term for a condition when we
already know that it is satisfied, much like we said it makes no
sense to keep reevaluating the same failing condition.
However, since we aim for a
general definition, we have decided not to pursue this. That is, in
Example~\ref{ex:whenstop} we may choose to stop
evaluating the conditions and reduce with the rule (resp.\ conclude
failure) once we obtain an instance of the desired pattern (resp.\ a
term for which we can easily see that it will never reduce to such an
instance), but this is not compulsory. Specific evaluation
strategies can easily be added to the corresponding definitions and
transformations later.

Although a large part of our complexity notion deals with failed
reductions, there are many CCTRSs where this is not relevant. Consider
for instance Example~\ref{ex:fibonacci} in which the conditions of the
one conditional rule are not expected to fail;
they merely evaluate the result of a smaller term to a normal form
(or at least a constructor instance), and use its subterms.
Correspondingly, as can be seen in Table~\ref{fib/Maude},
the time needed to normalize terms in the Fibonacci CCTRS grows
roughly as fast as the Fibonacci sequence itself, with no
additional exponential growth for failed attempts.

\begin{table}[tb]
\renewcommand{\arraystretch}{1.25}
\begin{tabular}{c|rrrrrrrrrrrrr}
$n$ & 0 & 1 & 2 & 3 & 4 & 5 & 6 & 7 & 8 & 9 & 10 & 11 & 12
\\
\hline
$\m{fib}(\m{s}^n(\m{0}))$ & 1 & 3 & 7 & 13 & 23 & 40 & 69 & 119 & 205 &
353 & 607 & 1042 & 1785
\end{tabular}
\renewcommand{\arraystretch}{1}
\medskip
\caption{Number of steps required to normalize
$\m{fib}(\m{s}^n(\m{0}))$ in Maude.}
\label{fib/Maude}
\end{table}

\section{Conditional Complexity}
\label{sec:complexity}

In Section~\ref{sec:analysis} we have come to an intuitive
understanding of how a term $s$ in a (confluent) CTRS can be reduced,
and what the corresponding complexity should be:
\begin{itemize}
\item
In every step we select a position $p$ and a rule
$\ell \to r \Leftarrow c$ matching the corresponding subterm
(i.e., $s|_p = \ell\sigma$ for some $\sigma$).
\smallskip
\item
We then start evaluating the conditions in $c$ from left to right,
extending $\sigma$ as we go, until we have either
confirmed all conditions or obtain a failing condition.
\smallskip
\item
In the former case, we reduce $s|_p$ by this rule (obtaining
$s[r\sigma']_p$ for the extension $\sigma'$ of $\sigma$ found by
evaluating the conditions in $c$). In the latter case, we mark
the subterm $s|_p$ to indicate that we should not try the
rule $\ell \to r \Leftarrow c$ on this subterm again.
\smallskip
\item
The complexity of a conditional reduction is then obtained by
counting all rewrite steps, including those in successful and
failed condition evaluations.
\end{itemize}
In this section, we will formalize this intuition. A key aspect is the
ability to mark terms, so as to avoid continuously
repeating the same reduction attempt.
To achieve this, we will label defined function symbols by
subsets of the rules used to define them. Then, we define
a variation $\xrightharpoonup{}$ of the rewrite relation $\to$ which
explicitly includes failed computations. This relation is
used as the basis to define a complexity measure in a natural way.

We begin by defining labeled
terms and the labeled rewrite relation $\xrightharpoonup{}$ (Section
\ref{subsec:complexity:definition}). Then we analyze how
$\xrightharpoonup{}$ relates to the unlabeled conditional rewrite
relation $\to$ (Section~\ref{subsec:complexity:simulation}) and
define derivation height and complexity (Section
\ref{subsec:complexity:costs}).

\subsection{Labeled Terms and Reduction}
\label{subsec:complexity:definition}

\begin{defi}
Let $\RR$ be a CCTRS over a signature $\FF$. The labeled signature
$\GG$ is defined as $\FF_\CC \cup \{ f_R \mid \text{$f \in \FF_\DD$ and
$R \subseteq \RR{\restriction}f$} \}$. A labeled term is a term in
$\TT(\GG,\VV)$.
\end{defi}

Intuitively, the label $R$ in $f_R$ records the defining rules for $f$
which have not yet been tried. In examples we will generally
conflate the rules in $R$ with labels identifying them.

\begin{defi}
Let $\RR$ be a CCTRS over a signature $\FF$.
The mapping
$\m{label}\colon \TT(\FF,\VV) \to \TT(\GG,\VV)$ labels every defined
symbol $f$ with $\RR{\restriction}f$.
The mapping
$\m{erase}\colon \TT(\GG,\VV) \to \TT(\FF,\VV)$ removes the labels
from defined symbols.
\end{defi}

We obviously have $\m{erase}(\m{label}(t)) = t$ for
every $t \in \TT(\FF,\VV)$ and $\m{erase}(t) = \m{label}(t) = t$
for constructor terms $t$. The identity $\m{label}(\m{erase}(t)) = t$
does not hold for arbitrary $t \in \TT(\GG,\VV)$.

\begin{defi}
A \emph{labeled normal form} is a term in
$\TT(\FF_\CC \cup \{ f_\varnothing \mid f \in \FF_\DD \},\VV)$.
\end{defi}

\begin{exa}
\label{ex:labeledfib}
In $\RR_{\m{fib}}$ from Example~\ref{ex:fibonacci}, the labeled
signature $\GG$ consists of $\m{0}$, $\m{s}$, $\langle\cdot\rangle$,
$\m{+}_R$ for every subset $R$ of
$\{ \eqref{fibplus1}, \eqref{fibplus2} \}$,
and $\m{fib}_R$ for every subset $R$ of
$\{ \eqref{fib3}, \eqref{fib4} \}$.
We have
\[
\m{label}(\m{fib}(\m{s}(\m{0}) + \m{0})) =
\m{fib}_{\{ \eqref{fib3}, \eqref{fib4} \}}(\m{s}(\m{0})
\mathop{\m{+}_{\{ \eqref{fibplus1}, \eqref{fibplus2} \}}} \m{0})
\]
Examples of labeled normal forms are $\m{s}(\m{0})$ and
$\m{fib}_\varnothing(\m{0} \mathop{\m{+}_\varnothing}
\m{s}(\m{s}(\m{0})))$.
\end{exa}

The relation $\xrightharpoonup{}$ will be designed so that a
ground labeled term can be reduced if and only if it is not a labeled
normal form (see Lemma~\ref{lem:labelednfprop}).
First, with Definition~\ref{cc bot} we can remove rules from a
label if they will never apply due to an impossible
matching problem.

\begin{defi}
\label{cc bot}
Let $\RR$ be a CCTRS. For labeled terms $s$ and $t$ we write
$s \xrightharpoonup{\bot\,} t$ if there exist a position
$p \in \Pos(s)$ and a rewrite rule
$\rho\colon \ell \to r \Leftarrow c$ in $\RR$ such that
\begin{enumerate}
\item
$s|_p = f_R(\seq{s})$ with $\rho \in R$, 
\smallskip
\item
$t = s[f_{R \setminus \{ \rho \}}(\seq{s})]_p$, and
\smallskip
\item
there exist linear labeled normal forms $\seq{u}$ with fresh
variables and a substitu\-tion $\sigma$ such that
$s|_p = f_R(\seq{u})\sigma$ and $f(\seq{u})$ does not unify
with $\ell$.
\end{enumerate}
\end{defi}\smallskip

\noindent The last item ensures that rewriting (using $\xrightharpoonup{}$)
strictly below position $p$
cannot give a reduct that matches $\ell$,
since all such reducts will still be instances of $f_R(\seq{u})$.
Furthermore, if $s$ is ground and $s|_p = f_R(\seq{s})$ where $R$ is
non-empty and all $\seq{s}$ are labeled normal forms, then either
$f(\seq{s})$ is an instance of $\ell$, or
$\smash{\xrightharpoonup{\bot\,}}$ applies to $s|_p$.

\begin{exa}
\label{ex:labeledevenfollowup}
In Example~\ref{ex:labeledfib} we have
\[
\m{fib}_{\{ \eqref{fib3}, \eqref{fib4} \}}(\m{s}(\m{0})
\mathop{\m{+}_{\{ \eqref{fibplus1}, \eqref{fibplus2} \}}} \m{0})
~\xrightharpoonup{\bot\,}~
\m{fib}_{\{ \eqref{fib3}, \eqref{fib4} \}}(\m{s}(\m{0})
\mathop{\m{+}_{\{ \eqref{fibplus2} \}}} \m{0})
\]
because $\m{s}(\m{0}) + \m{0}$ does not unify with $\m{0} + y$,
and both $\m{s}(\m{0})$ and $\m{0}$ are linear labeled normal forms.
Also
\[
\m{fib}_{\{ \eqref{fib3}, \eqref{fib4} \}}(\m{s}(\m{0}
\mathop{\m{+}_{\{ \eqref{fibplus1} \}} \m{0}})
\mathop{\m{+}_{\{ \eqref{fibplus1}, \eqref{fibplus2} \}}} \m{0})
~\xrightharpoonup{\bot\,}~
\m{fib}_{\{ \eqref{fib3}, \eqref{fib4} \}}(\m{s}(\m{0}
\mathop{\m{+}_{\{ \eqref{fibplus1} \}}} \m{0})
\mathop{\m{+}_{\{ \eqref{fibplus2} \}}} \m{0})
\]
since $\m{s}(x)$ and $\m{0}$ are linear labeled normal forms
and $\m{s}(x) + \m{0}$ does not unify with $\m{0} + y$.
\end{exa}

Second, Definition~\ref{cc reduction} describes how to ``reduce''
labeled terms in general.
This definition is designed to reduce \emph{ground} terms in the
way roughly described in Section~\ref{sec:analysis}. Labeled
terms are reduced without any strategy, but subterms keep track of
which rules have not yet been attempted,
thus implicitly avoiding duplication.

\begin{defi}
\label{cc reduction}
A \emph{labeled} reduction is a sequence
$t_1 \xrightharpoonup{} t_2 \xrightharpoonup{} \:\cdots\:
\xrightharpoonup{} t_m$
of labeled terms where $s \xrightharpoonup{} t$ if either
$\smash{s \xrightharpoonup{\bot\,} t}$,
or there exist a position $p \in \Pos(s)$, a rewrite rule $\rho\colon
f(\seq{\ell}) \to r \Leftarrow a_1 \approx b_1, \dots, a_k \approx b_k$,
a substitution $\sigma$, and an index
$0 \leqslant j \leqslant k$
with
\begin{enumerate}
\item
$s|_p = f_R(\seq{s})$ with $\rho \in R$ and
$s_i = \ell_i\sigma$ for all $1 \leqslant i \leqslant n$,
\smallskip
\item\label{cc reduction conditions}
$\m{label}(a_i)\sigma \xrightharpoonup{}^* b_i\sigma$ for all
$1 \leqslant i \leqslant j$,
\end{enumerate}
and either
\begin{enumerate}[resume]
\item\label{cc reduction success}
$j = k$ and $t = s[\m{label}(r)\sigma]_p$
\end{enumerate}
in which case we speak of a \emph{successful} step, or
\begin{enumerate}[resume]
\item\label{cc reduction failure}
$j < k$ and there exist a linear labeled normal form $u$ and a 
substitution $\tau$ such that
\begin{enumerate}
\item
$\m{label}(a_{j+1})\sigma \xrightharpoonup{}^* u\tau$ and
$u$ does not unify with $b_{j+1}$, and
\smallskip
\item
$t = s[f_{R \setminus \{ \rho \}}(\seq{s})]_p$,
\end{enumerate}
\end{enumerate}
which is a \emph{failed} step.\newpage

A \emph{complexity-conscious reduction} is a labeled reduction
complete with proofs of the sub-requirements, i.e., a sequence
$(t_1 \xrightharpoonup{} t_2),\dots,(t_{m-1} \xrightharpoonup{} t_m)$
of \emph{complexity-conscious steps}, where each complexity-conscious
step $s \xrightharpoonup{} t$ is a tuple combining
$s$, $t$, $p$, $\rho$, $j$ and the complexity-conscious reductions
$\m{label}(a_i)\sigma \xrightharpoonup{}^* b_i\sigma$ for
$1 \leqslant i \leqslant j$
and possibly $\m{label}(a_{j+1})\sigma \xrightharpoonup{}^* u\tau$.
We will denote complexity-conscious reductions as labeled reductions,
and simply assume the underlying condition evaluations given.
\end{defi}

It is easy to see that for all ground labeled terms $s$ which
are not labeled normal forms, either
$\smash{s \xrightharpoonup{\bot} t}$ for some term $t$
or there are $p$, $\rho$, $\sigma$ such that 
$s|_p$ ``matches'' $\rho$ in the sense that the first
requirement in Definition~\ref{cc reduction} is satisfied.
In the latter case, the conditions are evaluated left-to-right;
as all $b_j$ are linear constructor terms on fresh
variables, $\m{label}(a_j)\sigma \xrightharpoonup{}^{*} b_j\sigma$
simply indicates that $a_j\sigma$---with labels added to allow reducing
defined symbols in $a_j$---reduces to an instance of $b_j$.
A successful reduction occurs when we manage to reduce each 
$\m{label}(a_i)\sigma$ to $b_i\sigma$. A failed reduction occurs
when we start reducing $\m{label}(a_i)\sigma$ and obtain a term that 
will never reduce to an instance of $b_i$.

\begin{exa}
\label{ex:labeledfibreduction}
Continuing Example~\ref{ex:labeledfib}, we
have the following complexity-conscious reduction:
\begin{align*}
\m{fib}_{\{ \eqref{fib3}, \eqref{fib4} \}}(\m{s}(\m{0})
\mathop{\m{+}_{\{ \eqref{fibplus1}, \eqref{fibplus2} \}}} \m{0})
&~\xrightharpoonup{\bot\,}~
\m{fib}_{\{ \eqref{fib3}, \eqref{fib4} \}}(\m{s}(\m{0})
\mathop{\m{+}_{\{ \eqref{fibplus2} \}}} \m{0})
\tag{Example~\ref{ex:labeledevenfollowup}} \\[-.5ex]
&~\xrightharpoonup{\phantom{\bot\,}}~
\m{fib}_{\{ \eqref{fib3}, \eqref{fib4} \}}(\m{s}(\m{0}
\mathop{\m{+}_{\{ \eqref{fibplus1}, \eqref{fibplus2} \}}} \m{0}))
\tag{successful step} \\[-.5ex]
&~\xrightharpoonup{\phantom{\bot\,}}~
\langle \m{s}(\m{0}), \m{s}(\m{0}) \rangle
\tag{successful step}
\intertext{The first successful step uses the unconditional rule
\eqref{fibplus2}. The second successful step uses rule \eqref{fib4}
and the complexity-conscious reductions}
\m{fib}_{\{ \eqref{fib3}, \eqref{fib4} \}}(\m{0}
\mathop{\m{+}_{\{ \eqref{fibplus1}, \eqref{fibplus2} \}}} \m{0})
&~\xrightharpoonup{}~
\m{fib}_{\{ \eqref{fib3}, \eqref{fib4} \}}(\m{0})
~\xrightharpoonup{}~
\langle \m{0}, \m{s}(\m{0}) \rangle
\intertext{and}
\m{0} \mathop{\m{+}_{\{ \eqref{fibplus1}, \eqref{fibplus2} \}}}
\m{s}(\m{0})
&~\xrightharpoonup{}~
\m{s}(\m{0})
\end{align*}
for the evaluation of the conditions,
all by successful steps without conditions.
\end{exa}

\begin{exa}
\label{ex:labeledevenreduction}
In the CCTRS of Example~\ref{ex:evenodd} we have the following
complexity-conscious reduction:
\begin{align*}
\m{even}_{\{ \eqref{eo1}, \eqref{eo2}, \eqref{eo3} \}}(\m{s}(\m{0}))
&~\xrightharpoonup{\phantom{\bot\,}}~
\m{even}_{\{ \eqref{eo1}, \eqref{eo3} \}}(\m{s}(\m{0}))
\tag{failed step} \\[-.5ex]
&~\xrightharpoonup{\bot\,}~
\m{even}_{\{ \eqref{eo3} \}}(\m{s}(\m{0}))
\tag{matching failure} \\[-.5ex]
&~\xrightharpoonup{\phantom{\bot\,}}~
\m{false}
\tag{successful step}
\end{align*}
The first step fails with
$j = 0$ because
\[
\m{label}(\m{odd}(\m{0})) =
\m{odd}_{\{ \eqref{eo4}), \eqref{eo5}, \eqref{eo6} \}}(\m{0})
~\xrightharpoonup{\bot\,}~
\m{odd}_{\{ \eqref{eo4}), \eqref{eo5} \}}(\m{0})
~\xrightharpoonup{\phantom{\bot\,}}~
\m{false}
\]
and $\m{false}$ is a linear labeled normal form which does
not unify with $\m{true}$. The third step succeeds because
$\m{label}(\m{even}(\m{0})) =
\m{even}_{\{ \eqref{eo1}), \eqref{eo2}, \eqref{eo3} \}}(\m{0})
~\xrightharpoonup{}~ \m{true}$.
\end{exa}

There is one possibility remaining which is not covered by
Definition~\ref{cc reduction}; in a non-quasi-decreasing setting, a
condition may give rise to an infinite reduction, neither failing nor
succeeding. To handle this case we introduce a third definition.

\begin{defi}
\label{cc infinite}
We write $s \rhrhd t$ if there exist a position
$p \in \Pos(s)$, a rewrite rule $\rho\colon f(\seq{\ell}) \to r
\Leftarrow a_1 \approx b_1, \dots, a_k \approx b_k$, a substitution
$\sigma$ and $1 \leqslant j < k$ such that
\begin{enumerate}
\item
\label{cc infinite 1}
$s|_p = f_R(\seq{s})$ with $\rho \in R$ and
$s_i = \ell_i\sigma$ for all $1 \leqslant i \leqslant n$,
\smallskip
\item
\label{cc infinite 2}
$\m{label}(a_i)\sigma \xrightharpoonup{}^* b_i\sigma$ for all
$1 \leqslant i \leqslant j$, and
\smallskip
\item
$t = \m{label}(a_{j+1})\sigma$
\end{enumerate}
We write $s \xrightharpoonup{\infty\,}$ if there is an infinite
sequence
$s = s_0 \mathrel{(\xrightharpoonup{} \cup \rhrhd)}
s_1 \mathrel{(\xrightharpoonup{} \cup \rhrhd)} \cdots$
\end{defi}

Definition~\ref{cc infinite}
completes labeled reduction;
all ground labeled terms $s$
are either labeled normal forms, or can be reduced using
$\xrightharpoonup{}$ or $\rhrhd$.
This is verified in the following lemma.

\begin{lem}\label{lem:labelednfprop}
For every ground labeled term $s$ one of the following
alternatives holds:
\begin{enumerate}
\item
\label{alternative 1}
$s \xrightharpoonup{\infty\,}$
\item
\label{alternative 2}
$s \xrightharpoonup{} t$ for some term $t$, or
\item
\label{alternative 3}
$s$ is a labeled normal form.
\end{enumerate}
\end{lem}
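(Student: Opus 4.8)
The plan is to establish the trichotomy classically, by reducing it to a single implication: if a ground labeled term $s$ is neither a labeled normal form (alternative~\ref{alternative 3}) nor $\xrightharpoonup{}$-reducible (alternative~\ref{alternative 2}), then $s \xrightharpoonup{\infty\,}$ (alternative~\ref{alternative 1}). A useful preliminary remark is that labeled normal forms are irreducible for $\xrightharpoonup{}$, $\xrightharpoonup{\bot\,}$ and $\rhrhd$ alike, since every defined symbol in them carries the label $\varnothing$ and hence no rule $\rho \in R$ is ever available; this makes the three alternatives mutually exclusive and lets me concentrate on the implication above.

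The engine of the argument is a local analysis at the \emph{deepest} defined symbol with a non-empty label. If $s$ is not a labeled normal form, the set of positions $p$ with $s|_p = f_R(\seq{s})$, $f \in \FF_\DD$ and $R \neq \varnothing$, is non-empty; choosing $p$ of maximal length forces each argument $s_1, \dots, s_n$ to be a labeled normal form. The remark following Definition~\ref{cc bot} then gives a dichotomy at $p$: either $\xrightharpoonup{\bot\,}$ fires (a $\xrightharpoonup{}$ step), or some rule $\rho\colon f(\seq{\ell}) \to r \Leftarrow a_1 \approx b_1, \dots, a_k \approx b_k$ in $R$ matches $f(\seq{s})$. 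In the matching case, if $k = 0$ we obtain a successful step, and otherwise Definition~\ref{cc infinite} lets us take a $\rhrhd$ step into the condition currently being evaluated. Thus every ground labeled term that is not a labeled normal form has at least one $(\xrightharpoonup{} \cup \rhrhd)$ successor.

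To produce an \emph{infinite} sequence I would work with the set $X$ of ground labeled terms from which no $\xrightharpoonup{}$-reachable labeled normal form exists. The crucial claim is that every $w \in X$ has a $(\xrightharpoonup{} \cup \rhrhd)$ successor again in $X$. If $w$ has any $\xrightharpoonup{}$ successor $w'$, then $w' \in X$, since otherwise $w \xrightharpoonup{} w' \xrightharpoonup{}^* u$ would reach a normal form $u$, contradicting $w \in X$. If $w$ has no $\xrightharpoonup{}$ successor, then by the local analysis (with $\xrightharpoonup{\bot\,}$ and a $k = 0$ successful step excluded) a rule $\rho$ with $k \geqslant 1$ matches at the deepest position; I evaluate its conditions greedily, reducing each $\m{label}(a_i)\sigma$ to a normal form whenever possible and extending $\sigma$ by the matcher against $b_i$. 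This chain cannot verify all $k$ conditions (that would be a successful step) and cannot reach a non-matching normal form (that would be a failed step), so it meets a first condition $a_m$ whose term $\m{label}(a_m)\sigma_{m-1}$ reaches no normal form, i.e.\ lies in $X$; the $\rhrhd$ step into $a_m$ supplied by Definition~\ref{cc infinite} is the required successor. Iterating the claim by a standard dependent-choice argument yields an infinite $(\xrightharpoonup{} \cup \rhrhd)$ sequence out of any $w \in X$, hence $w \xrightharpoonup{\infty\,}$. Since a non-normal-form term with no $\xrightharpoonup{}$ successor trivially lies in $X$, the implication, and with it the lemma, follows.

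The main obstacle is precisely the construction of the infinite sequence, because a condition may be evaluated without ever terminating and the lemma assumes neither confluence nor quasi-decreasingness, so there is no well-founded order and no structural induction to lean on (the condition term $\m{label}(a_m)\sigma_{m-1}$ need not be smaller than $s$). Isolating the invariant $X$ and proving it closed under a suitable $(\xrightharpoonup{} \cup \rhrhd)$ successor is what makes the dependent-choice step go through cleanly. Two finer points demand care along the way: that reducing a condition all the way to a \emph{normal} instance of $b_i$ (rather than merely to the shape of $b_i$) is always a legitimate choice, so that the greedy chain is well defined and a verified prefix genuinely supports the $\rhrhd$ step of Definition~\ref{cc infinite} into the next condition; and that $X$ is really closed, which uses the linearity and constructor form of the $b_i$ to read off a decisive match-or-fail verdict from any reached normal form without invoking confluence.
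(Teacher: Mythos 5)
Your proof is correct, but it is organized differently from the paper's. The paper constructs a single maximal sequence $s = s_0, s_1, \dots$ with a built-in priority (always prefer a $\xrightharpoonup{}$ step over a $\rhrhd$ step, and among $\rhrhd$ steps prefer the one with the most verified conditions); if the sequence is infinite one gets alternative~\eqref{alternative 1}, and if it is finite the paper shows the endpoint is a labeled normal form and then runs a backward ``largest bad index'' argument to conclude that every step taken was in fact a $\xrightharpoonup{}$ step, so the first one witnesses alternative~\eqref{alternative 2}. You instead argue contrapositively via the invariant set $X$ of ground labeled terms from which no labeled normal form is $\xrightharpoonup{}$-reachable, and show $X$ is closed under taking a suitable $(\xrightharpoonup{} \cup \rhrhd)$ successor. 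The local content is the same in both proofs: the analysis at a deepest position with non-empty label (match or $\smash{\xrightharpoonup{\bot\,}}$), and the observation that a greedy left-to-right condition evaluation that neither completes nor produces a non-matching ground normal form must get stuck on a condition whose instantiated left-hand side cannot normalize. What your route buys is that it avoids the tie-breaking rule and the backward induction over the finite sequence; what it costs is the auxiliary set $X$ and an explicit appeal to dependent choice. Both proofs, incidentally, step into the \emph{first} condition via $\rhrhd$, which strictly speaking needs the index bound in Definition~\ref{cc infinite} to read $0 \leqslant j < k$; the paper's own proof takes the same liberty, so this is not a defect of your argument.

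One small correction: your opening remark that the three alternatives are \emph{mutually exclusive} is false --- the paper explicitly notes right after the lemma that $s \xrightharpoonup{} t$ and $s \xrightharpoonup{\infty\,}$ can hold simultaneously. Only alternative~\eqref{alternative 3} excludes the other two. This does not affect your proof, since all you actually use is the implication ``neither \eqref{alternative 3} nor \eqref{alternative 2} implies \eqref{alternative 1}'', which is equivalent to the stated disjunction without any exclusivity.
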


\begin{proof}
We non-deterministically construct a (finite or infinite) sequence
$s = s_0$, $s_1$, $s_2$, $\dots$ of ground terms as
follows. Assuming $s_i$ has been defined,
if there is some $u$ such that $s_i \xrightharpoonup{} u$ then we take any
such $u$ as $s_{i+1}$. Otherwise, if there is some $v$ with $s_i \rhrhd v$
then we take $s_{i+1} = v$. If there are multiple such $v$, we choose one
with the largest possible number $j$
(cf.~\eqref{cc infinite 2} in Definition~\ref{cc infinite})
of successful conditions with respect to a rule $\rho$
satisfying~\eqref{cc infinite 1} in Definition~\ref{cc infinite}.
If no $s_{i+1}$ has been defined, we terminate the construction and let
$N = i$.

If the constructed sequence is infinite then
$s \xrightharpoonup{\infty\,}$ and thus
statement~\eqref{alternative 1} holds.
So suppose the sequence is finite.

We claim that $s_N$ is a labeled normal form. For a proof by
contradiction, assume that $s_N$ is not a labeled normal form, so it
has a subterm whose root symbol has a non-empty label. Choosing a
minimal such subterm, we find a position $p \in \Pos(s_N)$ such that
$s_N|_p = f_R(\seq{u})$ where $\seq{u}$ are labeled normal forms and
$R \neq \varnothing$, say
$\rho\colon f(\seq{\ell}) \to r \Leftarrow c \in R$.
Now, if $f(\seq{u})$ does not instantiate $f(\seq{\ell})$, then
$\smash{s_N \xrightharpoonup{\bot\,} s_N[f_{R \setminus \{ \rho \}}
(\seq{u})]_p}$ because $\seq{u}$ are ground labeled normal forms,
contradicting the fact that $s_N$ is the last element of
the sequence. It follows that a substitution $\sigma$ exists such
that $u_i = \ell_i\sigma$ for $1 \leqslant i \leqslant n$. If $k = 0$
then $s_N \xrightharpoonup{} s_N[r\sigma]_p$, otherwise
$s_N \rhrhd \m{label}(a_1)\sigma$.
In both cases we obtain a contradiction to the choice of $N$.

Next we prove $s_i \xrightharpoonup{} s_{i+1}$ for
$0 \leqslant i < N$. Aiming for a contradiction, consider the
largest $i$ such that $s_i \xrightharpoonup{} s_{i+1}$ does not hold.
Then we have $s_i \rhrhd s_{i+1}$, so there
exist a position $p$, a rule
$\rho\colon f(\seq{\ell}) \to r \Leftarrow c \in R$,
an index $1 \leqslant j < k$, and a substitution
$\sigma$ such that $s_i|_p = f_R(\seq{u})$ with
$f(\seq{u}) = f(\seq{\ell})\sigma$,
$\m{label}(a_l)\sigma \xrightharpoonup{}^* b_l\sigma$ for all
$1 \leqslant l \leqslant j$, and $s_{i+1} = \m{label}(a_{j+1})\sigma$.
We have $s_{i+1} \xrightharpoonup{}^* s_N$ by the choice of $i$.
If $s_N$ is not an instance of $b_{j+1}$ then, since $s_N$ is a
ground labeled normal form, the condition has failed and we
have $s_i \xrightharpoonup{} s_i[f_{R \setminus \{ \rho \}}(\seq{u})]_p$,
contradicting the choice for $s_{i+1}$. So $s_N$ does instantiate
$b_{j+1}$. But then, if $j+1 < k$, we should have chosen
$s_{i+1} = \m{label}(a_{j+2})\sigma$ according to the construction of
$s_{i+1}$, and if $j+1 = k$ then
$s_{i+1} = s_i[r\sigma]_p$
should have been chosen instead.

Now, if $N = 0$ then $s = s_N$ is a labeled normal form and thus
statement~\eqref{alternative 3} holds. If $N > 0$
statement~\eqref{alternative 2} holds as $s = s_0 \xrightharpoonup{} s_1$.
\end{proof}

Note that the three alternatives in Lemma~\ref{lem:labelednfprop}
are not exclusive: It is possible to have 
$s \xrightharpoonup{} t$ as well as
$s \xrightharpoonup{\,\infty}$ for a term $s$.

\subsection{Labeled versus Unlabeled Reduction}
\label{subsec:complexity:simulation}

\mbox{}

\bigskip

\noindent
The relation $\xrightharpoonup{}$ provides an alternative approach to
evaluation which keeps track of failed rule application attempts,
whereas $\xrightharpoonup{\infty\,}$
is the counterpart of non-quasi-decreasingness.
As may be expected, there is a
strong connection between the relations $\to$ and $\xrightharpoonup{}$.
This connection is made formal in
Theorem~\ref{lem:simulate} and the subsequent lemmata.

\begin{defi}
Let $\RR$ be semi-finite and $t \in \TT(\GG,\VV)$.
We write $\|t\|$ for
the total number of rules occurring in all labels in $t$.
\end{defi}

Semi-finiteness ensures that $\|t\|$ is a well-defined
natural number.

\begin{thm}\label{lem:simulate}
Let $\RR$ be a CCTRS.
\begin{enumerate}
\item\label{lem:simulate:lr}
Let $s, t \in \TT(\FF,\VV)$.
\begin{enumerate}
\item\label{lem:simulate:lr:single}
If $s \to t$ then $\m{label}(s) \xrightharpoonup{} \m{label}(t)$.
\item\label{lem:simulate:lr:multi}
If $s \to^* t$ then $\m{label}(s) \xrightharpoonup{}^* \m{label}(t)$.
\end{enumerate}
\item\label{lem:simulate:rl}
Let $s, t \in \TT(\GG,\VV)$.
\begin{enumerate}
\item\label{lem:simulate:rl:single}
If $s \xrightharpoonup{} t$ then either
$\m{erase}(s) \to \m{erase}(t)$ or both $\m{erase}(s) = \m{erase}(t)$
and, if $\RR$ is semi-finite, $\|s\| > \|t\|$.
\item\label{lem:simulate:rl:multi}
If $s \xrightharpoonup{}^* t$ then $\m{erase}(s) \to^* \m{erase}(t)$.
\end{enumerate}
\end{enumerate}
\end{thm}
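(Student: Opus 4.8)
The plan is to reduce the two multi-step statements to their single-step companions by a routine induction on the length of the reduction, and then to prove the two single-step statements by separate well-founded inductions, since the forward direction (labelling) and the backward direction (erasing) call for genuinely different measures. For part~(1b), each step $s_i \to s_{i+1}$ of $s \to^* t$ is turned into $\m{label}(s_i) \xrightharpoonup{} \m{label}(s_{i+1})$ by part~(1a) and the images are chained. For part~(2b), each $s_i \xrightharpoonup{} s_{i+1}$ yields by part~(2a) either a genuine step $\m{erase}(s_i) \to \m{erase}(s_{i+1})$ or an equality $\m{erase}(s_i) = \m{erase}(s_{i+1})$; discarding the equalities and chaining the genuine steps gives $\m{erase}(s) \to^* \m{erase}(t)$.

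For the single-step forward claim~(1a) I would first record the commutation identities $\m{label}(w\sigma) = \m{label}(w)\sigma'$ with $\sigma' = \m{label}\circ\sigma$, and $b\sigma' = \m{label}(b\sigma)$ for constructor terms $b$; both are immediate since labelling is a symbol-wise homomorphism and constructor terms carry no labels. Then I would induct on the approximation level $m$ with $s \to_{\RR_m} t$, the base $m = 0$ being vacuous. For $s \to_{\RR_{m+1}} t$ by $\rho\colon \ell \to r \Leftarrow c$ at position $p$ with substitution $\sigma$, the left-hand side match gives $\m{label}(s)|_p = f_R(\ell_1\sigma', \dots, \ell_n\sigma')$ with $R = \RR{\restriction}f \ni \rho$, so the first requirement of Definition~\ref{cc reduction} holds. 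The conditions $a_i\sigma \to^*_{\RR_m} b_i\sigma$ become, by the multi-step version at the lower level $m$ supplied by the induction hypothesis, $\m{label}(a_i)\sigma' \xrightharpoonup{}^* b_i\sigma'$, which is the second requirement with $j = k$. Since labelling commutes with the surrounding context, $\m{label}(s)[\m{label}(r)\sigma']_p = \m{label}(t)$, so we obtain a \emph{successful} step $\m{label}(s) \xrightharpoonup{} \m{label}(t)$. Note that the forward simulation never produces $\xrightharpoonup{\bot}$- or failed steps.

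For the single-step backward claim~(2a) I would case-split on the three kinds of labelled step. A $\xrightharpoonup{\bot}$ step and a failed step each only shrink a single label, so $\m{erase}(s) = \m{erase}(t)$ while exactly one rule is deleted, giving $\|s\| = \|t\| + 1$ under semi-finiteness; this is the second alternative. For a successful step, the matching and the commutation identities give $\m{erase}(s)|_p = \ell\,\m{erase}(\sigma)$ and $\m{erase}(t) = \m{erase}(s)[r\,\m{erase}(\sigma)]_p$, so it only remains to verify the conditions $a_i\,\m{erase}(\sigma) \to^* b_i\,\m{erase}(\sigma)$. These follow from applying part~(2b) to the condition sub-reductions $\m{label}(a_i)\sigma \xrightharpoonup{}^* b_i\sigma$ recorded inside the complexity-conscious step, after which $\rho$ applies to $\m{erase}(s)$ at $p$ and we get the first alternative $\m{erase}(s) \to \m{erase}(t)$.

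The one real obstacle is the circularity in the backward direction: part~(2a) for a successful step invokes part~(2b) on its condition evaluations, while part~(2b) is obtained by chaining part~(2a) over the steps of those very evaluations. The way out is to observe that each condition sub-reduction is a proper constituent of the complexity-conscious step containing it, so if a step is weighed by $1$ plus the total number of recursively nested steps inside its condition evaluations, every sub-reduction is strictly lighter than the step around it. A strong induction on this weight then lets part~(2a) for a step of weight $W$ rely on part~(2b) for reductions built from steps of weight ${<}\,W$, with no vicious circle. The analogous loop in the forward direction is already broken by the approximation level, which strictly decreases when passing from a rule application to the reductions witnessing its conditions. The remaining work is purely syntactic bookkeeping that $\m{label}$ and $\m{erase}$ commute with substitution, where the constructor-term restrictions on the $b_i$ and on the arguments $\ell_i$ of a CCTRS are exactly what guarantee $b_i\sigma' = \m{label}(b_i\sigma)$ and that the match at $p$ is preserved in both directions.
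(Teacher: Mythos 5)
Your proposal is correct and follows essentially the same route as the paper: both directions are handled by a mutual induction between the single-step and multi-step claims, with the circularity broken by a well-founded measure that strictly decreases when descending into the condition evaluations (the paper uses the total count of nested rewrite steps for both directions, whereas you use the approximation level for the forward direction and the nested-step weight for the backward one, which is only a cosmetic difference). The case analysis for the backward direction ($\xrightharpoonup{\bot\,}$/failed steps giving $\m{erase}(s)=\m{erase}(t)$ with $\|s\|=\|t\|+1$, successful steps giving a genuine $\to$ step) and the commutation identities for $\m{label}$ and $\m{erase}$ with substitutions match the paper's argument.
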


\begin{proof}
We use induction on the total number of
\emph{rewrite steps} of $\to$ and $\xrightharpoonup{}$, respectively.
This is the number of steps used both directly in the reduction, and
those needed to verify the conditions $a_i\sigma \to^* b_i\sigma$ or
$\m{label}(a_i)\sigma \xrightharpoonup{}^* b_i\sigma$.
\begin{enumerate}
\item
We derive cases~\eqref{lem:simulate:lr:single}
and~\eqref{lem:simulate:lr:multi} by simultaneous
induction on the total number of
rewrite steps needed to derive $s \to t$ and $s \to^* t$.
\begin{enumerate}
\item
There exist a position $p \in \Pos(s)$,
a rule $\rho\colon \ell \to r \Leftarrow c$,
and a substitution $\sigma$ such that $s|_p =
\ell\sigma$, $t = s[r\sigma]_p$, and
$\RR \vdash c\sigma$.
Let $\sigma'$ be the (labeled) substitution $\m{label} \circ \sigma$.
Fix $1 \leqslant i \leqslant k$.
We have $\m{label}(a_i\sigma) = \m{label}(a_i)\sigma'$ and
$\m{label}(b_i\sigma) = b_i\sigma'$ (as $b_i$ is a constructor term).
Because $a_i\sigma \to^* b_i\sigma$ is used in the derivation of
$s \to t$ we can apply the induction hypothesis for part \(b,
resulting in
$\m{label}(a_i\sigma) \xrightharpoonup{}^* \m{label}(b_i\sigma)$.
Furthermore, writing $\ell = f(\seq{\ell})$, we obtain
$\m{label}(\ell) = f_{\RR{\restriction}f}(\seq{\ell})$.
Hence $\m{label}(s) = \m{label}(s)[\m{label}(\ell)\sigma']_p
\xrightharpoonup{} \m{label}(s)[\m{label}(r)\sigma']_p = \m{label}(t)$
because conditions (1)--(3) in Definition~\ref{cc reduction} are
satisfied.
\item
If $s = t$ then the result is obvious.
If $s \to u \to^* t$ then $\m{label}(s) \xrightharpoonup{} \m{label}(u)$
follows by case~\eqref{lem:simulate:lr:single}, and the
induction hypothesis yields
$\m{label}(u) \xrightharpoonup{} \m{label}(t)$.
\end{enumerate}
\smallskip
\item
We prove both statements by simultaneous induction on the total
number of steps required to derive $s \xrightharpoonup{} t$ and
$s \xrightharpoonup{}^* t$.
For part \(a\ we distinguish two cases.
\begin{itemize}
\item
Suppose $s \xrightharpoonup{\bot\,} t$ or $s \xrightharpoonup{} t$
by a failed step. In either case we have
$\m{erase}(s) = \m{erase}(t)$. Moreover, if all labels have
finite size, also $\|s\| = \|t\| + 1$.
\item
Suppose $s \xrightharpoonup{} t$ by a successful step.
So there exist a
position $p \in \Pos(s)$, a rule $\rho\colon \ell \to r \Leftarrow c$
in $\RR$, a substitution
$\sigma$, and terms $\ell', a_1', \dots, a_k'$ such that
$s|_p = \ell'\sigma$ with $\m{erase}(\ell') = \ell$,
$a_i'\sigma \xrightharpoonup{}^* b_i\sigma$ with $\m{erase}(a_i') = a_i$
for all $1 \leqslant i \leqslant k$, and $t = s[\m{label}(r)\sigma]_p$.
Let $\sigma'$ be the (unlabeled) substitution $\m{erase} \circ \sigma$.
We have $\m{erase}(s) = \m{erase}(s)[\ell\sigma']_p$
and $\m{erase}(u) = \m{erase}(s)[r\sigma']_p$.
Since the sequence $a_i'\sigma \xrightharpoonup{}^* b_i\sigma$ is
used as a strict subpart of the derivation of
$s \xrightharpoonup{} t$, we obtain
$a_i\sigma' = \m{erase}(a_i'\sigma) \to^* \m{erase}(b_i\sigma) =
b_i\sigma'$ from the induction hypothesis, for all
$1 \leqslant i \leqslant k$. Hence $\RR \vdash c\sigma'$, so indeed
$\m{erase}(s) \to \m{erase}(t)$.
\end{itemize}
Again, part \(b\ easily follows from part \(a.
\qedhere
\end{enumerate}
\end{proof}

\begin{exa}
In Examples~\ref{ex:poschoice} and~\ref{ex:poschoiceextended} we
encountered the reduction
\begin{align*}
t_{n,m} = \m{f}^n(\m{g}(\m{f}^m(\m{a})))
&~\to^*~
\m{g}(\m{a}) = t_{0,0}
\intertext{By Theorem~\ref{lem:simulate}, we immediately obtain a
labeled reduction}
\m{label}(t_{n,m}) = \m{f}^n_{\{ \eqref{fg1} \}}
(\m{g}^{\phantom{n}}_{\{ \eqref{fg2} \}}(\m{f}_{\{ \eqref{fg1} \}}^m
(\m{a}))) 
&~\xrightharpoonup{}^*~
\m{g}^{\phantom{n}}_{\{ \eqref{fg2} \}}(\m{a}) = \m{label}(t_{0,0})
\end{align*}
Note that we can reduce this term further to
$\m{g}_{\varnothing}(\m{a})$ and obtain a labeled normal form.
\end{exa}


\begin{lem}
\label{lem:simulate-substep}
Let $\RR$ be a CCTRS.
\begin{enumerate}
\item\label{lem:simulate-substep:lr}
If $s, t \in \TT(\FF,\VV)$ and $s \quasistep t$ then
$\m{label}(s) \rhrhd \m{label}(t)$.
\item\label{lem:simulate-substep:rl}
If $s, t \in \TT(\GG,\VV)$ and $s \rhrhd t$
then $\m{erase}(s) \quasistep \m{erase}(t)$.
\end{enumerate}
\end{lem}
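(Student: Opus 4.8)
Both directions are proved by directly unfolding the definitions of $\quasistep$ (Section~\ref{sec:preliminaries}) and $\rhrhd$ (Definition~\ref{cc infinite}) and transporting the embedded condition evaluations across the $\m{label}$/$\m{erase}$ translation using Theorem~\ref{lem:simulate}. The two relations are built to match under a single index shift: a $\quasistep$ step with index $i$, which sets $t = a_i\sigma$ after verifying $a_1 \approx b_1, \dots, a_{i-1} \approx b_{i-1}$, corresponds to a $\rhrhd$ step with $j = i-1$, which sets $t = \m{label}(a_{j+1})\sigma$ after verifying $a_1 \approx b_1, \dots, a_j \approx b_j$. Since $1 \leqslant i \leqslant k$ gives $0 \leqslant j < k$, the boundary case $i = 1$ (equivalently $j = 0$) corresponds to diving into the first condition with no prior condition to check, and requirement~(2) of $\rhrhd$ becomes vacuous.

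\textbf{Preliminary commutation facts.} Before either direction I would record, writing $\sigma^{\m{lab}} = \m{label} \circ \sigma$ and $\sigma^{\m{er}} = \m{erase} \circ \sigma$, the routine identities $\m{label}(u\sigma) = \m{label}(u)\,\sigma^{\m{lab}}$ for $u \in \TT(\FF,\VV)$ and $\m{erase}(u\sigma) = \m{erase}(u)\,\sigma^{\m{er}}$ for $u \in \TT(\GG,\VV)$, each proved by a straightforward induction on $u$; both maps also commute with taking the subterm at a position $p$. In particular, since every right-hand side $b_i$ of a condition is a constructor term, $\m{label}(b_i\sigma) = b_i\sigma^{\m{lab}}$ and $\m{erase}(b_i\sigma) = b_i\sigma^{\m{er}}$, and since the proper subterms $\seq{\ell}$ of the left-hand side $\ell = f(\seq{\ell})$ are constructor terms, $\m{label}(\ell\sigma) = f_{\RR{\restriction}f}(\ell_1\sigma^{\m{lab}},\dots,\ell_n\sigma^{\m{lab}})$ and erasing behaves dually. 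These identities are exactly what make the matching requirement and the pending term of each step translate verbatim.

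\textbf{The two directions.} For part~\eqref{lem:simulate-substep:lr}, given $s \quasistep t$ witnessed by $p$, $\rho\colon f(\seq{\ell}) \to r \Leftarrow c$, $\sigma$, and $i$, I take the $\rhrhd$ witnesses to be the same $p$ and $\rho$, the substitution $\sigma^{\m{lab}}$, the label $R = \RR{\restriction}f$ (which contains $\rho$), and the index $j = i-1$. Requirement~(1) holds because $\m{label}(s)|_p = \m{label}(\ell\sigma) = f_R(\ell_1\sigma^{\m{lab}},\dots,\ell_n\sigma^{\m{lab}})$; requirement~(2) follows by applying Theorem~\ref{lem:simulate}\eqref{lem:simulate:lr:multi} to each $a_l\sigma \to^* b_l\sigma$ for $1 \leqslant l \leqslant j$ and rewriting the endpoints with the commutation identities, yielding $\m{label}(a_l)\sigma^{\m{lab}} \xrightharpoonup{}^* b_l\sigma^{\m{lab}}$; and requirement~(3) is just $\m{label}(t) = \m{label}(a_i\sigma) = \m{label}(a_{j+1})\sigma^{\m{lab}}$. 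Part~\eqref{lem:simulate-substep:rl} is the mirror image: from a $\rhrhd$ step with witnesses $p$, $\rho$, $\sigma$, $R \ni \rho$, and $j$, I build a $\quasistep$ step with the same $p$ and $\rho$, substitution $\sigma^{\m{er}}$, and index $i = j+1$. Here $\m{erase}(s)|_p = \ell\sigma^{\m{er}}$ comes from $s_l = \ell_l\sigma$ and the erase-commutation identity; the verified prefix translates via Theorem~\ref{lem:simulate}\eqref{lem:simulate:rl:multi} applied to each $\m{label}(a_l)\sigma \xrightharpoonup{}^* b_l\sigma$ (giving $a_l\sigma^{\m{er}} \to^* b_l\sigma^{\m{er}}$ after erasing); and the pending term is $\m{erase}(t) = \m{erase}(\m{label}(a_{j+1})\sigma) = a_{j+1}\sigma^{\m{er}} = a_i\sigma^{\m{er}}$.

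\textbf{Expected difficulty.} I expect no genuine obstacle: the content of the lemma is inherited wholesale from Theorem~\ref{lem:simulate}, and the only new work is translating the ``verified prefix plus one pending condition'' structure that distinguishes $\quasistep$ and $\rhrhd$ from ordinary steps. The two points demanding care are the off-by-one bookkeeping between $i$ and $j+1$ (including the vacuous boundary case $i=1$) and the consistent bracketing of the $\m{label}$/$\m{erase}$ commutation identities so that the endpoints of each translated condition evaluation land on the correct instances of $a_l$ and $b_l$; both are mechanical once the commutation facts are in place.
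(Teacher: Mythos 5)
Your proposal is correct and follows essentially the same route as the paper's proof: both directions are established by reusing the witnesses $p$ and $\rho$, passing to $\m{label}\circ\sigma$ (resp.\ $\m{erase}\circ\sigma$), invoking Theorem~\ref{lem:simulate} on the verified condition prefix, and using the constructor-term commutation identities for the $\ell_i$ and $b_l$. Your explicit treatment of the index shift $j = i-1$ (including the $i=1$ boundary case) is only spelled out more carefully than in the paper, which applies $\rhrhd$ directly with pending term $\m{label}(a_i)\sigma'$.
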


\begin{proof}\hfill
\begin{enumerate}
\item
There exist a position $p \in \Pos(s)$,
a rule $\rho\colon \ell \to r \Leftarrow c$,
a substitution $\sigma$, and an
index $1 \leqslant i \leqslant k$ such that
$s|_p = \ell\sigma$, $t = a_i\sigma$ and
$a_j\sigma \to^* b_j\sigma$ for all $1 \leqslant j < i$.
Write $\sigma' = \m{label} \circ \sigma$.
By Theorem~\ref{lem:simulate}(\ref{lem:simulate:lr}),
$\m{label}(a_j)\sigma' = \m{label}(a_j\sigma) \xrightharpoonup{}^*
\m{label}(b_j\sigma) = b_j\sigma'$ for all $1 \leqslant j < i$.
Let $\ell = f(\seq{\ell})$ and $R = \RR{\restriction}f$. Clearly,
$\rho \in R$ and therefore $\m{label}(s)|_p = f_R(\m{label}(s_1),
\dots,\m{label}(s_n)) = f_R(\ell_1\sigma',\dots,\ell_n\sigma')
\rhrhd \m{label}(a_i)\sigma' = \m{label}(t)$ as
required.
\item
There exists position
$p \in \Pos(s)$, a rule $\rho\colon f(\seq{\ell}) \to r \Leftarrow c$,
a substitution
$\sigma$, and an index $1 \leqslant i \leqslant k$ such that
$s|_p = f_R(\seq{s})$ with $\rho \in R$ and $s_j = \ell_j\sigma$ for all
$1 \leqslant j \leqslant n$, $\m{label}(a_j)\sigma \to^* b_j\sigma$ for all
$1 \leqslant j < i$, and $t = \m{label}(a_i)\sigma$.
Write $\sigma' = \m{erase} \circ \sigma$.
By Theorem~\ref{lem:simulate}(\ref{lem:simulate:rl}),
$\m{erase}(\m{label}(a_j)\sigma) = a_j\sigma' \to^* b_j\sigma' =
\m{erase}(b_j\sigma)$ for $1 \leqslant j < i$.
We obtain $\m{erase}(s) = \m{erase}(s)[f(\seq{\ell})\sigma']_p
\quasistep a_i\sigma' = \m{erase}(t)$.
\qedhere
\end{enumerate}
\end{proof}

\begin{lem}
\label{lem:quasidecreasing}
A term $s \in \TT(\FF,\VV)$ in a semi-finite CCTRS $\RR$ is
non-quasi-decreasing if and only
if $\m{label}(s) \xrightharpoonup{\infty\,}$.
\end{lem}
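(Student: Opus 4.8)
The plan is to prove both implications by transporting infinite sequences back and forth between the unlabeled and labeled settings, relying entirely on the simulation results already established: Theorem~\ref{lem:simulate} relates $\to$ and $\xrightharpoonup{}$, and Lemma~\ref{lem:simulate-substep} relates $\quasistep$ and $\rhrhd$. Recall that $s$ is non-quasi-decreasing exactly when there is an infinite sequence $s = u_0 \mathrel{(\to \cup \quasistep)} u_1 \mathrel{(\to \cup \quasistep)} \cdots$, while $\m{label}(s) \xrightharpoonup{\infty\,}$ means there is an infinite sequence $\m{label}(s) = s_0 \mathrel{(\xrightharpoonup{} \cup \rhrhd)} s_1 \mathrel{(\xrightharpoonup{} \cup \rhrhd)} \cdots$. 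The two directions will thus amount to translating one kind of infinite sequence into the other.

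For the ``only if'' direction I would take an infinite $(\to \cup \quasistep)$-sequence starting from $s$ and translate it step by step. A $\to$-step $u_i \to u_{i+1}$ becomes $\m{label}(u_i) \xrightharpoonup{} \m{label}(u_{i+1})$ by Theorem~\ref{lem:simulate}(\ref{lem:simulate:lr:single}), and a $\quasistep$-step $u_i \quasistep u_{i+1}$ becomes $\m{label}(u_i) \rhrhd \m{label}(u_{i+1})$ by Lemma~\ref{lem:simulate-substep}(\ref{lem:simulate-substep:lr}). Since the target of each translated step is literally the source of the next, concatenation yields an infinite $(\xrightharpoonup{} \cup \rhrhd)$-sequence from $\m{label}(s)$, so $\m{label}(s) \xrightharpoonup{\infty\,}$. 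Note this direction needs no finiteness assumption.

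For the ``if'' direction I would start from an infinite $(\xrightharpoonup{} \cup \rhrhd)$-sequence $\m{label}(s) = s_0 \mathrel{(\xrightharpoonup{} \cup \rhrhd)} s_1 \cdots$ and apply $\m{erase}$ to every term. By Lemma~\ref{lem:simulate-substep}(\ref{lem:simulate-substep:rl}) every $\rhrhd$-step yields a genuine $\quasistep$-step on the erased terms. By Theorem~\ref{lem:simulate}(\ref{lem:simulate:rl:single}) every $\xrightharpoonup{}$-step yields either a genuine $\to$-step on the erased terms, or a \emph{stuttering} step in which $\m{erase}(s_i) = \m{erase}(s_{i+1})$ while $\|s_i\| > \|s_{i+1}\|$; the stuttering case is exactly the failed and $\xrightharpoonup{\bot\,}$-steps. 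Here semi-finiteness is essential, as it makes $\|\cdot\|$ a well-defined natural number and legitimizes the second alternative of that theorem.

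The crux, and the main obstacle, is the bookkeeping needed to extract a genuine infinite $(\to \cup \quasistep)$-sequence from the erased sequence, which may contain stuttering steps. The key is that a stuttering step strictly decreases $\|\cdot\|$ without changing $\m{erase}(\cdot)$, so there is no infinite run of stuttering steps (a strictly decreasing sequence of naturals is finite); consequently no infinite suffix can be purely stuttering, and the sequence must contain infinitely many genuine $\to$- or $\quasistep$-steps with only finite stuttering blocks in between. Since stuttering blocks leave the erased term fixed, collapsing them connects the endpoints and produces an infinite sequence $s = \m{erase}(s_0) \mathrel{(\to \cup \quasistep)} w_1 \mathrel{(\to \cup \quasistep)} \cdots$, witnessing that $s$ is non-quasi-decreasing.
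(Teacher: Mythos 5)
Your proposal is correct and follows essentially the same route as the paper: the forward direction translates each step via Theorem~\ref{lem:simulate}(\ref{lem:simulate:lr}) and Lemma~\ref{lem:simulate-substep}(\ref{lem:simulate-substep:lr}), and the converse erases the infinite labeled sequence and uses the strict decrease of $\|\cdot\|$ on stuttering steps (guaranteed by semi-finiteness) to rule out an infinite stuttering suffix. Your write-up merely spells out the collapsing of stuttering blocks that the paper leaves implicit.
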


\begin{proof}
If $s$ is not quasi-decreasing then there exists an infinite sequence
$s = u_0 \mathrel{(\to \cup \quasistep)} u_1
\mathrel{(\to \cup \quasistep)} \cdots\:$. We obtain
$\m{label}(u_0)
\mathrel{(\xrightharpoonup{} \cup \rhrhd)}
\m{label}(u_1)
\mathrel{(\xrightharpoonup{} \cup \rhrhd)}
\cdots$ from Theorem~\ref{lem:simulate}(\ref{lem:simulate:lr})
and Lemma~\ref{lem:simulate-substep}(\ref{lem:simulate-substep:lr}).
Thus $\m{label}(s) = \m{label}(u_0) \xrightharpoonup{\infty\,}$.
Conversely, if $\m{label}(s) \xrightharpoonup{\infty\,}$
then there is an infinite sequence $\m{label}(s) = u_0
\mathrel{(\xrightharpoonup{} \cup \rhrhd)}
u_1 \mathrel{(\xrightharpoonup{} \cup \rhrhd)}
\cdots\:$.
From Theorem~\ref{lem:simulate}(\ref{lem:simulate:rl})
and Lemma~\ref{lem:simulate-substep}(\ref{lem:simulate-substep:rl})
we obtain $\m{erase}(u_i) \to^= \m{erase}(u_{i+1})$ or
$\m{erase}(u_i) \quasistep \m{erase}(u_{i+1})$
for every $i \geqslant 0$.
Since $\m{erase}(u_i) = \m{erase}(u_{i+1})$ implies
$\|u_i\| > \|u_{i+1}\|$, this gives an infinite sequence of
$\to$ and $\quasistep$ steps starting from $\m{erase}(u_0) = s$.
\end{proof}

\begin{exa}
Consider the CCTRS consisting of the single rule
$\rho\colon \m{a} \to \m{b} \,\Leftarrow\, \m{a} \approx \m{b}$.
We have
$\m{label}(\m{a}) = \m{a}_{\{ \rho \}} \xrightharpoonup{\rhd\,} 
\m{label}(\m{a})$. Hence
$\m{label}(\m{a}) \xrightharpoonup{\infty\,}$
and thus $\m{a}$ is non-quasi-decreasing.
\end{exa}

We have now transposed conditional rewriting to an essentially
equivalent relation on labeled terms, which enables us to keep track
of failed computations.\newpage

\subsection{Derivation Height and Complexity}
\label{subsec:complexity:costs}

\mbox{}

\bigskip

Now we show how labeled---or rather,
complexity-conscious---reduction gives rise to
conditional complexity. With failures now explicitly included in the
reduction relation, the only hurdle to defining derivation height
is the question of how exactly to handle the evaluation of
conditions. To this end, we assign an evaluation cost to individual
steps.

\begin{defi}
The \emph{cost} $\m{cost}(s \xrightharpoonup{}^* t)$ of a
complexity-conscious reduction $s \xrightharpoonup{}^* t$ is
the sum of the costs of its steps. The cost of a step
$s \xrightharpoonup{} t$ is $0$ if $s \xrightharpoonup{\bot\,} t$,
\[
1 + \sum_{i=1}^k 
\m{cost}(\m{label}(a_i)\sigma \xrightharpoonup{}^* b_i\sigma)
\]
in case of a successful step $s \xrightharpoonup{} t$, and
\[
\sum_{i=1}^j 
\m{cost}(\m{label}(a_i)\sigma \xrightharpoonup{}^* b_i\sigma)
+
\m{cost}(\m{label}(a_{j+1})\sigma \xrightharpoonup{}^* u\tau)
\]
in case of a failed step $s \xrightharpoonup{} t$.
\end{defi}

Intuitively, the cost of a reduction measures the number of
successful rewrite steps, both direct and in condition evaluations,
but does not count the mere removal of a rule from a label.
This is why the cost of a failed step is the cost to evaluate its
conditions and conclude failure, while for successful steps we add
one for the step itself.

\begin{exa}
The cost of the reduction in Example~\ref{ex:labeledfibreduction} is
$0 + 1 + 4 = 5$, where the $4 = 1 + 3$ includes the three steps
in the conditions.
The cost of the reduction in Example~\ref{ex:labeledevenreduction} is
$1 + 0 + 2 = 3$.
Note that in both cases, the cost is simply obtained by counting the
number of successful rewrite steps, including those occurring in a
condition evaluation.
\end{exa}

\begin{defi}
\label{def:dh}
The \emph{derivation height}
$\m{dh}(s)$ of a labeled term $s$
in a semi-finite CCTRS is defined as
\[
\max\,(\{\m{cost}(s \xrightharpoonup{}^* t) \mid t \in \TT(\GG,\VV)\}
\cup \{ \infty \mid s \xrightharpoonup{\infty\,} \})
\]
where $\infty > n$ for all $n \in \N$.
\end{defi}

That is, a labeled term $s$ has infinite derivation height if
$s \xrightharpoonup{\,\infty}$, and the maximum cost of any reduction
starting in $s$ otherwise.  Since $\RR$ is semi-finite,
the set of possible values $\m{cost}(s \xrightharpoonup{}^* t)$
can only be unbounded if $s \xrightharpoonup{\,\infty}$, in which
case $\m{dh}(s) = \max(\langle$some infinite set$\rangle \cup
\{ \infty \}) = \infty$. In other cases, the set of costs is necessarily
finite, and hence the derivation height is well-defined, and in
$\N$. Note that for $t \in \TT(\FF)$, the derivation
height of $\m{label}(t)$ is infinite if and only if $t$ is
quasi-decreasing, by Lemma~\ref{lem:quasidecreasing}.

\medskip

We have limited interest to semi-finite CCTRSs primarily to follow the
standard in complexity for unconditional term rewriting, where TRSs are
assumed to be finite.
It is certainly possible to extend the definition towards
non-semi-finite CCTRSs, simply by taking the \emph{infimum} instead
of the \emph{maximum} of the set in Definition~\ref{def:dh},
in which case we might obtain an infinite
derivation height even for the labeled version of a quasi-decreasing
term. This would happen both if there are reductions of arbitrarily
high cost starting in $\m{label}(s)$, or if we obtain an
infinite reduction of rule-removal steps,
e.g.\ $\smash{\m{label}(s) \xrightarrow{\bot\,}
s_1 \xrightarrow{\bot\,} s_2 \xrightarrow{\bot\,}} \cdots$.
One might argue that this is justified, as finding an
appropriate rule to apply may take arbitrarily long. However, in the
unconditional setting, it seems unnatural to assign an infinite
derivation height to, for instance, a normal form. Given
that non-semi-finite TRSs are of very little practical interest, we
prefer to leave this discussion to another work.

\begin{defi}
The \emph{conditional derivational complexity} of a semi-finite
CCTRS $\RR$ is defined as $\m{cdc}_\RR(n) =
\max\,\{\m{dh}(\m{label}(t)) \mid |t| \leqslant n \}$. If we restrict
$t$ to basic terms we arrive at the \emph{conditional runtime
complexity} $\m{crc}_\RR(n)$.
\end{defi}

Arguably, the case where the CCTRS $\RR$ is not
quasi-decreasing is not very interesting for complexity (unless
perhaps all terms of interest, e.g.\ all basic terms, are
quasi-decreasing). The main reason why we consider systems without
this restriction is to show that the transformation methods we use
preserve the fundamental properties of a CCTRS. Thus, we can for
instance guarantee that the TRS obtained in the next section is
terminating if and only if the original CCTRS is quasi-decreasing.
This allows us to obtain completeness results, and to use complexity
methods to prove quasi-decreasingness as well.

\medskip

Continuing the discussion in Section~\ref{sec:analysis}, we claim
that for a ground term $s \in \TT(\FF)$, the derivation height
$\m{dh}(\m{label}(s))$ gives a realistic and (in the absence of a
reduction strategy) narrow bound on the time needed to normalize
$s$. That is, we can always find a normal form of $s$ in
$\OO(\m{dh}(\m{label}(s)))$ steps (by rewriting $\m{label}(s)$ using
$\xrightharpoonup{}$). A worst-case derivation following
the intuition laid out at the start of this section requires
$\Omega(\m{dh}(\m{label}(s))$ steps.

\section{Complexity Transformation}
\label{sec:transformation}

The notion of complexity introduced in the preceding section
has the downside that we cannot easily reuse existing
complexity results and tools. Therefore, we will consider a
transformation to unconditional rewriting where, rather than tracking
rules in the labels of the defined function symbols, we will keep
track of them in separate arguments, but restrict reduction by
adopting a suitable \emph{context-sensitive replacement map}.
This transformation is based directly on the CCTRS $(\FF,\RR)$,
but in Section \ref{subsec:transformation:proofs} we will see how it
relates to the labeled system and the labeled rewrite relation
$\xrightharpoonup{}$. In particular, we will see that the
unconditional rewrite relation defined in
Section~\ref{subsec:transformation:definition} both preserves and
reflects complexity. To this end, however, we will have to limit
interest to \emph{strong} CCTRSs, as defined in
Section~\ref{sec:preliminaries}, since
we rely on (e.g.) left-linearity to be able to test when rules do
\emph{not} apply.

Our transformation builds on the ideas of the structure-preserving
transformations in~\cite{ABH03,SR06}, but differs in particular
by its use of context-sensitivity, by forcing that the conditions for
different rules are evaluated separately, and by using additional
symbols $f_i^j$ to mark when the evaluation of a condition is in
progress---a change which significantly simplifies for instance the
method of polynomial interpretations we shall employ in
Section~\ref{sec:polynomial}. Structure-preserving transformations
are discussed in Section~\ref{subsec:related work}.

Context-sensitive rewriting
restricts the positions in a term where rewriting is allowed.
A (C)TRS is combined with a
\emph{replacement map} $\mu$, which assigns to every $n$-ary
symbol $f \in \FF$ a subset $\mu(f) \subseteq \{ 1, \dots, n \}$.
A position $p$ is \emph{active} in a term $t$ if either
$p = \epsilon$, or $p = i\,q$, $t = f(\seq{t})$, $i \in \mu(f)$,
and $q$ is active in $t_i$. The set of active
positions in a term $t$ is denoted by $\Pos_\mu(t)$,
and $t$ may only be reduced at active positions.

\subsection{The Unconditional TRS $\Xi(\RR)$.}
\label{subsec:transformation:definition}

\begin{defi}
\label{def:contextsignature}
Let $\RR$ be a strong CCTRS over a signature $\FF$. For $f \in \FF$,
let $m_f$ be the number of rules in $\RR{\restriction}f$ (so $m_f = 0$
for constructor symbols $f$) and fix an order
$\RR{\restriction}f = \{ \rho^f_1, \dots, \rho^f_{m_f} \}$.
The context-sensitive signature $(\HH,\mu)$ is defined as follows:
\begin{itemize}
\item
$\HH$ contains two constants $\bot$ and $\top$,
\item
for every symbol $f \in \FF$ of arity $n$, $\HH$ contains
a symbol $f$ of arity $n + m_f$ with $\mu(f) = \{ 1, \dots, n \}$,
\item
for every defined symbol $f \in \FF_\DD$ of arity $n$, rule
$\smash{\rho^f_i}\colon \ell \to r \Leftarrow a_1 \approx b_1,
\dots,a_k \approx b_k$
in
$\RR{\restriction}f$, and $1 \leqslant j \leqslant k$, $\HH$
contains a symbol $f_i^j$ of arity $n + m_f + j - 1$ with
$\mu(f_i^j) = \{ n + i + j - 1 \}$.
\end{itemize}
\end{defi}

\noindent Terms in $\TT(\HH,\VV)$ that are involved in reducing
$f(\seq{s}) \in \TT(\FF,\VV)$ will have one of two forms:
$f(\seq{s},\seq[m_f]{t})$ with each $t_i \in \{ \top, \bot \}$,
indicating that rule $\rho_i^f$ has been attempted (and
failed) if and only if $t_i = \bot$, and
\[
f_i^j(\seq{s},\seq[i-1]{t},b_1\sigma,\dots,b_{j-1}\sigma,
u,t_{i+1},\dots,t_{m_f})
\]
indicating that rule $\rho_i^f$ is currently being evaluated and the
first $j-1$ conditions of $\rho_i^f$ have succeeded; $u$ records the
current progress on the condition $a_j \approx b_j$.

In the following we drop the superscript $f$ from $\rho_i^f$ if no
confusion arises.

\begin{defi}\label{def:xi}
The maps
$\xi_\star\colon \TT(\FF,\VV) \to \TT(\HH,\VV)$ with
$\star \in \{ \bot, \top \}$
are inductively defined as follows:
\[
\xi_\star(t) = \begin{cases}
t & \text{if $t$ is a variable,} \\
f(\xi_\star(t_1),\dots,\xi_\star(t_n)) & \text{if $t = f(\seq{t})$ and
$f$ is a constructor symbol,} \\
f(\xi_\star(t_1),\dots,\xi_\star(t_n),\star,\dots,\star) &
\text{if $t = f(\seq{t})$ and $f$ is a defined symbol.}
\end{cases}
\]
Linear terms in the set
$\{ \xi_\bot(t) \mid t \in \TT(\FF,\VV) \}$ are called
\emph{$\bot$-patterns}.
\end{defi}

In the transformed system that we will define, a ground term is in
normal form if and only if it is a $\bot$-pattern. This allows for
syntactic ``normal form'' tests. Most importantly, it allows for
purely syntactic anti-matching tests: If $s$ does not reduce to
an instance of some linear constructor term $t$, then
$s \to^* u\sigma$ for some substitution $\sigma$ and $\bot$-pattern $u$
that does not unify with $t$. What is more, we only need to consider a
finite number of $\bot$-patterns $u$.

\begin{defi}
\label{def:AP}
Let $t$ be a linear constructor term. The set of \emph{anti-patterns}
$\m{AP}(t)$ is inductively defined as follows.
If $t$ is a variable then $\m{AP}(t) = \varnothing$. If $t = f(\seq{t})$
then $\m{AP}(t)$ consists of the following $\bot$-patterns:
\begin{itemize}
\item
$g(\seq[m]{x})$ for every $m$-ary constructor symbol $g$ different from
$f$,
\item
$g(\seq[m]{x},\bot,\dots,\bot)$ for every defined symbol $g$ of arity
$m$ in $\FF$, and
\item
$f(\seq[i-1]{x},u,x_{i+1},\dots,x_n)$ for all
$1 \leqslant i \leqslant n$ and $u \in \m{AP}(t_i)$.
\end{itemize}
Here the $x_j$ are fresh and pairwise distinct variables.
\end{defi}

\begin{exa}
Consider the CCTRS of Example~\ref{ex:fibonacci}. The set
$\m{AP}(\langle z, w \rangle)$ consists of the $\bot$-patterns $\m{0}$,
$\m{s}(x)$, $\m{fib}(x,\bot,\bot)$, and $\m{+}(x,y,\bot,\bot)$.
\end{exa}

\begin{lem}
\label{AP lemma}
Let $s$ be a $\bot$-pattern and $t$ a linear constructor term
with $\Var(s) \cap \Var(t) = \varnothing$. If $s$ and $t$ are not
unifiable then $s$ is an instance of an anti-pattern in $\m{AP}(t)$.
\end{lem}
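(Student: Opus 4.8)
The plan is to proceed by structural induction on the linear constructor term $t$, performing a case analysis on the root symbol of the $\bot$-pattern $s$. First I would dispose of the base case where $t$ is a variable: here $\m{AP}(t) = \varnothing$, but the premise is vacuous, since a variable $t$ always unifies with $s$ via the unifier $t \mapsto s$ (legitimate because $\Var(s) \cap \Var(t) = \varnothing$), so the hypothesis that $s$ and $t$ are not unifiable is never met. For the inductive step, $t = f(t_1,\dots,t_n)$ with $f$ a constructor symbol, and I would record two preliminary observations used throughout: a subterm of a $\bot$-pattern sitting at a constructor-argument position is again a $\bot$-pattern (because $\xi_\bot(f(w_1,\dots,w_n)) = f(\xi_\bot(w_1),\dots,\xi_\bot(w_n),\dots)$ and linearity is inherited by subterms), and the arguments $t_1,\dots,t_n$ of $t$ are themselves linear constructor terms with variables disjoint from those of $s$.

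Next I would split on the root of $s$. Since $s$ and $t$ are not unifiable, $s$ cannot be a variable (else $t$ would unify with $s$). If the root of $s$ is a constructor symbol $g \neq f$ of arity $m$, then writing $s = g(s_1,\dots,s_m)$ shows $s$ is an instance of the anti-pattern $g(x_1,\dots,x_m) \in \m{AP}(t)$ via $x_i \mapsto s_i$. If the root of $s$ is a defined symbol $g$ of arity $m$, then $s$ being a $\bot$-pattern forces $s = g(s_1,\dots,s_m,\bot,\dots,\bot)$ with exactly $m_g$ trailing $\bot$'s, which is precisely an instance of the anti-pattern $g(x_1,\dots,x_m,\bot,\dots,\bot) \in \m{AP}(t)$; note $g \neq f$ automatically, as $f$ is a constructor.

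The remaining case, where $s = f(s_1,\dots,s_n)$ shares the constructor root $f$ with $t$, is the crux. Here I would argue that non-unifiability localizes to some argument: because $s$ and $t$ are both linear and $\Var(s) \cap \Var(t) = \varnothing$, each variable of $s$ (resp.\ $t$) occurs in at most one argument, so the unification problem for $f(s_1,\dots,s_n)$ and $f(t_1,\dots,t_n)$ decomposes into the \emph{independent} subproblems of unifying $s_i$ with $t_i$, with no cross-position constraints; hence non-unifiability of the whole implies non-unifiability of some pair $(s_i,t_i)$. For that index $i$, the pair $(s_i,t_i)$ satisfies the hypotheses of the lemma (by the preliminary observations), so the induction hypothesis yields an anti-pattern $u \in \m{AP}(t_i)$ of which $s_i$ is an instance, say $u\theta' = s_i$. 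Then $f(x_1,\dots,x_{i-1},u,x_{i+1},\dots,x_n) \in \m{AP}(t)$, and extending $\theta'$ by $x_j \mapsto s_j$ for $j \neq i$ (a well-defined combination, since the $x_j$ are fresh and disjoint from $\Var(u)$) exhibits $s$ as an instance of this anti-pattern, completing the induction.

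I expect the main obstacle to be the decomposition argument in this last case: justifying that, for linear and variable-disjoint terms sharing a root, failure of unification must already occur at a single argument position. This is exactly where the linearity of $\bot$-patterns and of $t$, together with variable-disjointness, are indispensable—without them a unification failure could arise from incompatible bindings forced across distinct positions rather than from a local clash, and the induction would not go through.
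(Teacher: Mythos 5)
Your proof is correct and takes essentially the same route as the paper's: induction on $t$, case analysis on the root symbol of $s$, and localization of the unification failure to a single argument position using linearity and variable-disjointness, with the induction hypothesis applied there. The only cosmetic difference is that the paper phrases the localization step contrapositively (if every pair $(s_i,t_i)$ unifies, the argument-wise unifiers combine into a unifier of $s$ and $t$), which is the same argument you give in the forward direction.
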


\begin{proof}
We use induction on the size of $t$.
If $s$ and $t$ are not unifiable, neither can be a variable. So
let $t = f(\seq{t})$. If $s = g(\seq{s})$ or
$s = {} g(\seq{s},\bot,\dots,\bot)$ for
some $g \neq f$ then $s$ instantiates $g(\seq{x})$ or
$g(\seq{x}, \bot,\dots,\bot)$ in $\m{AP}(t)$. Otherwise,
$s = f(\seq{s})$. If $s_i$ and $t_i$ are not unifiable for some $i$,
then by
the induction hypothesis $s_i$ is an instance of some
$u \in \m{AP}(t_i)$, so $s$ instantiates
$f(\seq[i-1]{x},u,x_{i+1},\dots,x_n) \in \m{AP}(t)$. If no such $i$
exists, there are substitutions $\seq{\sigma}$
such that $s_i\sigma_i = t_i\sigma_i$
for all $1 \leqslant i \leqslant n$.
Since $s$ and $t$ are linear terms without common variables, this
implies that $s$ and $t$ are unifiable by the substitution
$\sigma = \sigma_1 \cup \cdots \mathop{\cup} \sigma_n$,
contradicting the assumption.
\end{proof}

We are now ready to define the transformation
from a CCTRS $(\FF,\RR)$ to a context-sensitive TRS $(\HH,\mu,\Xi(\RR))$.
Here, we will use the notation
$\langle \seq{t} \rangle[\seq[j]{u}]_i$ to denote the sequence
$\seq[i-1]{t}, \seq[j]{u}, t_{i+1}, \dots, t_n$ and we
occasionally write $\vec{t}$ for a sequence $\seq{t}$.

\begin{defi}
\label{def:transformrules}
Let $\RR$ be a strong CCTRS over a signature $\FF$. The
TRS $\Xi(\RR)$ is defined over the context-sensitive signature
$(\HH,\mu)$ from Definition~\ref{def:contextsignature}
as follows. Let $\rho_i\colon f(\seq{\ell}) \to r \Leftarrow
a_1 \approx b_1, \dots, a_k \approx b_k$ be
the $i$-th rule in $\RR{\restriction}f$
(where $1 \leqslant i \leqslant m_f$).
\begin{itemize}
\item
If $k = 0$ then $\Xi(\RR)$ contains the rule
\begin{align}
f(\seq{\ell},\langle \seq[m_f]{x} \rangle[\top]_i)
&\to \xi_\top(r)
\tag{$1_\rho$}\label{xi1}
\end{align}
\item
If $k > 0$ then $\Xi(\RR)$ contains the rules
{\allowdisplaybreaks\begin{align}
f(\vec{\ell},\langle \seq[m_f]{x} \rangle[\top]_i)
&\to f_i^1(\vec{\ell},\langle \seq[m_f]{x} \rangle[\xi_\top(a_1)]_i)
\tag{$2_\rho$}\label{xi2}
\\
f_i^k(\vec{\ell},\langle \seq[m_f]{x} \rangle[\seq[k]{b}]_i)
&\to \xi_\top(r),
\tag{$3_\rho$}\label{xi3}
\intertext{the rules}
f_i^j(\vec{\ell},\langle \seq[m_f]{x} \rangle[\seq[j]{b}]_i)
&\to
\notag \\
\makebox[1cm][l]{$f_i^{j+1}(\vec{\ell},\langle \seq[m_f]{x}
\rangle[\seq[j]{b},\xi_\top(a_{j+1})]_i)$} &
\tag{$4_\rho$}\label{xi4}
\end{align}}
for all $1 \leqslant j < k$,
and the rules
\begin{align}
f_i^j(\vec{\ell},\langle \seq[m_f]{x} \rangle[\seq[j-1]{b},v]_i)
&\to
f(\vec{\ell},\langle \seq[m_f]{x} \rangle[\bot]_i)
\tag{$5_\rho$}\label{xi5}
\end{align}
for all $1 \leqslant j \leqslant k$ and $v \in \m{AP}(b_j)$
(where $\Var(v) \cap \Var(f(\vec{\ell},\vec{b},\vec{x})) =
\varnothing$).
\smallskip
\item
Regardless of $k$, $\Xi(\RR)$ contains the rules
\begin{align}
f(\langle \seq{y} \rangle[v]_j,\langle \seq[m_f]{x} \rangle[\top]_i)
&\to
f(\langle \seq{y} \rangle[v]_j,\langle \seq[m_f]{x} \rangle[\bot]_i)
\tag{$6_\rho$}\label{xi6}
\end{align}
for all $1 \leqslant j \leqslant n$ and $v \in \m{AP}(\ell_j)$
(where $\Var(v) \cap \Var(f(\vec{y},\vec{x})) = \varnothing$).
\end{itemize}
Here $\seq[m_f]{x}, \seq[n]{y}$ are fresh and pairwise distinct
variables.
A step using rule \eqref{xi1} or rule \eqref{xi3} has cost 1;
other rules---also called \emph{administrative rules}---have
cost 0.
\end{defi}

Rule~\eqref{xi1} simply adds the $\top$ labels to the right-hand
sides of unconditional rules. To apply a conditional rule $\rho_i$,
we mark the current function symbol as ``in progress for
$\rho_i$'' with rule~\eqref{xi2} and start evaluating the
first condition of $\rho_i$ by steps inside the argument for this
condition.
With rules~\eqref{xi4} we move to the next condition and, 
after all conditions have succeeded, an application of
rule~\eqref{xi3} results in the right-hand side with $\top$ labels.
If a condition fails \eqref{xi5} or 
the left-hand side of the rule does not match
and will never match \eqref{xi6}, then we replace the label
for $\rho_i$ by $\bot$, indicating that we do not need to try it again.

Note that the rules that do not produce the right-hand side of the
originating conditional rewrite rule are
considered \emph{administrative} and hence
do not contribute to the cost of a reduction.
The anti-pattern sets result in many rules \eqref{xi5} and
\eqref{xi6}, but all of these are simple. We could generalize the
system by replacing each $v \in \m{AP}(\ell_j)$
by a fresh variable; the complexity of the resulting (smaller) TRS
gives an upper bound for the original complexity. Indeed, all
methods proposed in Sections~\ref{sec:polynomial}--\ref{sec:splitsize},
also apply to the transformation using variables instead.
The primary purpose of anti-patterns is to ensure
\emph{completeness} (Theorem~\ref{thm:transformcomplete}); by
using anti-patterns instead of variables, we guarantee that a rule
is only marked as unsuccessful (by replacing its parameter by
$\bot$) if it truly cannot succeed anymore.

Note also that the resulting system $\Xi(\RR)$ is left-linear,
which is advantageous for
the potential applicability of various termination and complexity
techniques.

\begin{exa}
\label{ex:Xi(even/odd)}
The (context-sensitive) TRS $\Xi(\RR_{\m{even}})$ consists of the
rules below, with the numbers in square brackets indicating the
cost of the rule: 0 for administrative rules and 1 for the others.
{\allowdisplaybreaks
\begin{alignat*}{2}
[1] &&
\m{even}(\m{0},\top,y,z) &\to \m{true}
\tag{$1_1$}\label{ev11} \\
[0] &&
\m{even}(\star_1,\top,y,z) &\to \m{even}(\star_1,\bot,y,z)
\tag{$6_1$}\label{ev61} \\
[0] &&
\m{even}(\m{s}(x),y,\top,z) &\to
\m{even_2^1}(\m{s}(x),y,\m{odd}(x,\top,\top,\top),z)
\tag{$2_2$}\label{ev22} \\
[1] &&
\m{even_2^1}(\m{s}(x),y,\m{true},z) &\to \m{true}
\tag{$3_2$}\label{ev32} \\
[0] &&
\m{even_2^1}(\m{s}(x),y,\star_2,z) &\to \m{even}(\m{s}(x),y,\bot,z)
\tag{$5_2$}\label{ev52} \\
[0] &&
\m{even}(\star_3,y,\top,z) &\to \m{even}(\star_3,y,\bot,z)
\tag{$6_2$}\label{ev62} \\
[0] &&
\m{even}(\m{s}(x),y,z,\top) &\to
\m{even_3^1}(\m{s}(x),y,z,\m{even}(x,\top,\top,\top))
\tag{$2_3$}\label{ev23} \\
[1] &\qquad&
\m{even_3^1}(\m{s}(x),y,z,\m{true}) &\to \m{false}
\tag{$3_3$}\label{ev33} \\
[0] &&
\m{even_3^1}(\m{s}(x),y,z,\star_2) &\to \m{even}(\m{s}(x),y,z,\bot)
\tag{$5_3$}\label{ev53} \\
[0] &&
\m{even}(\star_3,y,z,\top) &\to \m{even}(\star_3,y,z,\bot)
\tag{$6_3$}\label{ev63} \\
[1] &&
\m{odd}(0,\top,y,z) &\to \m{false}
\tag{$1_4$}\label{ev14} \\
[0] &&
\m{odd}(\star_1,\top,y,z) &\to \m{odd}(\star_1,\bot,y,z)
\tag{$6_4$}\label{ev64} \\
[0] &&
\m{odd}(\m{s}(x),y,\top,z) &\to
\m{odd_2^1}(\m{s}(x),y,\m{odd}(x,\top,\top,\top),z)
\tag{$2_5$}\label{ev25} \\
[1] &&
\m{odd_2^1}(\m{s}(x),y,\m{true},z) &\to \m{false}
\tag{$3_5$}\label{ev35} \\
[0] &&
\m{odd_2^1}(\m{s}(x),y,\star_2,z) &\to \m{odd}(\m{s}(x),y,\bot,z)
\tag{$5_5$}\label{ev55} \\
[0] &&
\m{odd}(\star_3,y,\top,z) &\to \m{odd}(\star_3,y,\bot,z)
\tag{$6_5$}\label{ev65} \\
[0] &&
\m{odd}(\m{s}(x),y,z,\top) &\to
\m{odd_3^1}(\m{s}(x),y,z,\m{even}(x,\top,\top,\top))
\tag{$2_6$}\label{ev26} \\
[1] &&
\m{odd_3^1}(\m{s}(x),y,z,\m{true}) &\to \m{true}
\tag{$3_6$}\label{ev36} \\
[0] &&
\m{odd_3^1}(\m{s}(x),y,z,\star_2) &\to \m{odd}(\m{s}(x),y,z,\bot)
\tag{$5_6$}\label{ev56} \\
[0] &&
\m{odd}(\star_3,y,z,\top) &\to \m{odd}(\star_3,y,z,\bot)
\tag{$6_6$\label{ev66}}
\end{alignat*}}
for all
\begin{align*}
\star_1 &\in \m{AP}(0) = \{ \m{true}, \m{false}, \m{s}(x),
\m{even}(x,\bot,\bot,\bot), \m{odd}(x,\bot,\bot,\bot) \}
\\
\star_2 &\in \m{AP}(\m{true}) = \{ \m{false}, 0, \m{s}(x),
\m{even}(x,\bot,\bot,\bot), \m{odd}(x,\bot,\bot,\bot) \}
\\
\star_3 &\in \m{AP}(\m{s}(x)) = \{ \m{true}, \m{false}, 0,
\m{even}(x,\bot,\bot,\bot), \m{odd}(x,\bot,\bot,\bot) \}
\end{align*}
Following Definition~\ref{def:contextsignature}, this TRS is
equipped  with the following replacement map $\mu$:
\begin{gather*}
\mu(\m{even}) = \mu(\m{odd}) = \{ 1 \}
\qquad
\mu(\m{even}_2^1) = \mu(\m{odd}_2^1) = \{ 3 \}
\qquad
\mu(\m{s}) = \{ 1 \}
\\
\mu(\m{even}_3^1) = \mu(\m{odd}_3^1) = \{ 4 \}
\qquad
\mu(\m{0}) = \mu(\m{false}) = \mu(\m{true}) = \varnothing
\end{gather*}
\end{exa}

Instead of the current rules, which pass along
the various $\ell_i$ and $b_j$ unmodified throughout condition
evaluation, we could have opted for a more fine-grained approach
where we pass on their \emph{variables}, and then only those which are
needed later on, similar to what is done in
the \emph{optimized} unraveling~\cite{O99}.
Doing so, the example above would for instance have rules
\begin{align}
\m{even}(\m{s}(x),y,\top,z) &\to
\m{even}_2^1(x,y,\m{odd}(x,\top,\top,\top),z)
\tag{\ref{ev22}}
\intertext{and}
\m{even}_2^1(x,y,\star_2,z) &\to \m{even}(\m{s}(x),y,\bot,z)
\tag{\ref{ev52}}
\end{align}
However, this would complicate the presentation for no easily
discernible gain.

In an early version of this work~\cite{KMS15}, we employed a
slightly different transformation in which the symbols $f^i_\rho$
were \emph{constructor symbols}, used in
subterms corresponding to the rule whose conditions they evaluated.
For instance, the above rules were rendered as
\begin{align}
\m{even}(\m{s}(x),y,\top,z)
&\to \m{even}_{\m{active}}
(x,y,\m{even}^1_2(\m{s}(x),\m{odd}(x,\top,\top,\top)),z)
\tag{\ref{ev22}}
\intertext{and}
\m{even}_{\m{active}}(\m{s}(x),y,\m{even}^1_2(u,\star_2),z)
&\to \m{even}(\m{s}(x),y,\bot,z)
\tag{\ref{ev52}}
\end{align}
We simplifed this to our current definition because it is easier to work
with when looking for interpretations to establish termination as in
Section~\ref{sec:polynomial}.

\begin{defi}
We define the derivation height of a terminating term $s$ in the
context-sensitive TRS $(\HH,\Xi(\RR))$ as the greatest number of
non-administrative steps in any reduction starting in $s$, taking
the replacement map into account:
\begin{align*}
\m{dh}(s) &= {}
\max\,(\{ \m{cost}(s \to_{\RR,\mu}^* t) \mid t \in \TT(\HH,\VV) \})
\intertext{Letting $\m{dh}(s) = \infty$ if $s$ is non-terminating,
the derivation and runtime complexities are defined
accordingly:}
\m{dc}_{\Xi(\RR)}(n) &= \max\,\{ \m{dh}(s) \mid
\text{$s \in \TT(\HH)$ and $|s| \leqslant n$} \}
\\
\m{rc}_{\Xi(\RR)}(n) &= \max\,\{ \m{dh}(s) \mid
\text{$s \in \TT(\HH)$, $|s| \leqslant n$, and $s$ is basic}
\}
\end{align*}
\end{defi}

\subsection{Labeled reduction versus $\Xi(\RR)$}
\label{subsec:transformation:proofs}

\mbox{}

\bigskip

\noindent
In order to use the translated TRS $\Xi(\RR)$, we must understand
how the conditional complexity of the original CCTRS relates to the
unconditional complexity of $\Xi(\RR)$. To this end, we will define
a translation $\zeta$ from \emph{labeled} terms to terms over $\HH$,
which has the following properties:
\begin{enumerate}
\item
if $s$ has (conditional) derivation height $N$ then
$\zeta(s)$ has (unconditional) derivation height at least $N$
(Theorem~\ref{thm:transformsound}),
\item
if $\zeta(s)$ has (unconditional) derivation height $N$
then $s$ has (conditional) derivation height at least $N$
(Theorem~\ref{thm:transformcomplete}).
\end{enumerate}
Thus, we will be able to use the transformed system $\Xi(\RR)$ to
obtain both upper and lower bounds for conditional complexity.

While $\to_{\Xi(\RR),\mu}$ and $\xrightharpoonup{}$ were designed to
be intuitively equivalent, the proofs are rather technical.
Before proving the first result, we define the
mapping $\zeta$ from terms in $\TT(\GG,\VV)$ to terms in $\TT(\HH,\VV)$.
It resembles the earlier
definition of $\xi_\star$, but also handles the labels.

\begin{defi}
For $t \in \TT(\GG,\VV)$ we define
\[
\zeta(t) = \begin{cases}
t & \text{if $t \in \VV$,} \\
f(\zeta(t_1),\dots,\zeta(t_n)) &
\text{if $t = f(\seq{t})$ with $f$ a constructor symbol,} \\
f(\zeta(t_1),\dots,\zeta(t_n),\seq[m_f]{c}) &
\text{if $t = f_R(\seq{t})$ with $R \subseteq \RR{\restriction}f$} \\
\end{cases}
\]
where $c_i = \top$ if $\rho_i$ belongs to $R$ and $c_i = \bot$
otherwise, for $1 \leqslant i \leqslant m_f$.
For a substitution $\sigma \in \Sigma(\GG,\VV)$ we denote
the substitution $\zeta \circ \sigma$ by $\sigma_\zeta$.
\end{defi}

It is easy to see that $p \in \Pos_\mu(\zeta(t))$
if and only if $p \in \Pos(t)$,
if and only if $p \in \Pos(\zeta(t))$ and
$\zeta(t)|_p \notin \{ \bot, \top \}$,
for any $t \in \TT(\GG,\VV)$.

\begin{lem}
\label{zeta xi}
If $t \in \TT(\FF,\VV)$ then $\zeta(\m{label}(t)) = \xi_\top(t)$.
If $t \in \TT(\GG,\VV)$ and $\sigma \in \Sigma(\GG,\VV)$ then
$\zeta(t\sigma) = \zeta(t)\sigma_\zeta$.
Moreover, if $t$ is a linear labeled normal form then 
$\zeta(t) = \xi_\bot(\m{erase}(t))$ is a $\bot$-pattern, and if
$\zeta(t)$ is a $\bot$-pattern then $t$ is a linear labeled normal form.
\end{lem}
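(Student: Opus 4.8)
The plan is to treat the four assertions separately, each by a routine structural induction, exploiting that $\zeta$, $\xi_\top$, $\xi_\bot$, $\m{label}$ and $\m{erase}$ are all defined by recursion on term structure, and that $\zeta$ relocates no variables, duplicates none, and appends only constant labels $\bot,\top$ to the argument lists of defined symbols.

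For the first identity I would induct on $t \in \TT(\FF,\VV)$. The variable and constructor cases are immediate from the defining clauses of $\zeta$ and $\xi_\top$. In the defined-symbol case $t = f(\seq{t})$, the point is that $\m{label}$ attaches the \emph{full} set $R = \RR{\restriction}f$, so in the third clause of $\zeta$ every $c_i$ equals $\top$; the trailing label block produced by $\zeta(\m{label}(t))$ therefore coincides exactly with the $\top,\dots,\top$ appended by $\xi_\top$, and the induction hypothesis handles the $n$ arguments. For the second identity, $\zeta(t\sigma) = \zeta(t)\sigma_\zeta$, I would again induct on $t \in \TT(\GG,\VV)$. The variable case is just the definition $\sigma_\zeta = \zeta \circ \sigma$, and in the remaining cases the recursion commutes with substitution; the only thing to observe is that the appended labels $c_i \in \{\top,\bot\}$ are constants, so $c_i\sigma_\zeta = c_i$, and the label block is left untouched whether one substitutes before or after applying $\zeta$.

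The last part combines two directions. For the forward direction I would show $\zeta(t) = \xi_\bot(\m{erase}(t))$ by induction on a linear labeled normal form $t$: here every defined symbol carries the empty label $\varnothing$, so all $c_i = \bot$, matching precisely the $\bot,\dots,\bot$ block of $\xi_\bot$, while $\m{erase}$ strips the labels to feed $\xi_\bot$ the corresponding unlabeled subterm (and each argument of a labeled normal form is again a labeled normal form). Since $\zeta$ introduces no new variables and copies none, $\zeta(t)$ inherits linearity from $t$, so $\zeta(t)$ lies in $\{\xi_\bot(s) \mid s \in \TT(\FF,\VV)\}$ and is linear, i.e., it is a $\bot$-pattern.

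For the converse, suppose $\zeta(t)$ is a $\bot$-pattern, say $\zeta(t) = \xi_\bot(s)$ with $\zeta(t)$ linear. The crucial observation---and the only genuinely load-bearing step---is that $\xi_\bot$ places $\bot$ in \emph{every} label slot and never produces $\top$ anywhere. Comparing $\zeta(t)$ and $\xi_\bot(s)$ symbol by symbol (the two maps agree on shape, so their top symbols coincide and force matching subterms down the recursion), every label block of $\zeta(t)$ must be all-$\bot$; by the definition of $\zeta$ this means every label $R$ occurring in $t$ is empty, so $t$ is a labeled normal form, and its linearity follows once more from linearity of $\zeta(t)$. I expect this converse---pinning down that a single $\top$ in any label would obstruct membership in the image of $\xi_\bot$---to be the one place demanding care; everything else is bookkeeping that the structural recursion makes routine.
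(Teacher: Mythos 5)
Your proposal is correct and matches the paper's approach: the paper disposes of all four claims with a single remark that they follow by induction on (the size of) $t$, which is exactly the structural induction you carry out in detail, including the key observations that $\m{label}$ attaches the full label set (forcing all $c_i = \top$), that labeled normal forms carry only empty labels (forcing all $c_i = \bot$), and that $\zeta$ neither duplicates nor drops variable occurrences. Nothing further is needed.
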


\begin{proof}
All four properties are easily proved by induction on the size
of $t$.
\end{proof}

We are now ready for the first main result, which states
that $\Xi$ reflects complexity.

\begin{thm}
\label{thm:transformsound}
Let $\RR$ be a strong CCTRS.
\begin{enumerate}
\item
If $s \xrightharpoonup{}^* t$ is a
complexity-conscious reduction with cost $N$ then
there exists a context-sensitive reduction
$\zeta(s) \to_{\Xi(\RR),\mu}^* \zeta(t)$ with cost $N$.
\item
If $s \xrightharpoonup{\infty\,}$ then there is an
infinite $(\Xi(\RR),\mu)$ reduction starting from $\zeta(s)$.
\end{enumerate}
\end{thm}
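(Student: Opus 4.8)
The plan is to prove both parts by simulating each elementary move of $\xrightharpoonup{}$ (and of $\rhrhd$) by a block of $\Xi(\RR)$ steps that keeps the cost exactly in step, and then threading these blocks together. For part~(1) I would use induction on the total number of rewrite steps, counting---as in Theorem~\ref{lem:simulate}---both the top-level steps and the steps used to verify conditions, and distinguish the kinds of single step of Definition~\ref{cc reduction}. The two standing tools are Lemma~\ref{zeta xi}, which gives $\zeta(t\sigma) = \zeta(t)\sigma_\zeta$ and $\zeta(\m{label}(t)) = \xi_\top(t)$ so that $\zeta$ commutes past all substitutions, and the remark that $p \in \Pos_\mu(\zeta(t))$ exactly when $p \in \Pos(t)$, which lets me lift reductions from active argument positions to the whole term.

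The single-step cases run as follows. A $\xrightharpoonup{\bot}$ step at $p$ removing $\rho_i$ from the label of $f$ has $s|_p = f_R(\seq{u})\sigma$ with $f(\seq{u})$ not unifying with $\ell$, so some $u_j$ does not unify with $\ell_j$; since $\zeta(u_j)$ is a $\bot$-pattern (Lemma~\ref{zeta xi}) and the $u_j$ carry fresh variables, Lemma~\ref{AP lemma} supplies an anti-pattern $v \in \m{AP}(\ell_j)$ of which $\zeta(u_j\sigma)$ is an instance, and rule~\eqref{xi6} flips the $i$-th flag from $\top$ to $\bot$ at cost~$0$. An unconditional successful step ($k = 0$) is one application of rule~\eqref{xi1} at cost~$1$. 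A conditional successful step fires~\eqref{xi2}, then simulates each $\m{label}(a_l)\sigma \xrightharpoonup{}^* b_l\sigma$ by the induction hypothesis inside the unique active argument---checking that $\mu(f_i^l) = \{\,n + i + l - 1\,\}$ is exactly the slot holding $\xi_\top(a_l)$---advances with~\eqref{xi4} between conditions, and concludes with~\eqref{xi3} at cost~$1$, giving total cost $1 + \sum_l \m{cost}(\cdots)$ as required. A failed step is analogous but aborts at condition $j+1$: after reaching $f_i^{j+1}$ and reducing its active argument to $\zeta(u\tau)$, Lemma~\ref{AP lemma} again yields an anti-pattern in $\m{AP}(b_{j+1})$ (using that $u$ is a linear labeled normal form not unifying with $b_{j+1}$), and rule~\eqref{xi5} flips the flag at cost~$0$, so the bookkeeping $\sum_{i=1}^{j}\m{cost}(\cdots) + \m{cost}(\cdots)$ again coincides. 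Composing these simulations gives the multi-step statement.

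For part~(2), given an infinite sequence $s = s_0 \mathrel{(\xrightharpoonup{} \cup \rhrhd)} s_1 \mathrel{(\xrightharpoonup{} \cup \rhrhd)} \cdots$, I would maintain the invariant $T_i|_{p_i} = \zeta(s_i)$ with $p_i$ an active position of the current $\Xi(\RR)$ term $T_i$, starting from $T_0 = \zeta(s)$ and $p_0 = \epsilon$. A $\xrightharpoonup{}$ step is simulated by part~(1) inside the focus $p_i$, keeping $p_{i+1} = p_i$. A $\rhrhd$ step with $s_{i+1} = \m{label}(a_{j+1})\sigma$ is simulated exactly as in the successful case but stopping at condition $j+1$: rule~\eqref{xi2}, the successful condition reductions $1,\dots,j$, and the relevant~\eqref{xi4} steps place $\zeta(s_{i+1}) = \xi_\top(a_{j+1})\sigma_\zeta$ at a strictly deeper active position $p_{i+1}$. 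Since every move consumes at least one $\Xi(\RR)$ step (already~\eqref{xi2}, resp.\ the leading rule of the $\xrightharpoonup{}$ block, suffices), threading the blocks yields an infinite $(\Xi(\RR),\mu)$ reduction from $\zeta(s)$.

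The main obstacle is the structural mismatch between $\rhrhd$, which discards the surrounding context and replaces the term by a bare condition, and its $\Xi(\RR)$ counterpart, which retains the context and embeds the condition at an active argument of an $f_i^{j+1}$ symbol. Tracking the focus position $p_i$ and verifying that the active slots computed from $\mu(f_i^j)$ align with the condition currently under evaluation is the delicate point of both halves. The remaining care goes into the anti-pattern arguments via Lemma~\ref{AP lemma} for the matching- and condition-failure cases---in particular transferring non-unifiability from the labeled normal forms to their $\bot$-pattern images under $\zeta$---and into the exact cost accounting that makes the successful and failed blocks reproduce $N$ precisely.
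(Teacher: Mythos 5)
Your proposal is correct and follows essentially the same route as the paper: induction on the total derivation size for part (1) with the same three case distinctions ($\xrightharpoonup{\bot\,}$ via rule \eqref{xi6} and Lemma~\ref{AP lemma}, successful steps via \eqref{xi1} or \eqref{xi2}/\eqref{xi4}/\eqref{xi3}, failed steps ending in \eqref{xi5}), identical cost bookkeeping, and for part (2) the same context/focus-position argument for $\rhrhd$ steps (the paper phrases your invariant $T_i|_{p_i} = \zeta(s_i)$ as $\zeta(s_i) \to^+ C[\zeta(s_{i+1})]$ with the hole at an active position). Your explicit choice of induction measure---total steps including those in condition evaluations---is in fact the precise reading of what the paper's induction must be, since the induction hypothesis is applied to the condition sub-reductions.
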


\begin{proof}
We prove the first statement by induction on the number of steps
in $s \xrightharpoonup{}^* t$.
The result is obvious when this number is zero, so suppose
$s \xrightharpoonup{} u \xrightharpoonup{}^* t$ and let $M$ be the
cost of the step $s \xrightharpoonup{} u$
and $N - M$ the cost of $u \xrightharpoonup{}^* t$. The induction
hypothesis yields a context-sensitive reduction
$\zeta(u) \to_{\Xi(\RR),\mu}^* \zeta(t)$ of cost $N - M$ and so
it remains to show that there exists a context-sensitive reduction
$\zeta(s) \to_{\Xi(\RR),\mu}^* \zeta(u)$ of cost $M$. Let
$\rho\colon f(\seq{\ell}) \to r \Leftarrow c$ be the rule in
$\RR$ that gives rise to the step
$s \xrightharpoonup{} u$ and let $i$ be its index in $\RR{\restriction}f$. 
There exist a position $p \in \Pos(s)$, terms $\seq{s}$, and
a subset $R \subseteq \RR{\restriction}f$ such that
$s|_p = f_R(\seq{s})$ and $\rho \in R$. We have
$\zeta(s)|_p = \zeta(s|_p) =
f_R(\zeta(s_1),\dots,\zeta(s_n),\seq[m_f]{c})$
where $c_j = \top$ if the $j$-th rule of $\RR{\restriction}f$ belongs
to $R$ and $c_j = \bot$ otherwise, for $1 \leqslant j \leqslant m_f$.
In particular, $c_i = \top$.
Note that $p$ is an active position in $\zeta(s)$.
We distinguish three cases.
\begin{itemize}
\item
First suppose that
$s \xrightharpoonup{\bot\,} u$. So $M = 0$,
$u = s[f_{R \setminus \{ \rho \}}(\seq{s})]_p$, and---by linearity
of $f(\seq{\ell})$---there
exist a linear labeled normal form $v$, a substitution $\sigma$,
and an index $1 \leqslant j \leqslant n$ such that $s_j = v\sigma$ and
$\m{erase}(v)$ does not unify with $\ell_j$.
By Lemma~\ref{zeta xi}, $\zeta(s_j) = \zeta(v\sigma) =
\zeta(v)\sigma_\zeta = \xi_\bot(\m{erase}(v))\sigma_\zeta$.
By definition, $\xi_\bot(\m{erase}(v))$ is a $\bot$-pattern, which
cannot unify with $\ell_j$ because $\m{erase}(v)$ does not.
From Lemma~\ref{AP lemma} we obtain an anti-pattern $v' \in
\m{AP}(\ell_j)$ such that $\xi_\bot(\m{erase}(v))$ is an instance of
$v'$.
Hence
$\zeta(s) = \zeta(s)[f(\zeta(s_1),\dots,\zeta(s_n),\seq[m_f]{c})]_p$
with $\zeta(s_j)$ an instance of $v' \in \m{AP}(\ell_j)$ and
$c_i = \top$. Consequently, $\zeta(s)$ reduces to
$\zeta(s)[f(\zeta(s_1),\dots,\zeta(s_n),
\langle \seq[m_f]{c} \rangle[\bot]_i)]_p$
by an application of rule \eqref{xi6}, which has cost zero. The latter
term equals
\[
\zeta(s[f_{R \setminus \{ \rho \}}(\seq{s})]_p) = \zeta(u)
\]
and hence we are done.
\smallskip
\item
Next suppose that $s \xrightharpoonup{} u$ is a successful step. So there
exists a substitution $\sigma$ such that
$\m{label}(a_i)\sigma \xrightharpoonup{}^* b_i\sigma$ with cost $M_i$
for all $1 \leqslant i \leqslant k$, and $M = 1 + M_1 + \cdots + M_k$.
The induction hypothesis yields reductions
$\zeta(\m{label}(a_i)\sigma) \to_{\Xi(\RR),\mu}^* \zeta(b_i\sigma)$
with cost $M_i$.
By Lemma~\ref{zeta xi}, $\zeta(\m{label}(a_i)\sigma) =
\zeta(\m{label}(a_i))\sigma_\zeta = \xi_\top(a_i)\sigma_\zeta$
and $\zeta(b_i\sigma) = b_i\sigma_\zeta$.
Moreover, $\zeta(s)|_p = \zeta(s|_p) =
f(\vec{\ell},\langle \seq[m_f]{c} \rangle[\top]_i)\sigma_\zeta$ and
$\zeta(u) = \zeta(s)[\zeta(\m{label}(r)\sigma)]_p$ with
$\zeta(\m{label}(r))\sigma_\zeta = \xi_\top(r)\sigma_\zeta$ by
Lemma~\ref{zeta xi}.
So it suffices if
$f(\vec{\ell},\langle \seq[m_f]{c} \rangle[\top]_i)\sigma_\zeta
\to_{\Xi(\RR),\mu}^* \xi_\top(r)\sigma_\zeta$
with cost $M$. If $k = 0$, we can use rule \eqref{xi1}.
Otherwise, we use the
reductions $\xi_\top(a_i)\sigma_\zeta \to_{\Xi(\RR),\mu}^*
b_i\sigma_\zeta$, rules \eqref{xi2} and \eqref{xi3}, and $k-1$ times a
rule of type \eqref{xi4} to obtain
\begin{align*}
f(\vec{\ell},\langle\seq[m_f]{c}\rangle[\top]_i)\sigma_\zeta
&\,\to_{\Xi(\RR),\mu}\,
f_i^1(\vec{\ell},\langle\seq[m_f]{c}\rangle
[\xi_\top(a_1)]_i)\sigma_\zeta
\\
&\,\to_{\Xi(\RR),\mu}^*\,
f_i^1(\vec{\ell},\langle\seq[m_f]{c}\rangle
[b_1]_i)\sigma_\zeta
\\
&\,\to_{\Xi(\RR),\mu}\,
f_i^2(\vec{\ell},\langle\seq[m_f]{c}\rangle
[b_1,\xi_\top(a_2)]_i)\sigma_\zeta
\\
&\,\to_{\Xi(\RR),\mu}^*\,
\cdots
\\
&\,\to_{\Xi(\RR),\mu}\,
f_i^k(\vec{\ell},\langle\seq[m_f]{c}\rangle
[b_1,\dots,b_k]_i)\sigma_\zeta
\\
&\,\to_{\Xi(\RR),\mu}\,
\xi_\top(r)\sigma_\zeta
\end{align*}
Note that all steps take place at active positions,
and that the steps with rules \ref{xi2} and \ref{xi4} are
administrative. Therefore, the cost of this reduction equals $M$.
\smallskip
\item
The remaining case is a failed step $s \xrightharpoonup{} u$. So there
exist substitutions $\sigma$ and $\tau$, an index
$1 \leqslant j < k$, and a linear labeled normal form $v$ which
does not unify with $b_{j+1}$ such that
$\m{label}(a_i)\sigma \xrightharpoonup{}^* b_i\sigma$ with cost $M_i$
for all $1 \leqslant i \leqslant j$ and
$\m{label}(a_{j+1})\sigma \xrightharpoonup{}^* v\tau$ with cost $M_{j+1}$.
We obtain
$\zeta(\m{label}(a_i)\sigma) = \xi_\top(a_i)\sigma_\zeta$,
$\zeta(b_i\sigma) = b_i\sigma_\zeta$, and $\zeta(s)|_p =
f(\vec{\ell},\langle \seq[m_f]{c} \rangle[\top]_i)\sigma_\zeta$ like
in the preceding case. Moreover, like in the first case, we obtain
an anti-pattern $v' \in \m{AP}(b_{j+1})$ such that
$\xi_\bot(\m{erase}(v))$ is an instance of $v'$. We have
$\zeta(v\tau) = \zeta(v)\tau_\zeta = \xi_\bot(\m{erase}(v))\tau_\zeta$ by
Lemma~\ref{zeta xi}. Hence
$\zeta(v\tau)$ is an instance of $v'$. Consequently,
\begin{align*}
f(\vec{\ell},\langle\seq[m_f]{c}\rangle[\top]_i)\sigma_\zeta
&\,\to_{\Xi(\RR),\mu}^*\,
f_i^{j+1}(\vec{\ell},\langle\seq[m_f]{c}\rangle
[b_1,\dots,b_j,\xi_\top(a_{j+1})]_i)\sigma_\zeta
\\
&\,\to_{\Xi(\RR),\mu}^*\,
f_i^{j+1}(\vec{\ell},\langle\seq[m_f]{c}\rangle
[b_1,\dots,b_j,\zeta(v\tau)]_i)\sigma_\zeta
\\
&\,\to_{\Xi(\RR),\mu}\,
f(\vec{\ell},\langle\seq[m_f]{c}\rangle[\bot]_i)\sigma_\zeta
\end{align*}\enlargethispage{2\baselineskip}%
where the last step uses an administrative rule of type \eqref{xi5}.
Again, all steps take place at active positions.
Note that
$f(\vec{\ell},\langle\seq[m_f]{c}\rangle[\bot]_i)\sigma_\zeta =
\zeta(f_{R \setminus \{ \rho \}}(\seq{s})) = \zeta(u|_p)$.
Hence $\zeta(s) \to_{\Xi(\RR),\mu}^* \zeta(u)$ as desired.
The cost of this reduction is $M_1 + \cdots + M_{j+1}$, which
coincides with the cost $M$ of the step $s \xrightharpoonup{} u$.
\end{itemize}
This concludes the proof of the first statement.
As for the second statement, suppose $s \xrightharpoonup{\infty\,}$, so
there exists an infinite sequence $(s)_{i \geqslant 0}$ of terms such
that $s = s_0$ and $s_i \xrightharpoonup{} s_{i+1}$ or
$s_i \xrightharpoonup{\rhd\,} s_{i+1}$ for all $i \geqslant 0$.
Fix $i \geqslant 0$. If $s_i \xrightharpoonup{} s_{i+1}$ then
$\smash{\zeta(s_i) \to^+_{\Xi(\RR),\mu} \zeta(s_{i+1})}$ follows from
the first statement.
Suppose $s_i \xrightharpoonup{\rhd\,} s_{i+1}$. We show that
$\zeta(s_i) \to^+_{\Xi(\RR),\mu} C[\zeta(s_{i+1})]$ for some context $C$
whose hole is at an active position.
there exist an active position $p \in \Pos(s_i)$, a rule
$\rho\colon f(\seq{\ell}) \to r \Leftarrow c$ in $R$,
a substitution $\sigma$, and an index $j$ such that
$s_i|_p = f_R(\ell_1\sigma,\dots,\ell_n\sigma)$,
$\m{label}(a_1)\sigma \xrightharpoonup{}^* b_1\sigma$, $\dots$,
$\m{label}(a_j)\sigma \xrightharpoonup{}^* b_j\sigma$, and $s_{i+1} =
\m{label}(a_{j+1})\sigma$, so
$\zeta(s_{i+1}) = \xi_\top(a_{j+1})\sigma_\zeta$.
Let $l$ be the index of $\rho$ in $\RR{\restriction}f$.
We obtain
$\zeta(s_i)|_p = \zeta(s_i|p) =
f(\ell_1\sigma_\zeta,\dots,\ell_n\sigma_\zeta,\seq[m_f]{c})$ where
$c_l = \top$, and $\xi_\top(a_d)\sigma_\zeta =
\zeta(\m{label}(a_d)\sigma) \to^*_{\Xi(\RR),\mu} \zeta(b_d\sigma) =
b_d\sigma_\zeta$ for $1 \leqslant d \leqslant j$,
by the first statement. Hence
\[
s_i =
s_i[f(\vec{\ell},\langle\seq[m_f]{c}\rangle[\top]_j)\sigma_\zeta]_p
\,\to_{\Xi(\RR),\mu}^*\,
s_i[f_l^{j+1}(\vec{\ell},\langle\seq[m_f]{c}\rangle
[\seq[j]{b},\xi_\top(a_{j+1})]_l)\sigma_\zeta]_p
\]
and thus we can take the context
\[
C = {} s_i[f_l^{j+1}(\ell_1\sigma,\dots,\ell_n\sigma,
c_1,\dots,c_{l-1},b_1,\dots,b_j,\Box,c_{l+1},\dots,c_{m_f})]_p
\]
The hole is at an active position, since $p$ is active
in $s_i$ and $n + l + j$ in $\mu(f_l^{j+1})$.
\end{proof}

Theorem~\ref{thm:transformsound} provides a way to establish
conditional complexity: If $\Xi(\RR)$ has complexity $\OO(\varphi(n))$
then the conditional complexity of $\RR$ is at $\OO(\varphi(n))$.
This is the important direction as it allows us to obtain an upper
bound for complexity by transforming the conditional system into an
unconditional one. However, we have more. The following result
shows that complexity bounds thus obtained can be \emph{sharp}.

\begin{thm}
\label{thm:transformcomplete}
Let $\RR$ be a strong CCTRS and $s \in \TT(\GG)$.
\begin{enumerate}
\item
If $\zeta(s)$ is terminating and there exists a
context-sensitive reduction $\zeta(s) \to_{\Xi(\RR),\mu}^* t$ for
some $t$ with cost $N$, then there exists a complexity-conscious
reduction $s \xrightharpoonup{}^* t'$
for some $t'$ with cost at least $N$.
\item
If there exists an infinite $(\Xi(\RR),\mu)$ reduction
starting from $\zeta(s)$ then $s \xrightharpoonup{\infty\,}$.
\end{enumerate}
\end{thm}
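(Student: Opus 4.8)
The plan is to reverse the construction in the proof of Theorem~\ref{thm:transformsound}, the essential new difficulty being that an arbitrary $(\Xi(\RR),\mu)$-reduction from $\zeta(s)$ need not stay in the image of $\zeta$: intermediate terms may contain the auxiliary symbols $f_i^j$, which record a rule application whose conditions are still being evaluated. The key structural fact I would exploit is the replacement map: in $f_i^j(\dots)$ only the single condition-in-progress argument is active (its other arguments, in particular the copies of $\seq{\ell}$, are frozen), so a reduction cannot tamper with a subterm while one of its rules is under evaluation. This makes the reachable terms behave like a stack of nested, isolated condition evaluations, exactly mirroring the recursive structure of complexity-conscious reductions and of $\rhrhd$.

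For the first statement I would first reduce to the case where the target $t$ lies in the image of $\zeta$. Because $s$ is ground and $\zeta(s)$ is terminating, every reduct is a terminating ground term, so each innermost $f_i^j$-subterm has a normal form which, by the characterisation of ground normal forms as $\bot$-patterns, either matches $b_j$ (enabling \eqref{xi4} or \eqref{xi3}) or, by Lemma~\ref{AP lemma}, instantiates an anti-pattern (enabling \eqref{xi5}). Resolving all in-progress symbols bottom-up therefore extends the reduction to one ending in a term $\zeta(s')$ without changing—only possibly increasing—its cost. Since $\zeta$ is injective on labeled terms, $s'$ is uniquely determined, and it suffices to show $s \xrightharpoonup{}^* s'$ with cost at least $N$. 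I would prove this by induction on the length of $\zeta(s)\to_{\Xi(\RR),\mu}^*\zeta(s')$. As $\zeta(s)$ is a $\zeta$-image, its first step is at an active position and uses \eqref{xi1}, \eqref{xi2}, or \eqref{xi6}. Steps \eqref{xi1} and \eqref{xi6} produce another $\zeta$-image and correspond directly to a successful respectively a $\xrightharpoonup{\bot\,}$ step (the latter via Lemma~\ref{AP lemma} and Lemma~\ref{zeta xi}), after which I recurse. The heart of the argument is the \eqref{xi2} case: here I would use a commutation argument for disjoint redexes in the left-linear, context-sensitive system to move the entire block of steps at or below the position $p$ to the front, obtaining $\zeta(s)\to^*\zeta(s_1)\to^*\zeta(s')$ where the first part resolves the rule $\rho_i$ at $p$ completely. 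That block realises either a successful or a failed $\xrightharpoonup{}$ step $s\xrightharpoonup{}s_1$; the sub-reductions of its individual conditions start from $\zeta(\m{label}(a_j)\sigma)$ (which is terminating, being a subterm of a reduct of $\zeta(s)$) and are strictly shorter, so the induction hypothesis turns each into a condition evaluation $\m{label}(a_j)\sigma\xrightharpoonup{}^* b_j\sigma$ (or, for the failing condition, a reduction to a labeled normal form not unifying with $b_{j+1}$, using Lemma~\ref{zeta xi}) of cost at least the corresponding $\Xi(\RR)$-cost. Since the only non-administrative rules are \eqref{xi1} and \eqref{xi3}, summing these bounds shows the conditional step costs at least as much as the block, and recursion on the shorter remainder completes the induction.

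For the second statement I would build an infinite $(\xrightharpoonup{}\cup\rhrhd)$-sequence from $s$ directly, maintaining the invariant that the current labeled term has a $\zeta$-image admitting an infinite $(\Xi(\RR),\mu)$-reduction. Given such a term, I claim a dichotomy analogous to the case analysis above. Since the term is finite, by a K\"onig argument some outermost position carries infinitely many steps; tracing the symbol at that position, either its rule block eventually completes—then, as in the finite case, the block is a single $\xrightharpoonup{}$ (or $\xrightharpoonup{\bot\,}$) step after which the infinite activity persists in the resulting $\zeta$-image—or, because there are only finitely many conditions, the index $j$ stabilises and the infinite activity takes place inside the evaluation of one condition $a_{j+1}$ while the earlier conditions have already succeeded; this yields a step $s \rhrhd \m{label}(a_{j+1})\sigma$ (Definition~\ref{cc infinite}) whose target $\zeta(\m{label}(a_{j+1})\sigma)$ inherits the infinite reduction. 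In either case the invariant is preserved and one more $\xrightharpoonup{}$ or $\rhrhd$ step is produced, so iterating gives $s \xrightharpoonup{\infty\,}$.

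The main obstacle I anticipate is the commutation step used to isolate a single condition-evaluation block from an arbitrarily interleaved reduction. Making this precise requires verifying that steps in disjoint active subterms genuinely commute in the context-sensitive, left-linear system and that neither the final term nor the non-administrative cost is affected by the reordering; the replacement map is what rules out the problematic interference, since it freezes the arguments of any symbol whose conditions are in progress. A secondary delicate point is the bookkeeping of the ``at least'' in the cost estimate: one must check that the conditional cost of a reconstructed step dominates the $\Xi(\RR)$-cost of its block at every level of the condition nesting, which relies on all administrative rules having cost $0$ and on the monotonicity supplied by the induction hypothesis. In the infinite case the corresponding subtlety is ensuring the descent is exhaustive—that whenever no block completes, divergence is indeed localised inside a single condition—for which the finiteness of the number of conditions per rule is essential.
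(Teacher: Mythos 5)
Your overall architecture coincides with the paper's own proof idea (extend to a $\bot$-pattern normal form, isolate rule blocks using the freezing effect of $\mu$, translate \eqref{xi1}/\eqref{xi6} directly and \eqref{xi2}--\eqref{xi5} blocks into successful/failed steps, and handle divergence via a completed-block-or-localised-divergence dichotomy). However, there is a genuine gap in how you set up the induction for the first statement. When rule \eqref{xi2} fires at position $p$, the matching substitution $\gamma$ in $f(\ell_1\gamma,\dots,\ell_n\gamma,\dots)$ need \emph{not} be in the image of $\zeta$: since the $\ell_j$ are constructor terms, $\gamma$ may bind variables to terms containing in-progress symbols $g_{i'}^{j'}$ created by earlier \eqref{xi2} steps inside the arguments. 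Those inner symbols are frozen by $\mu$ for the duration of the outer block and are only resolved \emph{after} it completes, while live \emph{copies} of them sit inside the condition argument $\xi_\top(a_1)\gamma$ and are rewritten there. You cannot commute their resolution to before the outer \eqref{xi2} without changing the condition evaluation ($\Xi(\RR)$ is not confluent here), so your claim that the condition sub-reductions ``start from $\zeta(\m{label}(a_j)\sigma)$'' and are strictly shorter sub-derivations is false in general: the labeled condition evaluation has to be re-based at a proper approximation of $\gamma$, which means splicing extra steps into the derivation to which the induction hypothesis is applied. An induction on the length of the reduction therefore does not go through.

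The paper's proof resolves exactly this by strengthening the induction statement (Lemma~\ref{lem:completeness:terminating}): from $s \to_{\Xi(\RR),\mu}^* t\sigma$ with cost $N$ one obtains a \emph{proper} substitution $\tau$, a labeled reduction $\zeta^-(s) \xrightharpoonup{}^* \zeta^-(t\tau)$ of cost $K$, and a residual reduction $t\tau \to_{\Xi(\RR),\mu}^* t\sigma$ of cost $M$ with $K+M \geqslant N$; the induction is well-founded on $({\to_{\Xi(\RR),\mu}} \cup {\rhd_\mu})^+$ rather than on length, precisely because the reductions fed back into the hypothesis (such as $\xi_\top(a_1)\delta \to^* \xi_\top(a_1)\gamma \to^* b_1\gamma'$, assembled from the residual plus the original condition evaluation) are not subderivations of the given one. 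The careful split of variables into those needed later (kept as the proper approximation $\delta$) and those not (eagerly normalised to $\bot$-patterns $u_x$) is also what makes the cost accounting $K+M \geqslant N$ work; your proposal does not supply a mechanism for either. Your treatment of the second statement is in the right spirit (the paper uses a minimal non-terminating active subterm rather than a K\"onig argument), but it too relies on translating the terminating argument evaluations and so inherits the same missing lemma.
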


\begin{proof}[Proof Idea]
First of all, we may safely assume that $t$ is in normal form;
if it is not, we
simply extend the reduction (which can only increase the cost). Due
to the context-sensitivity restrictions and the form of the rules
$\Xi(\RR)$, any such normal form $t$ must be a $\bot$-pattern.

\smallskip

Next we transform the reduction $\zeta(s) \to_{\Xi(\RR),\mu}^* t$
(resp.\ $\zeta(s) \to_{\Xi(\RR)} \dots$) to a reduction with at least
the same cost (resp.\ an infinite reduction) which is well-behaved in
the sense that for any rule application
$u[\ell\sigma]_p \to u[r\sigma]_p$, the substitution $\sigma$ can be
written as $\zeta \circ \tau$.
This is done by a reordering argument, either postponing steps in
subterms (if the result of the step is used later), or eagerly
evaluating the corresponding subterm to normal form.

\smallskip

Having a well-behaved reduction, steps using rules \eqref{xi1} can
be translated directly to unconditional $\xrightharpoonup{}$ steps,
and \eqref{xi6} translates to $\smash{\xrightharpoonup{\bot\,}}$. Combined
steps \eqref{xi2} followed by some \eqref{xi4} applications and ending
with \eqref{xi3} or \eqref{xi5} correspond to successful or failed
applications; the restrictions of context-sensitivity guarantee that
any reduction steps in between these rule applications are either at
independent positions---in which case they can be postponed---or
inside the argument for the condition in progress. Since $t$ is
assumed to be in normal form, all such combinations are either
completed---in which case they can be transformed---or give rise to
an infinite reduction inside the accessible argument of a $f_i^j$
symbol---in which case we can reduce with a
$\smash{\xrightharpoonup{\rhd\,}}$
step to a non-terminating term $\zeta(a_j)$. Either way we are done.
\end{proof}

We refer to Appendix~\ref{app:transformcomplete} for the
full and rather intricate proof.
Note that Theorems~\ref{thm:transformsound}
and~\ref{thm:transformcomplete} together with Lemma~\ref{zeta xi}
tell us that for terms $s$ in $\TT(\FF,\VV)$, the ``conditional
complexity cost'' of $\m{label}(s)$ is the same as the derivation
height of $\xi_\top(s)$. Consequently, complexity notions between
the original CTRS and the resulting context-sensitive TRS are
interchangeable, but only so long as we limit interest to starting
terms where the additional $m_f$ arguments of every defined symbol
$f$ are set to $\top$:
\begin{align*}
\m{cdc}(n) &= \max\,\{ \m{dh}(\xi_\top(t)) \mid |t| \leqslant n \} \\
\m{crc}(n) &= \max\,\{ \m{dh}(\xi_\top(t)) \mid \text{$|t| \leqslant n$
and $t$ is basic} \}
\end{align*}
What is more, we have gained an additional result: The transformation
does not merely relate \emph{complexity} notions, but conservatively
translates \emph{quasi-decreasingness} to termination.

\begin{cor}
A strong CCTRS $\RR$ is quasi-decreasing if and only if the
corresponding
context-sensitive TRS $\Xi(\RR)$ is terminating on all terms in the
set $\{ \zeta(s) \mid s \in \TT(\GG,\VV) \}$.
\end{cor}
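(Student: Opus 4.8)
The plan is to prove both implications contrapositively, so that each reduces to the ``infinite'' half (the second statement) of Theorems~\ref{thm:transformsound} and~\ref{thm:transformcomplete}, combined with Lemma~\ref{lem:quasidecreasing}. Recall from Section~\ref{sec:preliminaries} that $\RR$ is quasi-decreasing precisely when every term of $\TT(\FF,\VV)$ is, and that strong CCTRSs are semi-finite, so that Lemma~\ref{lem:quasidecreasing} is applicable throughout.

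For the direction ``$\Xi(\RR)$ terminating on the set implies $\RR$ quasi-decreasing'' I would reason as follows. Suppose $\RR$ is not quasi-decreasing and pick a non-quasi-decreasing term $t \in \TT(\FF,\VV)$. Lemma~\ref{lem:quasidecreasing} gives $\m{label}(t) \xrightharpoonup{\infty\,}$, and the second statement of Theorem~\ref{thm:transformsound} turns this into an infinite $(\Xi(\RR),\mu)$-reduction starting from $\zeta(\m{label}(t))$. As $\m{label}(t) \in \TT(\GG,\VV)$, the term $\zeta(\m{label}(t))$ lies in the set $\{\,\zeta(s) \mid s \in \TT(\GG,\VV)\,\}$, so $\Xi(\RR)$ is not terminating there. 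This half needs no groundness hypotheses.

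For the converse, ``$\RR$ quasi-decreasing implies $\Xi(\RR)$ terminating on the set'', I would again argue contrapositively: assume there is an infinite $(\Xi(\RR),\mu)$-reduction from some $\zeta(s)$ with $s \in \TT(\GG,\VV)$. The goal is to extract a non-quasi-decreasing term of $\RR$. The second statement of Theorem~\ref{thm:transformcomplete} yields $s \xrightharpoonup{\infty\,}$, and from this I would derive that $\m{erase}(s)$ is non-quasi-decreasing, which shows that $\RR$ is not quasi-decreasing.

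Two points will require care, and I expect the groundness gap to be the main obstacle. First, Theorem~\ref{thm:transformcomplete} is stated for \emph{ground} $s \in \TT(\GG)$, whereas the set quantifies over $\TT(\GG,\VV)$; I would bridge this by instantiating the variables of $s$ with ground terms and invoking Lemma~\ref{zeta xi} (so that $\zeta$ commutes with substitution), lifting the infinite reduction from $\zeta(s)$ to one from $\zeta(s\theta)$ with $s\theta$ ground, noting that activeness of the contracted positions is preserved under instantiation. Second, Lemma~\ref{lem:quasidecreasing} is phrased for terms of the shape $\m{label}(t)$, while here $s$ is an arbitrary labeled term whose labels may already be partly erased; this is harmless because the ``conversely'' part of the proof of Lemma~\ref{lem:quasidecreasing} only uses that $s$ is a labeled term: applying Theorem~\ref{lem:simulate}(\ref{lem:simulate:rl}) and Lemma~\ref{lem:simulate-substep}(\ref{lem:simulate-substep:rl}) to the infinite $(\xrightharpoonup{} \cup \rhrhd)$-sequence out of $s$ produces, at each step, $\m{erase}(s_i) \to^= \m{erase}(s_{i+1})$ or $\m{erase}(s_i) \quasistep \m{erase}(s_{i+1})$, and since every stuttering step strictly decreases $\|\cdot\|$ only finitely many can occur consecutively, leaving an infinite $(\to \cup \quasistep)$-sequence from $\m{erase}(s)$.
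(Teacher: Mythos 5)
Your proof is correct and follows essentially the route the paper intends: the corollary is stated there without an explicit proof, as an immediate consequence of the ``infinite'' halves of Theorems~\ref{thm:transformsound} and~\ref{thm:transformcomplete} together with Lemma~\ref{lem:quasidecreasing}. The two points you flag---bridging the groundness mismatch between $\TT(\GG,\VV)$ in the corollary and $\TT(\GG)$ in Theorem~\ref{thm:transformcomplete} via a grounding substitution and Lemma~\ref{zeta xi}, and observing that the ``conversely'' half of Lemma~\ref{lem:quasidecreasing} works for arbitrary labeled terms, not just those of the form $\m{label}(t)$---are genuine details the paper glosses over, and your handling of both is sound.
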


Thus, we can use the same transformation to prove
quasi-decreasingness of CCTRSs.
Although there are no complexity tools yet which take
context-sensitivity into account, we can obtain an upper bound
by simply ignoring the replacement map. Similarly, although existing
tools do not accommodate administrative rules
we can count all rule applications equally.
Since for every non-administrative step reducing a term
$f_R(\cdots)$ at a position $p$, at most (number of rules)
$\times$ (greatest number of conditions + 1)
administrative steps at position $p$
can be done, the difference is only a constant factor.
Moreover, these rules are an instance of \emph{relative rewriting}, for
which advanced complexity methods do exist. Thus, it
is likely that there will be direct tool support in the future.

\section{Interpretations in $\N$}
\label{sec:polynomial}

A common method to derive complexity bounds for a TRS is the use of
interpretations in $\N$. Such an interpretation $\I$ maps function
symbols of arity $n$ to functions from $\N^n$ to $\N$, giving a value
$\interpret{t}{}$ for every ground term $t$, which is shown to
decrease in each reduction step. The method is easily adapted to
support context-sensitive rewriting and administrative rules.

As we will consider interpretations on different domains later on,
we define interpretations in a general way. Let $\A$ be a set (such
as $\N$) and let $>$ be a well-founded order on this set, and
$\geqslant$ a quasi-order compatible with $>$ (i.e.,
${\geqslant \cdot >} \subseteq {>}$ and
${> \cdot \geqslant} \subseteq {>}$).
A function $f$ from $\A^n$ to $\A$ is
\emph{strictly monotone} in its $i$-th argument, if
$f(s_1,\dots,s_i,\dots,s_n) > f(s_1,\dots,s_i',\dots,s_n)$ whenever
$s_i > s_i'$ and \emph{weakly monotone} in its $i$-th argument,
provided that
$f(s_1,\dots,s_i,\dots,s_n) \geqslant f(s_1,\dots,s_i',\dots,s_n)$
whenever $s_i \geqslant s_i'$.

\begin{defi}
\label{def:interpretation}
A context-sensitive interpretation over $\A$ is a function $\I$
mapping each symbol $f \in \FF$ of arity $n$ to a function $\I_f$
from $\A^n$ to $\A$, such that $\I_f$ is strictly monotone
in its $i$-th argument for all $i \in \mu(f)$.
Given a valuation $\alpha$ mapping each variable to an element of
$\A$, the value $\interpret{t}{\alpha} \in \A$ of a term
$t$ is defined as usual:
\begin{itemize}
\item
$\interpret{x}{\alpha} = \alpha(x)$ for $x \in \VV$,
\smallskip
\item
$\interpret{f(\seq{s})}{\alpha} =
\I_f(\interpret{s_1}{\alpha}, \dots, \interpret{s_n}{\alpha})$
for $f \in \FF$.
\end{itemize}
We say $\I$ is compatible with a set of unconditional rules $\RR$ if
for all rules $\ell \to r \in \RR$ and valuations $\alpha$, 
$\interpret{\ell}{\alpha} > \interpret{r}{\alpha}$ if $\ell \to r \in
\RR$ is non-administrative and
$\interpret{\ell}{\alpha} \geqslant \interpret{r}{\alpha}$ otherwise.
\end{defi}

We easily see that if $s \to_{\RR,\mu} t$ then
$\interpret{s}{\alpha} \geqslant \interpret{t}{\alpha}$, and
$\interpret{s}{\alpha} > \interpret{t}{\alpha}$ if the employed
rule is non-administrative. Consequently, if $\A = \N$, then
$\m{dh}(s,\to_{\RR,\mu}) \leqslant \interpret{s}{\alpha}$
for any valuation $\alpha$.
Having a derivation height for all terms, we can obtain the
derivational and runtime complexity of the original system. To take
advantage of the fact that we only need to consider terms
$\xi_\top(s)$, we can limit interest to ``$\top$-terms'': ground terms
which have the property that
$t_1 = \dots = t_{m_f} = \top$ and $\seq{s} \notin \{ \bot, \top \}$
for all subterms
$f(\seq{s},\seq[m_f]{t})$.
For runtime
complexity, we only have to consider basic $\top$-terms. We let
$|s|$ denote the number of function symbols in $s$
\emph{not counting $\top$}. Then $|s| = |\xi_\top(s)|$.

\begin{exa}
\label{ex:evenpolynomial}
Continuing Example~\ref{ex:Xi(even/odd)}, we define the following
interpretation over $\N$:
\begin{align*}
\I_\top &= 1
\qquad
\I_\bot = \I_{\m{true}} = \I_{\m{false}} = \I_{\m{0}} = 0
\qquad
\I_{\m{s}}(x) = x+1
\\
\I_{\m{even}}(x,u,v,w) &= \I_{\m{odd}}(x,u,v,w) = 1 + x + v \cdot 3^x
  + w \cdot 3^x
\\
\I_{\m{even}_2^1}(x,u,v,w) &= \I_{\m{odd}_2^1}(x,u,v,w) = 1 + x +
  v + w \cdot 3^x
\\
\I_{\m{even}_3^1}(x,u,v,w) &= \I_{\m{odd}_3^1}(x,u,v,w) = 1 + x +
  v \cdot 3^x + w
\end{align*}
One easily checks that $\I$ satisfies the required monotonicity
constraints: $\I_{\m{s}}$ is monotone in its only argument,
$\I_{\m{even}}$ and $\I_{\m{odd}}$ are monotone in $x$, while
$\I_{\m{even}_2^1}$, $\I_{\m{odd}_2^1}$ are monotone in $v$ and
$\I_{\m{even}_3^1}$, $\I_{\m{odd}_3^1}$ in $w$.
Moreover, all rules in $\Xi(\RR_{\m{even}})$ are oriented as required.
For example, the rules generated by the unconditional rule
\eqref{eo1} give the following obligations:
\begin{itemize}[label=($1_\rho$)]
\item[($1_\rho$)]
$\interpret{\m{even}(\m{0},\top,y,z)}{} =
  1 + 0 + y \cdot 3^0 + z \cdot 3^0 > 0 = \interpret{\m{true}}{}$,
\smallskip
\item[($6_\rho$)]
$\interpret{\m{even}(\star_1,\top,y,z)}{} =
  1 + \varphi + y \cdot 3^\varphi + z \cdot 3^\varphi \geqslant
  1 + \varphi + y \cdot 3^\varphi + z \cdot 3^\varphi =
  \interpret{\m{even}(\star_1,\bot,y,z)}{}$ with
$\varphi = \interpret{\star_1}{}$.
\end{itemize}
The rules corresponding to the unconditional $\m{odd}$ rule
\eqref{eo4} give
the same inequalities. As for the other four rules, their
translations and interpretations are all very similar, so we will
show only the interpretations and proof obligations for rule
\eqref{eo2}:
\begin{itemize}[label=($1_\rho$)]
\item[($2_\rho$)]
$\interpret{\m{even}(\m{s}(x),y,\top,z)}{} =
  2 + x + 3^{x+1} + z \cdot 3^{x+1} \geqslant 2 + x + (1 + x + 2 \cdot
  3^x) + z \cdot 3^{x+1} = \interpret{\m{even}_2^1(\m{s}(x),y,
  \m{odd}(x,\top,\top,\top),z)}{}$,
which follows from
$3^{x+1} = 3^x + 2 \cdot 3^x \geqslant (1 + x) + 2 \cdot 3^x$,
\smallskip
\item[($3_\rho$)]
$\interpret{\m{even_2^1}(\m{s}(x),y,\m{true},z)}{}
  = 2 + x + 0 + z \cdot 3^{x+1} > 0 = \interpret{\m{true}}{}$,
\smallskip
\item[($5_\rho$)]
$\interpret{\m{even_2^1}(s(x),y,\star_2,z)}{} =
  2 + x + \varphi + z \cdot 3^{x+1} \geqslant 2 + x + 0 + z \cdot
  3^{x+1} = \interpret{\m{even_2^1}(s(x),y,\bot,z)}{}$
with $\varphi = \interpret{\star_2}{} \geqslant 0$,
\smallskip
\item[($6_\rho$)]
$\interpret{\m{even}(\star_3,y,\top,z)}{} =
  1 + \varphi + 3^\varphi + z \cdot 3^\varphi \geqslant
  1 + \varphi + z \cdot 3^\varphi =
\interpret{\m{even}(\star_3,y,\bot,z)}{}$.
\end{itemize}
\medskip
Now, towards runtime complexity, we observe that for all ground
constructor terms $s$ with $|s| \leqslant n$ we also have
$\interpret{s}{} \leqslant n$
as $\I_f(\seq[m]{x}) \leqslant x_1 + \dots + x_m + 1$ for all
constructor symbols $f$. Therefore, the conditional runtime
complexity $\m{crc}_{\RR_{\m{even}}}(n)$ is bounded by $\OO(3^n)$:
\begin{align*}
&\max( \{ \interpret{f(\seq[m]{s},\top,\dots,\top)}{} \mid
\text{$f \in \FF_\DD$ and $\seq[m]{s}$ are ground constructor} \\[-.5ex]
&\phantom{\max( \{ \interpret{f(\seq[m]{s},\top,\dots,\top)}{} \mid {}}
\text{terms with $|s_1| + \dots + |s_m| < n$} \} ) \\
{} \leqslant {} &\max( \{ \I_f(\seq[m]{x},1,\dots,1) \mid
\text{$f \in \FF_\DD$ and $x_1 + x_2 + x_3 + x_4 < n$} \} ) \\
{} = {} &\max( \{ 1 + x + 2 \cdot 3^x \mid x < n \} ) =
n + 2 \cdot 3^{n-1} \leqslant 3^n \quad \text{for $n \geqslant 1$}
\end{align*}
As to derivational complexity, we observe that
$\interpret{t}{} \leqslant {}^n3$ (tetration,%
\footnote{Tetration is the next hyperoperation after exponentiation,
defined as iterated exponentiation.}
or $3 \mathop{\uparrow\uparrow} n$ in Knuth's up-arrow notation)
when $t$ is an arbitrary ground $\top$-term of size $n$. 
\end{exa}

To obtain a more elementary bound we will need more sophisticated
methods, for instance assigning a compatible sort system and using
the fact that all terms of sort $\m{int}$ are necessarily constructor
terms. A method based on separating size and space complexity is
discussed in Section~\ref{sec:splitsize}.

The interpretations in Example~\ref{ex:evenpolynomial} may appear somewhat
arbitrary, but in fact there is a recipe that we can most likely
apply to many TRSs obtained from CCTRSs using
Definitions~\ref{def:contextsignature} and~\ref{def:transformrules}.
The idea is to define the
interpretation $\I$ as an extension of a ``basic'' interpretation $\J$
over $\N$ with a fixed way of handling the additional arguments.

\begin{defi}[Recipe \refstepcounter{recipe}\label{def:recipeA}%
\therecipe]
Given
\begin{itemize}
\item
a strictly monotone interpretation function
$\J_f^0\colon \N^n \to \N$
for every symbol $f$ of arity $n$ in the
original signature $\FF$,
\smallskip
\item
weakly monotone interpretation functions
$\J_f^1, \dots, \J_f^{m_f}\colon \N^n \to \N$
for every $f \in \FF_\DD$,
\item
interpretation functions
$\J_{f,i}^1, \dots, \J_{f,i}^k$
with $\J_{f,i}^j\colon \N^{n + j} \to \N$ that are strictly monotone in
their last argument position ($n + j$),
for each rule $\rho_i \in \RR{\restriction}f$ with $k > 0$
conditions,
\end{itemize}
\noindent
we construct an interpretation $\I$ for $\HH$
as follows:
$\I_\top = 1$ and $\I_\bot = 0$,
$
\I_f(\seq{x}) = \J_f^0(\seq{x})
$
for every $f \in \FF_\CC$ of arity $n$,
\[
\I_f(\seq{x},\seq[m_f]{c}) =
\J_f^0(\seq{x}) + \sum_{k = 1}^{m_f} c_k \cdot \J_f^k(\seq{x})
\]
for every $f \in \FF_\DD$ of arity $n$, and finally
\begin{gather*}
\I_{f_i^j}(\seq{x},\seq[i-1]{c},\seq[j]{y},c_{i+1},\dots,c_{m_f}) =
\\
\J_f^0(\seq{x})
+ \J_{f,i}^j(\seq{x},\seq[j]{y}) +
\sum_{k=1,\,k \neq i}^{m_f} c_k \cdot \J_f^k(\seq{x})
\end{gather*}
for every symbol $f_i^j$.
\end{defi}

Using the interpretation of Recipe~\ref{def:recipeA} for the rules in
Definition~\ref{def:transformrules},
the inequalities we obtain can be greatly simplified, and in many
cases removed.

\begin{defi}
\label{compatibility constraints}
The \emph{compatibility constraints} for $\J$ comprise the following
inequalities, for every rule
$\rho_i\colon f(\seq{\ell}) \to r \Leftarrow a_1 \approx b_1, \dots,
a_k \approx b_k$ in the \emph{original}
system $\RR$:
\begin{itemize}[label=($1_\rho$)]
\item[($1_\rho$)]
$\J_f^0(\overrightarrow{\interpret{\ell}{\alpha}}) +
\J_f^i(\overrightarrow{\interpret{\ell}{\alpha}})
> \interpret{\xi_\top(r)}{\alpha}$ \ if $ k = 0$,
\smallskip
\item[($2_\rho$)]
$\J_f^i(\overrightarrow{\interpret{\ell}{\alpha}})
\geqslant \J_{f,i}^1(\overrightarrow{\interpret{\ell}{\alpha}},
\interpret{\xi_\top(a_1)}{\alpha})$ \ if $k > 0$,
\smallskip
\item[($3_\rho$)]
$\J_f^0(\overrightarrow{\interpret{\ell}{\alpha}}) +
\J_{f,i}^k(\overrightarrow{\interpret{\ell}{\alpha}},
\interpret{b_1}{\alpha},\dots,\interpret{b_k}{\alpha})
> \interpret{\xi_\top(r)}{\alpha}$,
\smallskip
\item[($4_\rho$)]
$\J_{f,i}^j(\overrightarrow{\interpret{\ell}{\alpha}},
\interpret{b_1}{\alpha},\dots,\interpret{b_j}{\alpha})
\geqslant
\J_{f,i}^{j+1}(\overrightarrow{\interpret{\ell}{\alpha}},
\interpret{b_1}{\alpha},\dots,\interpret{b_j}{\alpha},
\interpret{\xi_\top(a_{j+1})}{\alpha})$ \ for $1 \leqslant j < k$.
\end{itemize}
Here $\overrightarrow{\interpret{\ell}{\alpha}}$ denotes the
sequence $\interpret{\ell_1}{\alpha},\dots,\interpret{\ell_n}{\alpha}$.
\noindent
\end{defi}

\begin{lem}
\label{lem:recipe1}
The interpretation $\I$ from Recipe~\ref{def:recipeA} is a
context-sensitive interpretation for $(\HH,\mu)$.
If its interpretation functions satisfy the compatibility
constraints then $\I$ is compatible with $\HH$, so
\begin{align*}
\m{cdc}_\RR(n) &= \max \{ \interpret{\xi_\top(t)}{} \mid
\text{$t \in \TT(\FF)$ and $|t| \leqslant n$} \} \\
\m{crc}_\RR(n) &= \max \{ \interpret{\xi_\top(t)}{} \mid
\text{$t \in \TT(\FF)$, $|t| \leqslant n$, and $t$ is basic} \}
\end{align*}
Moreover,
\[
\interpret{\xi_\top(f(\seq{t}))}{\alpha} =
\sum_{i=0}^{m_f} \J_f^i(\interpret{\xi_\top(t_1)}{\alpha},\dots,
\interpret{\xi_\top(t_n)}{\alpha})
\]
\end{lem}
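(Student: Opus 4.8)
The plan is to establish the three assertions separately---monotonicity, the decomposition identity, and compatibility---and then assemble the complexity relations from results already proved. First I would verify that $\I$ is a context-sensitive interpretation, i.e.\ that $\I_f$ is strictly monotone in each active argument. For a constructor $f$ we have $\I_f = \J_f^0$ and $\mu(f) = \{1,\dots,n\}$, so this is immediate from strict monotonicity of $\J_f^0$. For a defined $f$ the active positions are again $\{1,\dots,n\}$, and raising some $x_l$ strictly increases $\J_f^0(\vec{x})$ while each summand $c_k\cdot\J_f^k(\vec{x})$ cannot decrease (as $c_k\geqslant 0$ and $\J_f^k$ is weakly monotone), giving a strict increase. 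For $f_i^j$ the unique active position $n+i+j-1$ is precisely the slot occupied by $y_j$ in the argument list $\seq{x},\seq[i-1]{c},\seq[j]{y},c_{i+1},\dots,c_{m_f}$, and $\J_{f,i}^j$ is strictly monotone in its last argument while the other summands of $\I_{f_i^j}$ are independent of $y_j$; hence strict monotonicity holds.

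The decomposition identity is then a direct computation: since $f$ is defined, $\xi_\top(f(\seq{t})) = f(\xi_\top(t_1),\dots,\xi_\top(t_n),\top,\dots,\top)$ with $m_f$ copies of $\top$, and evaluating with $\I_\top = 1$ gives $\J_f^0(\vec{X}) + \sum_{k=1}^{m_f}\J_f^k(\vec{X}) = \sum_{i=0}^{m_f}\J_f^i(\vec{X})$, where $\vec{X}$ abbreviates $\interpret{\xi_\top(t_1)}{\alpha},\dots,\interpret{\xi_\top(t_n)}{\alpha}$ (for a constructor the sum collapses to $\J_f^0$). For compatibility I would run through the rule schemes of Definition~\ref{def:transformrules}, in each case substituting the recipe interpretation and cancelling the common term $\sum_{k\neq i}c_k\cdot\J_f^k(\overrightarrow{\interpret{\ell}{\alpha}})$ that occurs unchanged on both sides. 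For the non-administrative rules \eqref{xi1} and \eqref{xi3} the surviving strict inequalities are exactly constraints $(1_\rho)$ and $(3_\rho)$, where the cancelled sum, being nonnegative, only helps; for the administrative rules \eqref{xi2} and \eqref{xi4} the surviving weak inequalities are exactly $(2_\rho)$ and $(4_\rho)$.

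The remaining schemes \eqref{xi5} and \eqref{xi6} carry no compatibility constraint and are handled by nonnegativity of the interpretation functions. For \eqref{xi5} the two sides differ only by the extra left-hand summand $\J_{f,i}^j(\overrightarrow{\interpret{\ell}{\alpha}},\interpret{b_1}{\alpha},\dots,\interpret{v}{\alpha})$, which is $\geqslant 0$ because $\J_{f,i}^j$ has codomain $\N$; for \eqref{xi6} the left minus the right equals $(\I_\top-\I_\bot)\cdot\J_f^i(\vec{Y}) = \J_f^i(\vec{Y})\geqslant 0$. In both cases the required weak decrease holds, completing compatibility. I expect the main obstacle to be the bookkeeping around the $\langle\cdot\rangle[\cdot]_i$ notation and the active index $n+i+j-1$: one must track exactly which slot each $\ell_p$, $b_q$, anti-pattern $v$, and label $c_k$ occupies so that the cancellations line up and $\I_{f_i^j}$ is read off with the correct grouping.

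Finally, once compatibility is in hand, the standard interpretation argument recalled after Definition~\ref{def:interpretation} gives $\m{dh}(\xi_\top(t))\leqslant\interpret{\xi_\top(t)}{}$ over $\N$. Combining this bound with the exact characterizations $\m{cdc}_\RR(n) = \max\{\m{dh}(\xi_\top(t)) \mid |t|\leqslant n\}$ and its basic-term analogue for $\m{crc}_\RR$---obtained from Theorems~\ref{thm:transformsound} and~\ref{thm:transformcomplete} together with Lemma~\ref{zeta xi}---yields the two displayed complexity relations.
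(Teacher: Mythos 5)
Your proof is correct and follows essentially the same route as the paper's: monotonicity is checked case by case, the decomposition identity is unfolded from the definitions, compatibility is obtained by cancelling the common $\sum_{k\neq i}c_k\cdot\J_f^k$ summand and invoking nonnegativity for the schemes \eqref{xi5} and \eqref{xi6}, and the complexity relations follow from $\m{dh}(s,\to_{\Xi(\RR),\mu}) \leqslant \interpret{s}{}$ together with the observations at the end of Section~\ref{sec:transformation}. The paper states these steps more tersely; your write-up simply supplies the details it leaves implicit.
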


\begin{proof}
It is not hard to see that $\I$ satisfies the monotonicity
requirements of Definition~\ref{def:interpretation}. Hence it is
a context-sensitive interpretation for $(\HH,\mu)$.
The statements on $\m{cdc}_\RR$ and $\m{crc}_\RR$ follow by
compatibility and the
observations at the end of Section~\ref{sec:transformation}, because
of the inequality
$\m{dh}(s,\to_{\Xi(\RR),\mu}) \leqslant \interpret{s}{}$.
The final equality claim is obtained by writing out definitions.
For the compatibility claim, note that rules obtained from clause
\eqref{xi6} are obviously oriented as $\interpret{\bot}{} = 0$.
Compatibility is also satisfied for rules obtained from clause
\eqref{xi5}, as
\[
\J_{f,i}^j(\seq{s},\seq[j]{t}) \geqslant
\interpret{\bot}{\alpha} \cdot \J_f^i(\seq{s}) = 0
\]
always holds. The requirements for the other rules follow from the
compatibility constraints, by expanding the inequality
($\interpret{\ell}{\alpha} \geqslant \interpret{r}{\alpha}$ or
$\interpret{\ell}{\alpha} > \interpret{r}{\alpha}$) and removing
unhelpful terms on the left.
For instance, rules obtained from \eqref{xi1} impose the inequality
\[
\J_f^0(\overrightarrow{\interpret{\ell}{\alpha}})
+ \J_f^i(\overrightarrow{\interpret{\ell}{\alpha}})
+ \sum_{k=1,k\neq i}^{m_f} x_k \cdot
\J_f^k(\overrightarrow{\interpret{\ell}{\alpha}})
> \interpret{\xi_\top(r)}{\alpha}
\]
which follows from clause ($1_\rho$) in
Definition~\ref{compatibility constraints}; we
omitted the summation because the $x_i$ do not appear on the right,
and could well be $0$.
\end{proof}

By the final part of Lemma~\ref{lem:recipe1}, which recursively
defines $\interpret{\xi_\top(f(\seq{t}))}{\alpha}$ purely in 
terms of $\J$, we can obtain bounds on derivation heights without ever
calculating $\xi_\top(t)$. Thus, we do not even need to consider the
labeled or translated systems.

\begin{exa}\label{exa:recipe1}
To demonstrate the use of the recipe, recall the CCTRS from Example
\ref{ex:poschoice}:
\begin{xalignat*}{2}
\m{f}(x) &\to x &
\m{g}(x) &\to \m{a} \,\Leftarrow\, x \approx \m{b}
\end{xalignat*}
The recipe gives the following proof obligations:
\begin{align*}
\J_{\m{f}}^0(x) + \J_{\m{f}}^1(x) &> x &
\J_{\m{g}}^1(x) &\geqslant \J_{\m{g},1}^1(x,x) \\
& & \J_{\m{g}}^0(x) + \J_{\m{g},1}^1(x,\J_{\m{b}}) &> \J_{\m{a}}
\end{align*}
Here, $\J_{\m{f}}^0$ and $\J_{\m{g}}^0$ must be strictly monotone in
their first argument, $\J_{\m{f}}^1$ and $\J_{\m{g}}^1$ weakly
monotone, and $\J_{\m{g},1}^1$ must be strictly monotone in its
\emph{second} argument. These monotonicity requirements are satisfied
by choosing
\begin{align*}
\J_{\m{a}} &= 0 &
\J_{\m{f}}^0(x) &= x &
\J_{\m{g}}^0(x) &= x &
\J_{\m{g},1}^1(x,y) &= y \\
\J_{\m{b}} &= 1 &
\J_{\m{f}}^1(x) &= 1 &
\J_{\m{g}}^1(x) &= x
\end{align*}
With these interpretations, the
proof obligations are simplified to
\begin{align*}
x + 1 &> x &
x &\geqslant x \\
& & x + 1 &> 0
\end{align*}
and obviously satisfied.
\end{exa}

In order to bound the derivational complexity in
Example~\ref{exa:recipe1}, we make the following general observation.

\begin{lem}
\label{general}
If for every symbol $h$ of arity $n$ in some strong CCTRS we have
\[
\J_h^0(\seq{x}) + \dots +
\J_h^{m_h}(\seq{x}) \leqslant K \cdot (x_1 + \dots + x_n) + M
\]
then
$\interpret{\xi_\top(s)}{} \leqslant M \cdot (K^0 + \dots + K^{|s|-1})$
for all ground terms $s$.
\end{lem}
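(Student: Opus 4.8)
The plan is to convert the hypothesis into a linear recurrence on interpreted values and then solve that recurrence by structural induction, isolating the genuinely arithmetic part as a self-contained claim.

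First I would fuse the two available facts. For a ground term $s = f(\seq{t})$ the final equality of Lemma~\ref{lem:recipe1} reads
\[
\interpret{\xi_\top(s)}{} = \sum_{i=0}^{m_f} \J_f^i\bigl(\interpret{\xi_\top(t_1)}{},\dots,\interpret{\xi_\top(t_n)}{}\bigr),
\]
so instantiating the hypothesis at $h = f$ with $x_j = \interpret{\xi_\top(t_j)}{}$ gives the recurrence
\[
\interpret{\xi_\top(s)}{} \leqslant K \cdot \sum_{j=1}^n \interpret{\xi_\top(t_j)}{} + M,
\]
which for a constant ($n = 0$) degenerates to $\interpret{\xi_\top(s)}{} \leqslant M$. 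Abbreviating $Q(m) = K^0 + \dots + K^{m-1}$, the target is exactly $\interpret{\xi_\top(s)}{} \leqslant M \cdot Q(|s|)$.

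Next I would prove this bound by induction on $|s|$. The base case (a constant, $|s| = 1$) is immediate since $Q(1) = 1$ and the recurrence gives $\leqslant M$. For the step $s = f(\seq{t})$ with $|s| = 1 + \sum_j |t_j|$, the induction hypothesis supplies $\interpret{\xi_\top(t_j)}{} \leqslant M \cdot Q(|t_j|)$; plugging these into the recurrence and factoring out $M \geqslant 0$ reduces the goal to the purely arithmetic inequality
\[
K \cdot \sum_{j=1}^n Q(|t_j|) + 1 \leqslant Q(|s|).
\]

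The main obstacle is this arithmetic inequality, which I would state as a separate claim for any positive integers $m_1,\dots,m_n$ and prove by induction on $n$ from two one-line identities: the shift identity $K \cdot Q(m) + 1 = Q(m+1)$ and the splitting identity $Q(a+b) = Q(a) + K^a \cdot Q(b)$, both read off directly from the geometric sum. The cases $n \leqslant 1$ follow from the shift identity with equality. In the step I would peel off the last summand, bound $K \cdot \sum_{j=1}^{n} Q(m_j) + 1$ by $Q\bigl(1 + \sum_{j=1}^{n} m_j\bigr)$ via the inner hypothesis, and then apply the splitting identity with $a = 1 + \sum_{j=1}^{n} m_j$ and $b = m_{n+1}$; the residual obligation collapses to $K \cdot Q(m_{n+1}) \leqslant K^{a} \cdot Q(m_{n+1})$. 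This holds because $K^{a} \geqslant K$ for $a \geqslant 1$ — the one spot where I use that $K$ is a natural number (the inequality is trivial when $K = 0$ and clear when $K \geqslant 1$). Re-multiplying by $M$ then closes the outer induction.
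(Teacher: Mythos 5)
Your proposal is correct and follows essentially the same route as the paper: induction on $|s|$, turning the hypothesis into the recurrence $\interpret{\xi_\top(s)}{} \leqslant K \cdot \sum_j \interpret{\xi_\top(t_j)}{} + M$ via the final equality of Lemma~\ref{lem:recipe1} and then closing with a geometric-sum computation. The only difference is presentational: you isolate the inequality $K \cdot \sum_j Q(m_j) + 1 \leqslant Q(1 + \sum_j m_j)$ as a separate claim with its own induction on $n$ (carefully covering $K = 0$ via $K \leqslant K^a$), whereas the paper compresses exactly this multi-argument summation step into one displayed line, so your version is, if anything, the more watertight rendering of the same argument.
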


Recall that $m_h = 0$ for constructor symbols, so the above
requirement is well-defined.

\begin{proof}
We use
induction on $|s|$. If $|s| = 1$ then $s$ is a constant and
\[
\interpret{\xi_\top(s)}{} = \J_s^0 + \dots + \J_s^{m_s} \leqslant
K \cdot 0 + M = M = M \cdot K^0
\]
If $|s| = m+1$ then
$s = h(\seq{t})$ with $|t_1| + \dots + |t_n| = m$ and
\begin{align*}
\interpret{\xi_\top(s)}{} &= \sum_{i = 0}^{m_f}
\J_h^i(\interpret{\xi_\top(t_1)}{},\dots,\interpret{\xi_\top(t_n)}{}) \\
&\leqslant K \cdot
(\interpret{\xi_\top(t_1)}{} + \dots + \interpret{\xi_\top(t_n)}{}) + M \\
&\leqslant K \cdot M \cdot (K^0 + \dots + K^m) + M \\
&= M \cdot (K^1 + \dots + K^{m+1}) + M \\
&= M \cdot (K^0 + \dots + K^{m+1})
\end{align*}\vspace{-31 pt}\enlargethispage{\baselineskip}

\end{proof}\smallskip

\noindent Since, for $K \geqslant 2$, we have $K^0 + \dots + K^m \leqslant K^{m+1}$,
a linear interpretation satisfying the premise of
Lemma~\ref{general}
gives $\m{cdc}_\RR(n) = \OO(K^n)$ by Lemma~\ref{lem:recipe1}.
With this understanding, we can complete the example.

\noindent
\textbf{Example~\ref{exa:recipe1} (continued).}\ \ 
We thus obtain an exponential $\OO(2^n)$ bound.
This may not seem like an impressive result, but in fact, this bound
is tight! Consider a term $\m{g}^n(\m{b})$. To evaluate this term
to normal form, we obtain a cost of $2^{n-1}$ if we simply
evaluate outside-in:
\begin{itemize}
\item
$\m{g}_{\{\eqref{fg2}\}}(\m{b}) \mathrel{\xrightharpoonup{}} \m{a}$
with cost $1 = 2^{1-1}$,
\smallskip
\item
$\m{g}_{\{\eqref{fg2}\}}(\m{g}_{\{\eqref{fg2}\}}^n(\m{b}))
\mathrel{\xrightharpoonup{}}
\m{g}_\varnothing(\m{g}_{\{\eqref{fg2}\}}^n(\m{b}))$ with cost
$2^{n-1}$ (the cost to reduce the left-hand side of the condition,
$\m{g}_{\{\eqref{fg2}\}}^n(\m{b})$, to normal form), and
$\m{g}_\varnothing(\m{g}_{\{\eqref{fg2}\}}^n(\m{b}))$ reduces to
normal form with cost $2^{n-1}$ (the cost to evaluate the subterm),
amounting to a total cost of $2^{n-1} + 2^{n-1} = 2^n$.
\end{itemize}
However, we do have
\begin{align*}
\m{dh}(\xi_\top(\m{f}^n(\m{g}(\m{f}^m(\m{a})))))
&\leqslant \interpret{\xi_\top(\m{f}^n(\m{g}(\m{f}^m(\m{a}))))}{} \\
&= \interpret{\xi_\top(\m{g}(\m{f}^m(\m{a})))}{} + n \\
&= 2 \cdot \interpret{\xi_\top(\m{f}^m(\m{a}))}{} + n \\
&= 2 \cdot m + n
\end{align*}
which gives the expected linear bound for the collection of
terms considered in Example~\ref{ex:poschoice}.

\section{Using Context-Sensitivity to Improve Runtime Complexity
Bounds}
\label{sec:ucs}

As observed before, the actual runtime complexity for the system
in Example~\ref{ex:evenpolynomial} is $\OO(2^n)$. In order to obtain this
more realistic bound, we will need more sophisticated methods than
simply polynomial interpretations. This is not a problem specific to
our transformed systems $\Xi(\RR)$; rather, giving tight
complexity bounds is a hard problem, which has been studied
extensively in the literature. Consequently, many different
complexity methods have been developed (e.g.\
matrix interpretations~\cite{MSW08,W10,MMNWZ11},
arctic interpretations~\cite{KW09},
polynomial path orders~\cite{AM08,AM13},
match bounds~\cite{GHWZ07}, dependency tuples~\cite{NEG13})
and it seems likely that most of these methods can easily be adapted to
context-sensitive and relative rewriting.

In order to demonstrate that the systems we obtain using our
transformation are not inherently problematic, we will show two
improvements which allow us to obtain better bounds. The first one,
which is treated in this section, employs a technique from~\cite{HM14}.

\begin{defi}
A replacement map $\NU$ is \emph{usable} for a strong CCTRS
$(\FF,\RR)$ if for every rewrite rule
$b_0 \to a_{k+1} \Leftarrow a_1 \approx b_1, \dots, a_k \approx b_k$ in
$\RR$ and all $1 \leqslant i \leqslant k+1$ and $p \in \Pos(a_i)$ we have
$p \in \Pos_{\NU}(a_i)$ if either
$p \in \Pos_{\FF_\DD}(a_i)$, or $p$ is a variable position in $a_i$
and there exist $0 \leqslant j < i$
and $q \in \Pos_{\NU}(b_j)$ such that $(a_i)|_p = (b_j)|_q$.
\end{defi}

Note that the requirement on $p \in \Pos_{\NU}(a_i)$
is a sufficient condition only;
it is allowed for $\Pos_{\NU}(a_i)$ to contain also $p$ which satisfy
neither premise. Therefore, the full replacement map, with
$\nu(f) = \{ 1, \dots, n \}$ for $f$ of arity $n$, is always usable.

\begin{exa}
\label{ex:fibonacci2}
We derive a usable replacement map $\NU$ for the CCTRS
$\RR_{\m{fib}}$ of Example~\ref{ex:fibonacci}:
\begin{xalignat*}{2}
\m{0} + y &\to y &
\m{fib}(\m{0}) &\to \langle \m{0},\m{s}(\m{0}) \rangle \\
\m{s}(x) + y &\to \m{s}(x + y) &
\m{fib}(\m{s}(x)) &\to \langle z, w \rangle
~\Leftarrow~ \m{fib}(x) \approx \langle y, z \rangle,~ y + z \approx w
\end{xalignat*}
From the rule $\m{s}(x) + y \to \m{s}(x + y)$ we obtain
$1 \in \NU(\m{s})$. The other constraints are obtained from the
conditional rule for $\m{fib}$. The variable $w$ appears at an active
(root) position in the right-hand side of a condition and also at
position $2$ in $\langle z, w \rangle$. Hence
we obtain $2 \in \NU(\langle \cdot , \cdot \rangle)$, which causes
the variable $z$ to appear at an active position in $\langle y, z \rangle$ 
and thus $2 \in \NU(+)$ and $1 \in \NU(\langle \cdot , \cdot \rangle)$.
The latter activates the variable $y$ in $\langle y, z \rangle$ and
thus we also need $1 \in \NU(+)$. There are no other demands and
hence the replacement map $\NU$ defined by $\NU(\m{s}) = \{ 1 \}$,
$\NU(\m{+}) = \NU(\langle \cdot , \cdot \rangle) = \{ 1, 2 \}$,
and $\NU(\m{fib}) = \varnothing$ is usable.
\end{exa}

\begin{defi}
Let $\NU$ be a usable replacement map for a strong CCTRS
$(\FF,\RR)$. Let
$\mu$ be the replacement map defined in
Definition~\ref{def:contextsignature} for the signature $\HH$. We
define a new replacement map $\mu\NU$ for $\HH$ as follows:
$\mu\NU(f) = \NU(f)$ for every $f \in \HH \cap \FF$ and
$\mu\NU(f) = \mu(f)$ for every $f \in \HH \setminus \FF$.
\end{defi}

\begin{thm}
\label{thm:mapreplace}
If $\NU$ is a usable replacement map for a strong CCTRS
$(\FF,\RR)$ then
$\m{crc}_{\RR}(n) \leqslant \m{rc}_{\Xi(\RR),\mu\NU}(n)$
for all $n \geqslant 0$.
\end{thm}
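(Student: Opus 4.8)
The plan is to reduce the claim to a comparison of derivation heights for the single term $\xi_\top(t)$ and then to show that passing from the replacement map $\mu$ to the more restrictive $\mu\NU$ costs nothing on the terms that actually arise. By the identity recorded at the end of Section~\ref{sec:transformation} (which combines Theorems~\ref{thm:transformsound} and~\ref{thm:transformcomplete} with Lemma~\ref{zeta xi}),
\[
\m{crc}_\RR(n) = \max\,\{ \m{dh}(\xi_\top(t),{\to_{\Xi(\RR),\mu}}) \mid \text{$t \in \TT(\FF)$, $|t| \leqslant n$, $t$ basic} \}.
\]
For basic $t$ the term $\xi_\top(t)$ is again basic (a defined root whose arguments are constructor terms, the extra slots being the constant $\top$) and has size at most $n$ (the added $\top$'s are not counted, consistently with the measure used for $\m{crc}$ and $\m{rc}$), so it is among the terms ranged over by $\m{rc}_{\Xi(\RR),\mu\NU}$. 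Hence it suffices to prove, for every basic $t$, that $\m{dh}(\xi_\top(t),{\to_{\Xi(\RR),\mu}}) \leqslant \m{dh}(\xi_\top(t),{\to_{\Xi(\RR),\mu\NU}})$. Since $\mu\NU(f) = \NU(f) \subseteq \{1,\dots,n\} = \mu(f)$ for every $f \in \FF$ and $\mu\NU$ agrees with $\mu$ on the auxiliary symbols $f_i^j$, the relation $\to_{\Xi(\RR),\mu\NU}$ is contained in $\to_{\Xi(\RR),\mu}$; the reverse inequality of derivation heights is therefore trivial, and the whole content lies in the displayed direction.

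I would obtain it from the stronger statement that on every term reachable from $\xi_\top(t)$ the two rewrite relations coincide: every reducible subterm occurring at a $\mu$-active position occurs in fact at a $\mu\NU$-active position. As a redex of $\Xi(\RR)$ is rooted at a defined or an $f_i^j$ symbol, and $\mu$, $\mu\NU$ differ only on the original arguments of $\FF$-symbols, this says precisely that no reduction available under $\mu$ is blocked under $\mu\NU$. Consequently the reduction graphs issuing from $\xi_\top(t)$ are identical, which yields equality of the two derivation heights at once and handles the non-terminating case uniformly, since an infinite $\mu$-reduction is then an infinite $\mu\NU$-reduction.

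The coincidence follows from an invariant maintained along reductions: every reducible subterm sitting at a $\mu$-active position sits at a $\mu\NU$-active position. The crucial auxiliary claim, proved by induction on the index of the condition under evaluation, is that any value bound to a pattern variable occurring at a $\NU$-inactive position of some $b_j$ is a normal form. This is where the two requirements on a usable map enter: by the first clause every defined symbol of a right-hand side $a_j$ or $r$ sits at a $\NU$-active position, so the only reducible material introduced by the rules \eqref{xi1}--\eqref{xi4} lands at active positions; and by the second clause any such reducible value that is forwarded --- copied from a $b_{j'}$-slot into a later condition (rule \eqref{xi4}) or into $\xi_\top(r)$ (rule \eqref{xi3}) --- again lands at an active position, because its source occurrence in $b_{j'}$ was active. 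Reducible values merely parked in a completed-condition slot cause no trouble, since those slots are inactive under both $\mu$ and $\mu\NU$ (only the current condition slot is active in an $f_i^j$), so such values are not $\mu$-redexes and the invariant never speaks about them.

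The main obstacle is making the auxiliary claim stable under reduction: one must track, through the evaluation of a condition $\xi_\top(a_j)\sigma$ in the active slot, that the correspondence between $\mu\NU$-active positions of the evolving reduct and the variable positions of $b_j$ is preserved, so that a surviving defined symbol can only match $b_j$ at a $\NU$-active variable position. This requires a careful induction showing that $\xi_\top(a_j)\sigma$ starts with reducible subterms only at $\mu\NU$-active positions --- using the first usability clause for the defined symbols of $a_j$ and the second, together with the induction hypothesis, for its variable positions --- and that reduction at active positions preserves this property until $b_j$ is matched. Once this bookkeeping is in place, steps using \eqref{xi1} translate to successful $\xrightharpoonup{}$ steps and \eqref{xi6} to $\xrightharpoonup{\bot\,}$ steps exactly as in the proof of Theorem~\ref{thm:transformsound}, and the cost is preserved throughout.
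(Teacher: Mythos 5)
Your proposal is correct and follows essentially the same route as the paper: reduce to showing that every $(\Xi(\RR),\mu)$-reduction issuing from a basic term is already a $(\Xi(\RR),\mu\NU)$-reduction, and establish this by an invariant maintained along the reduction, using the first usability clause for defined symbols introduced by $a_j$ and $r$ and the second for variables forwarded out of the $b_j$. The only presentational difference is that the paper packages your two-part invariant (reducible subterms at $\mu$-active positions are $\mu\NU$-active, plus the normal-form claim for values at $\NU$-inactive variable positions of the $b_j$) into a single syntactic statement $\Pos_{\HH_\DD}(t) \subseteq \Pos_{\NU'}(t)$ for an intermediate map $\NU'$ that additionally activates all condition slots of the $f_i^j$, which makes the case analysis slightly cleaner but is mathematically the same argument.
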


\begin{proof}
We define an intermediate replacement map $\NU'$ as follows:
$\NU'(f) = \NU(f)$ for every $f \in \HH \cap \FF$ and
$\NU'(f_i^j) = \NU(f) \cup \{ n+i, \dots, n+i+j-1 \}$ for every
$f_i^j \in \HH \setminus \FF$ such that the arity of $f$ in $\FF$
is $n$. It is not difficult to prove that
$\mu\NU(f) = \NU'(f) \cap \mu(f)$ for every $f \in \HH$.

We prove that $\Pos_{\HH_\DD}(t) \subseteq \Pos_{\NU'}(t)$ whenever
$s \to_{\Xi(\RR),\mu}^* t$ and $s$ is basic, by induction on
the length.
Since $\Pos_{\NU'}(t) \cap \Pos_\mu(t) = \Pos_{\mu\NU}(t)$,
this implies that any $(\Xi(\RR),\mu)$ reduction
sequence starting from a basic term is a reduction
sequence in $(\Xi(\RR),\mu\NU)$, and hence the statement of the
theorem follows from Theorem~\ref{thm:transformsound}.
The base case is obvious since
$\Pos_{\HH_\DD}(t) = \{ \epsilon \} \subseteq
\Pos_{\NU'}(s)$ if $t$ is basic. For the induction step we consider
\[
s \to_{\Xi(\RR),\mu}^* s' \to_{\Xi(\RR),\mu} t
\]
We obtain $\Pos_{\HH_\DD}(s') \subseteq \Pos_{\NU'}(s')$
from the induction hypothesis. Suppose the step from $s'$ to $t$ employs
the rule $u \to v$ from $\Xi(\RR)$ at position $p \in \Pos_\mu(s')$ with
substitution $\sigma$. We have
$p \in \Pos_{\HH_\DD}(s')$ and thus also $p \in \Pos_{\NU'}(s')$.
Since $s'|_p = u\sigma$ we also have
\begin{gather}
\label{property1}
\Pos_{\HH_\DD}(u\sigma) \subseteq \Pos_{\NU'}(u\sigma)
\end{gather}
Furthermore, because $t = s'[v\sigma]_p$,
$\Pos_{\HH_\DD}(t) \subseteq \Pos_{\NU'}(t)$ follows from
$\Pos_{\HH_\DD}(v\sigma) \subseteq \Pos_{\NU'}(v\sigma)$.
The latter inclusion we prove by a case analysis on $u \to v$.
Let $q \in \Pos_{\HH_\DD}(v\sigma)$.
\begin{itemize}[label=($1_\rho$)]
\item[\eqref{xi1}]
We have $u = f(\vec{\ell},\langle \seq[m_f]{x} \rangle[\top]_i)$
and $v = \xi_\top(r)$ with $\ell = f(\vec{\ell}) \to r$ a rule of $\RR$.
If $q \in \Pos_{\HH_\DD}(\xi_\top(r))$ then $q \in \Pos_{\FF_\DD}(r)$
and thus $q \in \Pos_{\NU}(r) = \Pos_{\NU'}(\xi_\top(r)) \subseteq
\Pos_{\NU'}(\xi_\top(r)\sigma)$ since $\NU$ is a usable replacement map.
Otherwise $q = q_1 q_2$ with $q_1 \in \Pos_\VV(\xi_\top(r)) = \Pos_\VV(r)$
and $q_2 \in \Pos_{\HH_\DD}(z\sigma)$ where $z = \xi_\top(r)|_{q_1}$.
Since $z \in \Var(r) \subseteq \Var(\ell)$, there exists a
position $q_3 \in \Pos_\VV(\ell) \subseteq \Pos_\VV(u)$ such
that $\ell|_{q_3} = u|_{q_3} = z$.
We have
$q_3 q_2 \in \Pos_{\HH_\DD}(u\sigma)$ and thus
$q_3 q_2 \in \Pos_{\NU'}(u\sigma)$ by \eqref{property1}.
Hence both $q_3 \in \Pos_{\NU'}(u)$ and
$q_2 \in \Pos_{\NU'}(z\sigma)$. Since
$q_3 \in \Pos_{\NU'}(u) = \Pos_{\NU}(\ell)$, we obtain
$q_1 \in \Pos_{\NU}(r) = \Pos_{\NU'}(r) = \Pos_{\NU'}(v)$
from the usability of $\NU$. Hence $q \in \Pos_{\NU'}(v\sigma)$
as desired.
\item[\eqref{xi2}]
We have $u = f(\vec{\ell},\langle \seq[m_f]{x} \rangle[\top]_i)$
and
$v = f_i^1(\vec{\ell},\langle \seq[m_f]{x} \rangle[\xi_\top(a_1)]_i)$.
Comparing $u\sigma$ and
$v\sigma$ and observing that $\NU'(f)$ and $\NU'(f_i^1)$ agree
on $\{ 1, \dots, n + m_f \} \setminus \{ n+i \}$, the only interesting
case is $q = (n+i)\,q'$ with $q' \in \Pos_{\HH_D}(\xi_\top(a_1)\sigma)$.
We distinguish two subcases.
If $q' \in \Pos_{\HH_\DD}(\xi_\top(a_1)) = \Pos_{\HH_\DD}(a_1)$ then
$q' \in \Pos_{\NU}(a_1) = \Pos_{\NU'}(\xi_\top(a_1))$ and,
since $n+i \in \NU'(f_i^1)$, $q \in \Pos_{\NU'}(v\sigma)$.
Otherwise
$q' = q_1 q_2$ with $q_1 \in \Pos_\VV(\xi_\top(a_1)) = \Pos_\VV(a_1)$
and $q_2 \in \Pos_{\HH_\DD}(z\sigma)$ where $z = (a_1)|_{q_1}$.
Sinze $z$ must occur in $\Var(\vec{\ell}) = \Var(b_0)$, we conclude
as in case \eqref{xi1}.
\item[\eqref{xi3}]
We have $u = f_i^k(\vec{\ell},\langle \seq[m_f]{x} \rangle[\seq[k]{b}]_i)$
and $v = \xi_\top(r)$.
Like in case \eqref{xi1} we distinguish two cases.
The case $q \in \Pos_{\HH_\DD}(\xi_\top(r))$ is dealt with as before.
Suppose $q = q_1 q_2$ with $q_1 \in \Pos_\VV(\xi_\top(r)) = \Pos_\VV(r)$
and let $z = r|_{q_1}$. The variable $z$ occurs in
$\seq{\ell}$ or $\seq[k]{b}$. The former is treated as before.
Suppose $z = (b_l)|{q_3}$ with $1 \leqslant l \leqslant k$.
We have $(n+i+l-1)\,q_3 q_2 \in \Pos_{\NU'}(u\sigma)$
according to \eqref{property1}.
Hence $q_3 \in \Pos_{\NU'}(b_l) = \Pos_{\NU}(b_l)$ and thus
$q_1 \in \Pos_{\NU}(r)$ by the usability of $\NU$. We obtain
$q \in \Pos_{\NU'}(v\sigma)$ as before.
\item[\eqref{xi4}]
We have
$u = f_i^j(\vec{\ell},\langle \seq[m_f]{x} \rangle[\seq[j]{b}]_i)$
and
$v = f_i^{j+1}(\vec{\ell},
\langle \seq[m_f]{x} \rangle[\seq[j]{b},\newline\xi_\top(a_{j+1}]_i)$.
Comparing $u\sigma$ and
$v\sigma$ as well as $\NU'(f_i^j)$ and $\NU'(f_i^{j+1})$ allows us
to focus on the interesting case: $q = (n+i+j)\,q'$ with
$q' \in \Pos_{\HH_D}(\xi_\top(a_{j+1})\sigma)$.
We obtain $q' \in \Pos_{\NU'}(\xi_\top(a_{j+1})\sigma)$
and thus $q \in \Pos_{\NU'}(v\sigma)$ by repeating the reasoning
performed in the preceding cases.
\item[\eqref{xi5}]
We have
$u = f_i^j(\vec{\ell},\langle \seq[m_f]{x} \rangle[\seq[j-1]{b},v']_i)$
and $v = f(\vec{\ell},\langle \seq[m_f]{x} \rangle[\bot]_i)$.
We distinguish three cases.
If $q = \epsilon$ then
obviously $q \in \Pos_{\NU'}(v\sigma)$.
Let $q = i' q'$. If $i' \in \{ 1, \dots, n \}$ then
$q \in \Pos_{\HH_\DD}(u\sigma)$ and thus
$q \in \Pos_{\NU'}(u\sigma)$ by \eqref{property1}.
Hence also $q \in \Pos_{\NU'}(v\sigma)$ since
$\NU'(f) = \NU(f) = \NU'(f_i^j) \cap \{ 1, \dots, n \}$.
In the remaining case we have
$i' \in \{ n+1, \dots, n+m_f \} \setminus \{ n+i \}$ and thus
$v\sigma|_q = \sigma(x_{i'-n})|_{q'}$.
However, this subterm appears in $u\sigma$ at a position
not in $\Pos_{\NU'}(u\sigma)$ and thus cannot contain
defined symbols according to \eqref{property1},
contradicting the assumption $q \in \Pos_{\HH_\DD}(v\sigma)$.
\item[\eqref{xi6}]
We have $u =
f(\langle \seq{y} \rangle[v']_j,\langle \seq[m_f]{x} \rangle[\top]_i)$
for some $v' \in \m{AP}(\ell_j)$ and $v =
f(\langle \seq{y} \rangle[v']_j,\langle \seq[m_f]{x} \rangle[\bot]_i)$.
In this case we obviously have
$\Pos_{\HH_\DD}(u\sigma) = \Pos_{\HH_\DD}(v\sigma)$
and
$\Pos_{\NU'}(u\sigma) = \Pos_{\NU'}(v\sigma)$.
Hence $q \in \Pos_{\NU'}(v\sigma)$ is a consequence of \eqref{property1}.
\qedhere
\end{itemize}
\end{proof}

\noindent Using Theorem~\ref{thm:transformcomplete},
the inequality in Theorem~\ref{thm:mapreplace} becomes
an equality if we restrict the terms to consider for
$\m{rc}_{\Xi(\RR),\mu\NU}(n)$ to those that correspond to
labeled basic terms.

Theorem~\ref{thm:mapreplace} is highly relevant when using
interpretations since the (strong or weak) monotonicity
requirements are only imposed on the active arguments of the
interpretation functions.

\begin{exa}
\label{ex:evenfinal}
For the CCTRS $\RR_\m{even}$ we can take the (empty) usable
replacement map $\NU(f) = \varnothing$ for all function symbols
because $\m{even}$ and $\m{odd}$ do not appear
below the root in the right-hand side or left-hand side of
a condition of any rule. This implies that $\I_\m{even}$ and
$\I_\m{odd}$ do not need to be monotone in their first arguments.
Hence we can simplify the interpretation of
Example~\ref{ex:evenpolynomial} to
\begin{align*}
\I_\top &= 1
\qquad
\I_\bot = \I_{\m{true}} = \I_{\m{false}} = \I_{\m{0}} = 0
\qquad
\I_{\m{s}}(x) = x+1
\\
\I_{\m{even}}(x,u,v,w) &= \I_{\m{odd}}(x,u,v,w) = 1 + v \cdot (2^x - 1)
  + w \cdot (2^x - 1)
\\
\I_{\m{even}_2^1}(x,u,v,w) &= \I_{\m{odd}_2^1}(x,u,v,w) = 1 +
  v + w \cdot (2^x - 1)
\\
\I_{\m{even}_3^1}(x,u,v,w) &= \I_{\m{odd}_3^1}(x,u,v,w) = 1 +
  v \cdot (2^x - 1) + w
\end{align*}
The rules are still oriented; for example,
rule ($2_2$) gives rise to the inequality
\[
1 + 2^{x+1} - 1 + z \cdot (2^{x+1} - 1)
\geqslant 1 + (1 + 2 \cdot (2^x - 1)) + z \cdot (2^{x+1} - 1)
\]
which holds because $2^{x+1} - 1 = 1 + 2 \cdot (2^x - 1)$.
The above interpretation induces a runtime complexity of $\OO(2^n)$.
This is a \emph{tight} bound, as we observed earlier.
\end{exa}

\begin{defi}[Recipe \refstepcounter{recipe}\label{def:recipeB}%
\therecipe: Extension for Runtime Complexity]
Recipe~\ref{def:recipeA} is altered as follows, assuming we are
given a usable replacement map $\NU$ for $(\FF,\RR)$.
Rather than demanding (strict or weak) monotonicity of the functions
$\J_f^i$ in \emph{all} arguments, we merely demand that
\begin{itemize}
\item
$\J_f^0$ is strictly monotone in the arguments in $\NU(f)$,
for all $f \in \FF$,
\item
$\J_f^i$ is weakly monotone in the arguments in $\NU(f)$,
for all $f \in \FF_\DD$ and $1 \leqslant i \leqslant m_f$,
\item
as before,
$\J_{f,i}^j$
is strictly monotone in argument $n + i + j - 1$,
where $n$ is the arity of $f \in \FF_\DD$.
\end{itemize}
Given $\J$, the definition of $\I$ remains the same.
\end{defi}

Recipe~\ref{def:recipeB} can be used like Recipe~\ref{def:recipeA},
but only for runtime complexity.

\begin{lem}
\label{lem:recipe2}
The interpretation $\I$ from Recipe~\ref{def:recipeB} is a
context-sensitive interpretation for $(\HH,\mu\NU)$.
If its interpretation functions satisfy the compatibility constraints
from Definition~\ref{compatibility constraints},
then $\I$ is compatible with $\HH$ and
\[\m{crc}_\RR(n) = \max \{ \interpret{\xi_\top(t)}{} \mid
\text{$t \in \TT(\FF)$, $|t| \leqslant n$, and $t$ is basic} \}\,.\]
Moreover,
\[
\interpret{\xi_\top(f(\seq{t}))}{\alpha} =
\sum_{i=0}^{m_f} \J_f^i(\interpret{\xi_\top(t_1)}{\alpha},\dots,
\interpret{\xi_\top(t_n)}{\alpha})
\]
\end{lem}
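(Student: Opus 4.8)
The plan is to follow the proof of Lemma~\ref{lem:recipe1} as closely as possible, since Recipe~\ref{def:recipeB} leaves the construction of $\I$ from the base functions $\J$ unchanged and only \emph{relaxes} their monotonicity requirements from all arguments to the active arguments in $\NU$. The genuinely new content is therefore twofold: verifying that these relaxed requirements still make $\I$ a context-sensitive interpretation, but now for the smaller replacement map $\mu\NU$ instead of $\mu$; and routing the complexity statement through Theorem~\ref{thm:mapreplace} rather than through the $\mu$-based observations at the end of Section~\ref{sec:transformation}.

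First I would check the monotonicity condition of Definition~\ref{def:interpretation} for $(\HH,\mu\NU)$, splitting on the two kinds of symbols. For $f \in \HH \cap \FF$ we have $\mu\NU(f) = \NU(f) \subseteq \{1,\dots,n\}$ and $\I_f(\seq{x},\seq[m_f]{c}) = \J_f^0(\seq{x}) + \sum_{k} c_k \cdot \J_f^k(\seq{x})$; strict monotonicity of $\J_f^0$ on the positions in $\NU(f)$, together with weak monotonicity of each $\J_f^k$ on those positions and $c_k \geqslant 0$, yields strict monotonicity of $\I_f$ in exactly the arguments of $\NU(f)$. For an auxiliary symbol $f_i^j \in \HH \setminus \FF$ we have $\mu\NU(f_i^j) = \mu(f_i^j) = \{n+i+j-1\}$, and the argument sitting at position $n+i+j-1$ is the last argument $y_j$ passed to $\J_{f,i}^j$, which Recipe~\ref{def:recipeB} still demands to be strictly monotone there. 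This step is where I expect the only real friction: not because it is deep, but because one must carefully align the argument layout of $f_i^j$ (counting the leading $\seq[i-1]{c}$, then $\seq[j]{y}$, then the trailing flags) with the strict-monotonicity clause for $\J_{f,i}^j$, and confirm that $\NU$ contributes nothing spurious on these auxiliary symbols.

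For compatibility I would simply reuse the argument of Lemma~\ref{lem:recipe1}. The rules of $\Xi(\RR)$ and the compatibility constraints of Definition~\ref{compatibility constraints} are identical to that setting, and rule orientation does not depend on the replacement map; so the orientation calculations carry over verbatim. In particular, rules from clause~\eqref{xi6} are oriented because $\interpret{\bot}{} = 0$, those from~\eqref{xi5} because $\J_{f,i}^j(\seq{x},\seq[j]{y}) \geqslant 0 = \interpret{\bot}{}\cdot\J_f^i(\seq{x})$, and the remaining rules follow by expanding the compatibility constraints and dropping the now-irrelevant summation terms on the left.

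Finally, for the complexity statement I would combine compatibility with the relevant reduction bound. Since $\I$ is a context-sensitive interpretation for $(\HH,\mu\NU)$ compatible with $\HH$, we get $\m{dh}(s,\to_{\Xi(\RR),\mu\NU}) \leqslant \interpret{s}{}$ for every $s$, hence $\m{dh}(\xi_\top(t),\to_{\Xi(\RR),\mu\NU}) \leqslant \interpret{\xi_\top(t)}{}$. Instead of the $\mu$-based interchangeability used for Lemma~\ref{lem:recipe1}, I would here invoke Theorem~\ref{thm:mapreplace} together with its sharpening via Theorem~\ref{thm:transformcomplete}, which identify $\m{crc}_\RR(n)$ with the $\mu\NU$-runtime complexity of $\Xi(\RR)$ restricted to starting terms $\xi_\top(t)$ with $t$ basic; this is precisely why only the \emph{runtime} claim is available, as the weakened monotonicity is insufficient to bound arbitrary starting terms (and hence $\m{cdc}_\RR$). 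The ``moreover'' equation is unchanged from Lemma~\ref{lem:recipe1}, because $\I$ is literally the same map: writing $\xi_\top(f(\seq{t})) = f(\xi_\top(t_1),\dots,\xi_\top(t_n),\top,\dots,\top)$ and using $\interpret{\top}{}=1$ collapses the defining sum to $\sum_{i=0}^{m_f} \J_f^i(\interpret{\xi_\top(t_1)}{\alpha},\dots,\interpret{\xi_\top(t_n)}{\alpha})$.
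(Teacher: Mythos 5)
Your proposal is correct and takes essentially the same route as the paper's proof: check the relaxed monotonicity requirements against $\mu\NU$ (splitting on $f \in \HH \cap \FF$ versus the auxiliary symbols $f_i^j$), inherit compatibility and the ``moreover'' identity verbatim from Lemma~\ref{lem:recipe1}, and obtain the $\m{crc}_\RR$ claim via Theorem~\ref{thm:mapreplace}. The paper's argument is merely a terser version of yours, so nothing further is needed.
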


\begin{proof}
It is not hard to see that if the restrictions in the recipe are
satisfied, then indeed all interpretation functions $\I_f$ are
strictly monotone in all arguments $i \in \mu\NU(f)$.
The result for $\m{crc}_\RR$ follows by Theorem~\ref{thm:mapreplace}.
Compatibility and equivalence are obtained from
Lemma~\ref{lem:recipe1}, as changing the monotonicity requirements
does not affect either property.
\end{proof}

\begin{exa}
\label{ex:fibruntimesimple}
We use Recipe~\ref{def:recipeB} to derive an upper bound for the runtime
complexity of $\RR_{\m{fib}}$. From Example~\ref{ex:fibonacci2} we
know that the replacement map $\nurc$ defined by
$\nurc(\m{s}) = \{ 1 \}$,
$\nurc(\m{+}) = \nurc(\langle \cdot , \cdot \rangle) = \{ 1, 2 \}$,
and $\nurc(\m{fib}) = \varnothing$ is usable.
For the interpretations, we assign:
\begin{gather*}
\J_{\m{0}}^0 = 0 \qquad
\J_{\m{s}}^0(x) = x + 1 \qquad
\J_{\langle \cdot , \cdot \rangle}^0(x,y) = x + y + 1 \qquad
\J_{+}^0(x,y) = 2x + y + 1
\\
\J_{+}^1(x,y) = \J_{+}^2(x,y) = 0 \qquad
\J_{\m{fib}}^0(x) = 3 \qquad
\J_{\m{fib}}^1(x) = 0 \qquad
\J_{\m{fib}}^2(x) = 5 \cdot (3^x - 1)
\\
\J_{\m{fib},2}^1(x,a) = 3a \qquad
\J_{\m{fib},2}^2(x,a,b) = a + b
\end{gather*}
One easily verifies
that these interpretations are strictly monotone
in the required argument positions, and weakly monotone in all
argument positions.
Omitting the (automatically satisfied) proof obligations for rules
\eqref{xi5} and \eqref{xi6}, this leaves
\[
\begin{array}{r@{~}c@{~}r@{~}c@{~}l@{~}c@{~}l}
\interpret{+(\m{0},y,\top,z)}{} &= & y + 1 & > & y
& = & \interpret{y}{} \\[.5ex]
\interpret{+(\m{s}(x),y,z,\top)}{} & = & 2x+y+3 & > & 2 x + y + 2
& = & \interpret{\m{s}(+(x,y,\top,\top))}{} \\[.5ex]
\interpret{\m{fib}(\m{0},\top,u)}{} & = & 3 + u \cdot 5 \cdot 0 & > & 2
& = & \interpret{\langle \m{0},\m{s}(\m{0}) \rangle}{} \\[.5ex]
\interpret{\m{fib}(\m{s}(x),u,\top)}{} & = & 5 \cdot 3^{x+1} - 2
& \geqslant & 5 \cdot 3^{x+1} - 3
& = & \interpret{\m{fib}_2^1(\m{s}(x),u,\m{fib}(x,\top,\top))}{} \\[.5ex]
\interpret{\m{fib}_2^1(\m{s}(x),u,\langle y, z \rangle)}{}
& = & 3y + 3z + 6 & \geqslant & 3y + 2z + 5 & = &
\interpret{\m{fib}_2^2(\m{s}(x),u,\langle y, z \rangle,
+(y,z,\top,\top))}{} \\[.5ex]
\interpret{\m{fib}_2^2(\m{s}(x),u,\langle y, z \rangle,w)}{} & = &
y + z + w + 4 & > & z + w + 1 & = & \interpret{\langle z, w \rangle}{}
\end{array}
\]
which holds for all values of $x$, $y$, $z$, $u$, and $w$.
From this we conclude $\OO(3^n)$ runtime complexity
by Lemma~\ref{lem:recipe2}.
\end{exa}

Note that Recipe~\ref{def:recipeB} may not be used for derivational
complexity.

\begin{exa}
\label{exa:odd}
The system $\RR_{\m{odd}}$ is a variation of $\RR_{\m{even}}$
defined by the following rules:
\begin{xalignat*}{2}
\m{odd}(\m{0}) &\to \m{false} &
\m{not}(\m{true}) &\to \m{false} \\
\m{odd}(\m{s}(x)) &\to \m{not}(y) ~\Leftarrow~ \m{odd}(x) \approx y &
\m{not}(\m{false}) &\to \m{true}
\end{xalignat*}
We will use Recipe~\ref{def:recipeB} to derive an upper bound for the
runtime complexity of this CCTRS, giving a bit more detail as to
how the interpretations are chosen. The replacement map $\NU$ with
$\NU(\m{odd}) = \NU(\m{s}) = \varnothing$ and $\NU(\m{not}) = \{ 1 \}$
is usable.
Since the unconditional rules will be taken care of by
the choice of $\J_{\m{odd}}^0$ and $\J_{\m{not}}^0$, we
let $\J_{\m{odd}}^1(x) = \J_{\m{not}}^1(x) = \J_{\m{not}}^2(x) = 0$.
For clarity, we assign different names
to the remaining interpretation functions:
\begin{align*}
\J_{\m{true}}^0 &= T &
\J_{\m{0}}^0 &= Z &
\J_{\m{odd}}^0 &= O &
\J_{\m{odd},2}^1 &= C \\
\J_{\m{false}}^0 &= F &
\J_{\m{s}}^0 &= S &
\J_{\m{odd}}^2 &= D &
\J_{\m{not}}^0 &= N
\end{align*}
Here $T$, $F$, and $Z$ are (unknown) constants,
$C$, $D$, $N$, $O$, and $S$ are (unknown) unary functions, and
$N$ must be strictly monotone.
The recipe gives rise to the following constraints:
\begin{align*}
O(Z) &> F &
N(T) &> F &
N(F) &> T
\end{align*}
for the unconditional rules and
\begin{align*}
D(S(x)) &\geqslant C(S(x),O(x) + D(x))
\\
O(S(x)) + C(S(x),y) &> N(y)
\end{align*}
for the conditional rule.
The constraints $N(T) > F$ and $N(F) > T$ are satisfied
by setting
$F = T = 0$ and $N(x) = x + 1$
(recall that $N$ must be strictly monotone).
As $O$ is not required to be (strictly) monotone, and the constraints
give little reason for $O$ to regard its argument, we let $O(x) = A$
for some constant $A$.
Hence the remaining constraints reduce to

\begin{align*}
A &> 0 \\
D(S(x)) &\geqslant C(S(x),A + D(x)) \\
A + C(S(x),y) &> y + 1
\end{align*}
By taking $A = 2$ and $C(x,y) = y$ we are left with
\begin{align*}
D(S(x)) &\geqslant 2 + D(x)
\end{align*}
which is easily satisfied by choosing $D(x) = x$
and $S(x) = x + 2$.
With these choices, we have $\interpret{s}{} \leqslant 2 \cdot |s|$
for all terms $s$, so we obtain linear runtime complexity
by Lemma~\ref{lem:recipe2}.
\end{exa}

Note that the use of the replacement map $\NU$ was essential to
obtain linear runtime complexity; if $\J_{\m{odd}}^0 = O$ was
required to be monotone in its first argument, we would have had to
choose $O(x) = x + 1$ or worse. While this would allow us to choose
the tighter interpretation $S(x) = x + 1$,
it would have produced the constraint
$D(x+1) \geqslant D(x) + x + 1$, which can be satisfied with a
\emph{quadratic} interpretation $D(y) = y^2$, but not with a linear
one.

\section{Splitting Time and Space Complexity}
\label{sec:splitsize}

Another method to improve interpretations is to separate
\emph{time} and \emph{space} complexity. To understand the
motivation, consider Example~\ref{ex:fibruntimesimple}. Since the
rules for addition had to be oriented strictly, the interpretation
$\J_+^0(x,y) = 2x + y + 1$ was chosen rather than the simpler
$\J_+^0(x,y) = x + y$. However, this does not accurately reflect the
number of steps it takes to evaluate an addition.
Rather, it reflects the sum of the
number of steps \emph{plus} the size of the result. This high value
for the interpretation also affects the interpretations for
other symbols. And while the difference is only a constant factor,
which is not an issue in \emph{polynomial} interpretations, it is a
cause for concern when considering \emph{exponential}
complexities; compare $\OO(2^n)$ and $\OO(2^{(an)}) = \OO((2^a)^n)$.

Thus, as an alternative, let us consider interpretations not in
$\N$, but rather in $\N^2$: pairs $(n,m)$, where $n$ records
the number of steps to evaluate a term to
constructor normal form, and $m$ the size of the result.
These pairs are equipped with the following orders:
$(n_1,m_1) > (n_2,m_2)$ if $n_1 > n_2$ and $m_1 \geqslant m_2$, and
$(n_1,m_1) \geqslant (n_2,m_2)$ if $n_1 \geqslant n_2$ and
$m_1 \geqslant m_2$. We suggestively write $\Cost((n,m)) = n$ and
$\Size((n,m)) = m$, and note that $\Cost(x) > \Cost(y)$ if $x > y$.
Consequently, $\m{dh}(s,\to_{\Xi(\RR),\mu}) \leqslant
\Cost(\interpret{s}{\alpha})$ for any valuation $\alpha$ over $\N^2$.

\begin{exa}
\label{ex:evenreallyfinal}
We revisit Example~\ref{ex:Xi(even/odd)} and define
\begin{align*}
\I_\top &= (0,1)
\qquad
\I_\bot = \I_{\m{true}} = \I_{\m{false}} = \I_{\m{0}} = (0,0)
\qquad
\I_{\m{s}}((c,s)) = (c,s+1)
\\
\I_{\m{even}}(x,u,v,w) &= \I_{\m{odd}}(x,u,v,w) = (1 + \Cost(x) +
(\Size(v) + \Size(w)) \cdot A(x), 0)
\\
\I_{\m{even}_2^1}(x,u,v,w) &= \I_{\m{odd}_2^1}(x,u,v,w) =
(1 + \Cost(x) + \Cost(v) + \Size(w) \cdot A(x), 0)
\\
\I_{\m{even}_3^1}(x,u,v,w) &= \I_{\m{odd}_3^1}(x,u,v,w) =
(1 + \Cost(x) + \Size(v) \cdot A(x) + \Cost(w), 0)
\end{align*}
where
\[
A(x) = (\Cost(x) + 1) \cdot (2^{\Size(x)} - 1)
\]
All interpretations are weakly monotone in all arguments because
$x \geqslant y$ implies both $\Cost(x) \geqslant \Cost(y)$ and
$\Size(x) \geqslant \Size(y)$, and in all interpretation functions
$\Cost(\cdot)$ and $\Size(\cdot)$ are only used positively.
If $x > x'$ then
$\I_\m{even}(x,u,v,w) > \I_\m{even}(x',u,v,w)$
since $\Cost(\I_\m{even}(x,u,v,w))$ has a $\Cost(x)$ summand.
The same holds for
$\I_\m{odd}(x,u,v,w)$ and $\I_\m{s}(x)$.
Both
$\I_{\m{even}_2^1}(x,u,v,w)$ and $\I_{\m{odd}_2^1}(x,u,v,w)$
have a $\Cost(v)$ summand and hence are strictly monotone in
their third arguments, and $\I_{\m{even}_3^1}(x,u,v,w)$ and
$\I_{\m{odd}_3^1}(x,u,v,w)$ have a $\Cost(w)$ summand.
Hence $\I$ satisfies the monotonicity requirements.

Furthermore, all rules of $\Xi(\RR_{\m{even}})$ are oriented as
required. For the size component this is clear as
$\Size(\interpret{\ell}{\alpha}) = 0 = \Size(\interpret{r}{\alpha})$
for all rules $\ell \to r$.
For the cost component, we see that
rules of the form \eqref{xi5} are oriented because
$\Cost(v) \geqslant 0 = \Size(\interpret{\bot}{\alpha}) \cdot A(x)$,
and rules of the form \eqref{xi6} are oriented by monotonicity since
$\interpret{\top}{\alpha} = (0,1) \geqslant (0,0) =
\interpret{\bot}{\alpha}$.
Rules \eqref{ev11}, \eqref{ev14}, \eqref{ev32}, \eqref{ev33},
\eqref{ev35}, and \eqref{ev36} are strictly oriented
since their left-hand sides evaluate to $1$ whereas the right-hand
sides evaluate to $0$.
The only rules where the orientation is non-trivial are \eqref{ev22},
\eqref{ev23}, \eqref{ev25}, and \eqref{ev26}.
We consider \eqref{ev22}:
\begin{gather*}
1 + \Cost(x) + (1 + \Size(w)) \cdot A((\Cost(x),\Size(x) + 1)) \\
\qquad \geqslant
1 + \Cost(x) + (1 + \Cost(x) + 2 \cdot A(x)) + \Size(w) \cdot
A((\Cost(x),\Size(x) + 1))
\end{gather*}
Removing equal parts from both sides and
inserting the definition of $A$ yields
\begin{gather*}
(\Cost(x)+1) \cdot (2^{\Size(x) + 1} - 1)
\geqslant
1 + \Cost(x) + 2 \cdot (\Cost(x) + 1) \cdot (2^{\Size(x)} - 1)
\end{gather*}
and one easily checks that both sides are equal.

Now, towards runtime complexity, an easy induction proof shows that
$\Cost(\interpret{s}{}) = 0$ and $\Size(\interpret{s}{}) \leqslant n$
for all ground constructor terms $s$ with $|s| \leqslant n$.
Therefore, the conditional
runtime complexity $\m{crc}_{\RR_{\m{even}}}(n)$ is bounded by
\begin{align*}
&\max \{ \Cost(\interpret{f(\seq[m]{s},\top,\dots,\top)}{}) \mid
\text{$f \in \FF_\DD$ and $\seq[m]{s}$ are ground constructor} \\[-.5ex]
&\phantom{\max \{ \Cost(\interpret{f(\seq[m]{s},\top,\dots,\top)}{})
\mid {}} \text{terms with $|s_1| + \dots + |s_m| < n$} \} \\
{} = {} &\max \{ \Cost(\I_f((0,x_1),\dots,(0,x_m),(0,1),\dots,(0,1))) \mid
\text{$f \in \FF_\DD$ and $x_1 + x_2 + x_3 + x_4 < n$} \} \\
{} = {} &\max \{ 1 + 0 + 2 \cdot 1 \cdot (2^x - 1) \mid x < n \} =
2^n - 1 \leqslant 2^n
\end{align*}
This is the same bound that we obtained in Example~\ref{ex:evenfinal},
but without employing context-sensitivity.
\end{exa}

Interestingly, we can obtain the same tight bound for
derivational complexity.

\begin{exa}
We prove by induction that for all ground $\top$-terms $s$ there
exist $K, N \geqslant 0$ with $K + N \leqslant |s|$ such that
$\Cost(\interpret{s}{}) \leqslant 2^N - 1$ and
$\Size(\interpret{s}{}) \leqslant K$.
\begin{itemize}
\item
If $s$ is $\m{0}$, $\m{true}$, or $\m{false}$
then $\Cost(\interpret{s}{}) = 0 = \Size(\interpret{s}{})$,
so we can take $K = N = 0$.
\smallskip
\item
If $s = \m{s}(t)$ and $t$ is bounded by $(K,N)$, then
$\Cost(\interpret{s}{}) = \Cost(\interpret{t}{}) \leqslant 2^N - 1$
and $\Size(\interpret{s}{})) = 1 + \Size(\interpret{t}{}) \leqslant
K + 1$, so we can take $(K+1,N)$.
\smallskip
\item
If $s = \m{even}(t,\top,\top,\top)$ or
$s = \m{odd}(t,\top,\top,\top)$ with $t$ bounded by $(K,N)$ then
$\Size(\interpret{s}{}) = 0$ and
\begin{align*}
\Cost(\interpret{s}{})
&= 1 + \Cost(\interpret{t}{}) + 2 \cdot (\Cost(\interpret{t}{}) + 1)
\cdot (2^{\Size(\interpret{t}{})} - 1) \\
&\leqslant 1 + (2^N - 1) + 2 \cdot 2^N \cdot (2^K - 1) \\
&= 2^N + 2^{N + K + 1} - 2^{N + 1} \\
&= 2^{N + K + 1} - 2^N \\
&\mathrel{\leqslant} 2^{N + K + 1} - 1
\end{align*}\enlargethispage{\baselineskip}
and so we can take $(0,N + K + 1)$.
\end{itemize}
It follows that $\m{cdc}_{\RR_{\m{even}}}(n) = \OO(2^n)$.
\end{exa}

Separating the ``cost'' and ``size'' component made it possible to
obtain an exponential bound for the derivational
complexity of $\RR_\m{even}$.
However, 
the derivation of this bound
is ad-hoc, and it would require a more systematic analysis of various
systems with the separated $\Cost/\Size$ approach to obtain a
strategy to find such bounds.
For runtime complexity, the approach is more straightforward. If for
all $f \in \FF_\CC$ the result $\I_f(\seq{x})$ has the form
\[
(c(\Cost(x_1),\dots,\Cost(x_n)),s(\Size(x_1),\dots,\Size(x_n)))
\]\enlargethispage{\baselineskip}
where $c$ is a linear polynomial with coefficients in $\{ 0, 1 \}$ and
constant part $0$, and $s$ is a linear polynomial with coefficients
in $\{ 0, 1 \}$ and a constant part at most $K$, then
all ground constructor terms $s$ have cost $0$ and size at most
$K \cdot |s|$, so $\m{crc}_\RR(n)$ is bounded
by the maximum value of $\I_f((0,s_1),\dots,(0,s_{m}),(0,1),\dots,
(0,1))$ where $f \in \FF_\DD$ and $s_1 + \dots + s_{m} < K \cdot n$.
This mirrors the corresponding notion of ``strongly linear
polynomials'' in the setting with interpretations over $\N$, and is what
we used in Example~\ref{ex:evenreallyfinal} (with $K = 1$).

As before, we will use a standard recipe to find such interpretations.
To this end, we adapt the ideas from Recipes~\ref{def:recipeA}
and~\ref{def:recipeB}.

\begin{defi}[Recipe \refstepcounter{recipe}\label{def:recipeC}%
\label{def:recipe:costsize}\therecipe: Cost/Size Version]
Given a usable replacement map $\NU$,
we consider the replacement map $\mu\NU$ where, for $f$ of
arity $n$ in the original signature $\FF$,
$\mu\NU(f) = \NU(f)$ when considering runtime complexity
and $\mu\NU(f) = \{ 1, \dots, n \}$ otherwise.
Given interpretation functions
\begin{itemize}
\item
$\IS_f\colon \N^n \to \N$ and
$\IC_f^0, \dots, \IC_f^{m_f}\colon \N^{2n} \to \N$ 
for every symbol $f$ of arity $n$ in
$\FF$ such that $\RR{\restriction}f$ consists of $m_f$ rules,
\smallskip
\item
$\IS_{f,i}^1, \dots, \IS_{f,i}^k$
with $\IS_{f,i}^j\colon \N^{n+j} \to \N$ and
$\IC_{f,i}^1, \dots, \IC_{f,i}^k$
with $\IC_{f,i}^j \colon \N^{2(n+j)} \to \N$
for every rule $\rho_i \in \RR{\restriction}f$ with $k > 0$ conditions
\end{itemize}
such that the following monotonicity constraints are satisfied:
\begin{itemize}
\item
$\IS_f$ is weakly monotone in all arguments in $\mu\NU(f)$,
\smallskip
\item
$\IC_f^0$ is strictly monotone in all arguments in $\mu\NU(f)$
and weakly monotone in all arguments in
$\{ n + j \mid j \in \mu\NU(f) \}$,
\smallskip
\item
$\IC_f^i$ is weakly monotone in all arguments in
$\{ j, n + j \mid j \in \mu\NU(f) \}$,
\smallskip
\item
$\IS_{f,i}^j$ is weakly monotone in its last argument $n + j$,
\smallskip
\item
$\IC_{f,i}^j$ is strictly monotone in argument $n + j$ and weakly
monotone in argument $2(n+j)$,
\end{itemize}
we construct an interpretation $\I$ for $\HH$ as follows:
$\I_\top = (0,1)$ and $\I_\bot = (0,0)$,
\begin{align*}
\I_f(\seq{x},\seq[m_f]{c}) = \bigl(\:&
\IC_f^0(\Cost(\vec{x}),\Size(\vec{x})) +
\sum_{k=1}^{m_f} \Size(c_k) \cdot
\IC_f^k(\Cost(\vec{x}),\Size(\vec{x})), \\[-1ex]
&\IS_f(\Size(\vec{x}))\:\bigr)
\end{align*}
for every $f \in \FF_\CC \cup \FF_\DD$ of arity $n$, and finally
\begin{align*}
\I_{f_i^j}(&\seq{x},\seq[i-1]{c},\seq[j]{y},c_{i+1},\dots,c_{m_f}) = \\
&\bigl(\:
\IC_f^0(\Cost(\vec{x}),\Size(\vec{x})) + 
\IC_{f,i}^j(\Cost(\vec{x}),\Cost(\vec{y}),
\Size(\vec{x}),\Size(\vec{y})) \\
&+ \sum_{k=1,\,k \neq i}^{m_f}
\Size(c_k) \cdot \IC_f^k(\Cost(\vec{x}),\Size(\vec{x})),\:
\max(\IS_f(\Size(\vec{x})),\IS_{f,i}^j(\Size(\vec{x}),
\Size(\vec{y})))\:\bigr)
\end{align*}
Here $\Cost(\vec{x})$ and $\Size(\vec{x})$ stand for
$\Cost(x_1),\dots,\Cost(x_n)$
and $\Size(x_1),\dots,\Size(x_n)$,
and similar for
$\Cost(\vec{y})$ and $\Size(\vec{y})$.
\end{defi}

The following remarks are helpful to understand the intuition behind
the interpretations defined in the above recipe.
\begin{itemize}
\item
The ``size'' of a term $s$ is intended to reflect---or at least
bound---how large a normal form of $s$ may be, where different
constructor symbols count differently towards the size.
In a term $f(\seq{s},\seq[m_f]{t})$, the size is
only affected by the sizes of $\seq{s}$; the
additional arguments merely indicate our progress in trying to
reduce the term.
In a term of the shape
$f_i^j(\seq{s},\langle \seq[m_f]{t} \rangle[\seq[j]{y}]_i)$
the size should similarly not be affected by the
progress on testing the applicability of the rule
$\rho_i \in \RR{\restriction}f$.
However, here a
rule-specific size function is included in a $\max$ expression
for technical reasons; in practice, we will always have 
$\IS_f(\cdots) \geqslant \IS_{f,i}^j(\cdots)$, but the latter will
have more variables that can be used to orient rules of the form
\eqref{xi3}.
\item
The ``cost'' of $f(\seq{s},\seq[m_f]{t})$ reflects how many steps we may
take to reach a normal form. This is affected by the cost of evaluating
each of the rule conditions where $t_i = (0,1)$
is the value of $\top$, as well as the
cost of evaluating whatever we may reduce to; the sizes of the
arguments may affect both those costs (since it will take longer
to evaluate $\m{even}(\m{s}^{100}(\m{0}))$ than
$\m{even}(\m{0})$, for instance).
\end{itemize}
As before, using this interpretation for the rules in
Definition~\ref{def:transformrules}, the obtained inequalities can be
greatly simplified.

\begin{defi}
\label{compatibility constraints 2}
The \emph{compatibility constraints} for $\IC$ and $\IS$ comprise the
following inequalities, for every rule
$\rho_i\colon f(\seq{\ell}) \to r \Leftarrow a_1 \approx b_1,
\dots, a_k \approx b_k$ in $\RR$:
\begin{itemize}[label=($1_\rho$)]
\item[($1_\rho$)]
$\IS_f(\overrightarrow{[\ell]_\IS}) \geqslant [\xi_\top(r)]_\IS$,
\smallskip
\item[($2_\rho$)]
$\IS_f(\overrightarrow{[\ell]_\IS})
\geqslant \IS_{f,i}^1(\overrightarrow{[\ell]_\IS},[\xi_\top(a_1)]_\IS)$,
\smallskip
\item[($3_\rho$)]
$\IS_{f,i}^k(\overrightarrow{[\ell]_\IS},[b_1]_\IS,\dots,[b_k]_\IS)
\geqslant [\xi_\top(r)]_\IS$,
\smallskip
\item[($4_\rho$)]
$\IS_{f,i}^j(\overrightarrow{[\ell]_\IS},[b_1]_\IS,\dots,[b_j]_\IS)
\geqslant 
\IS_{f,i}^{j+1}(\overrightarrow{[\ell]_\IS},[b_1]_\IS,\dots,[b_j]_\IS,
[\xi_\top(a_{j+1})]_\IS)$
\smallskip
\end{itemize}
and
\begin{itemize}[label=($1_\rho$)]
\item[($1_\rho$)]
$\IC_f^0(\overrightarrow{[\ell]_\IC},\overrightarrow{[\ell]_\IS})
+ \IC_f^i(\overrightarrow{[\ell]_\IC},\overrightarrow{[\ell]_\IS})
> [\xi_\top(r)]_\IC$,
\smallskip
\item[($2_\rho$)]
$\IC_f^i(\overrightarrow{[\ell]_\IC},\overrightarrow{[\ell]_\IS})
\geqslant
\IC_{f,i}^1(\overrightarrow{[\ell]_\IC},[\xi_\top(a_1)]_\IC,
\overrightarrow{[\ell]_\IS},[\xi_\top(a_1)]_\IS)$,
\smallskip
\item[($3_\rho$)]
$\IC_{f,i}^k(\overrightarrow{[\ell]_\IC},[b_1]_\IC,\dots,[b_k]_\IC,
\overrightarrow{[\ell]_\IS},[b_1]_\IS,\dots,[b_k]_\IS)
+ \IC_f^0(\overrightarrow{[\ell]_\IC},\overrightarrow{[\ell]_\IS})
> [\xi_\top(r)]_\IC$,
\smallskip
\item[($4_\rho$)]
$\IC_{f,i}^j(\overrightarrow{[\ell]_\IC},[b_1]_\IC,\dots,[b_j]_\IC,
\overrightarrow{[\ell]_\IS},[b_1]_\IS,\dots,[b_j]_\IS)
\geqslant {}$ \\
$\IC_{f,i}^{j+1}(\overrightarrow{[\ell]_\IC},[b_1]_\IC,\dots,[b_j]_\IC,
[\xi_\top(a_{j+1})]_\IC,
\overrightarrow{[\ell]_\IS},[b_1]_\IS,\dots,[b_j]_\IS,
[\xi_\top(a_{j+1})]_\IS)$
\end{itemize}
for the same cases of $k$ and $j$ as in
Definition~\ref{def:transformrules}.
Here $[s]_\IS = \Size(\interpret{s}{\alpha})$,
$[s]_\IC = \Cost(\interpret{s}{\alpha})$, and
$\overrightarrow{[\ell]_\IS}$ and $\overrightarrow{[\ell]_\IC}$
denotes the sequences $[\ell_1]_\IS, \dots, [\ell_n]_\IS$
and $[\ell_1]_\IC, \dots, [\ell_n]_\IC$.
\end{defi}

\begin{lem}
\label{lem:recipe3}
The interpretation $\I$ from Recipe~\ref{def:recipeC} is a
context-sensitive interpretation for $(\HH,\mu\NU)$.
If the corresponding functions $\IC$ and $\IS$
satisfy the compatibility
constraints from Definition~\ref{compatibility constraints 2}, then
\begin{align*}
[\xi_\top(f(\seq{t}))]_\IS &=
\IS_f([\xi_\top(t_1)]_\IS,\dots,[\xi_\top(t_n)]_\IS) \\
[\xi_\top(f(\seq{t}))]_\IC &=
\sum_{i=0}^{m_f} \IC_f^i(
[\xi_\top(t_1)]_\IC,\dots,[\xi_\top(t_n)]_\IC,
[\xi_\top(t_1)]_\IS,\dots,[\xi_\top(t_n)]_\IS)
\end{align*}
Moreover, $\I$ is compatible with $\HH$. Therefore
\begin{align*}
\m{cdc}_\RR(n) &= \max \{ \Cost(\interpret{\xi_\top(t)}{}) \mid
\text{$t \in \TT(\FF)$ and $|t| \leqslant n$} \} \\
\m{crc}_\RR(n) &= \max \{ \Cost(\interpret{\xi_\top(t)}{}) \mid
\text{$t \in \TT(\FF)$, $|t| \leqslant n$, and $t$ is basic} \}
\end{align*}
\end{lem}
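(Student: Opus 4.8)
The plan is to mirror the proof of Lemma~\ref{lem:recipe1}, but to split every monotonicity and orientation obligation into its $\Cost$- and $\Size$-component, as forced by the product order on $\N^2$ (recall $(n_1,m_1) > (n_2,m_2)$ iff $n_1 > n_2$ and $m_1 \geqslant m_2$, so a strict decrease means a strict $\Cost$-decrease together with a weak $\Size$-decrease). First I would check that $\I$ is a context-sensitive interpretation for $(\HH,\mu\NU)$. For $f \in \FF$ the active arguments are those in $\mu\NU(f)$, and strictness there reduces to strictness of the $\Cost$-component: if $x_i > x_i'$ with $i \in \mu\NU(f)$ then $\Cost(x_i) > \Cost(x_i')$ and $\Size(x_i) \geqslant \Size(x_i')$, so $\IC_f^0$ strictly increases (being strictly monotone in its $i$-th $\Cost$-argument) while the remaining $\Cost$-summands and the $\Size$-component $\IS_f$ only weakly increase, giving $\I_f(\dots,x_i,\dots) > \I_f(\dots,x_i',\dots)$. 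For a symbol $f_i^j$ the single active position $n+i+j-1$ carries $y_j$, and strict monotonicity there holds because $\IC_{f,i}^j$ is strictly monotone in its $\Cost(y_j)$-argument, whereas both $\IS_f$ and $\IS_{f,i}^j$, hence their maximum, are weakly monotone in $\Size(y_j)$. The two recursive identities are then read off directly by applying $\I$ to $\xi_\top(f(\seq{t})) = f(\xi_\top(t_1),\dots,\xi_\top(t_n),\top,\dots,\top)$, using $\Size(\top)=1$ to collapse each coefficient $\Size(c_k)$ to $1$; no induction is needed, only unfolding.

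The heart of the argument is compatibility, which I would establish by a case analysis over the six rule shapes of Definition~\ref{def:transformrules}, verifying the $\Cost$- and $\Size$-components separately in each case. For the $\Cost$-components the reasoning is essentially that of Lemma~\ref{lem:recipe1}: in every left-hand side the coefficient of the label $\top$ at position $i$ is $\Size(\top)=1$, so the summand $\IC_f^i$ survives, and dropping the remaining nonnegative terms $\Size(x_k)\cdot\IC_f^k$ (the $x_k$ do not occur on the right, so their values may be taken as $0$) reduces the obligation to exactly the $\Cost$-constraints ($1_\rho$)–($4_\rho$) of Definition~\ref{compatibility constraints 2}. For rules \eqref{xi5} and \eqref{xi6} the coefficient of $\bot$ is $\Size(\bot)=0$, so these are oriented with no side condition, since $\IC_{f,i}^j \geqslant 0$ and $\IC_f^i \geqslant 0$. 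The non-administrative rules \eqref{xi1} and \eqref{xi3} receive the strict $\Cost$-inequalities and hence, together with the weak $\Size$-inequalities below, a strict orientation in $\N^2$; all administrative rules receive weak inequalities in both components.

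The one genuinely new difficulty, and the step I expect to be the main obstacle, is controlling the $\max$ in the $\Size$-component of every $f_i^j$-symbol. Here the $\Size$-constraints of Definition~\ref{compatibility constraints 2} do precisely the required work. For \eqref{xi2} the obligation $\IS_f \geqslant \max(\IS_f,\IS_{f,i}^1)$ needs the maximum on the right to collapse, which is exactly what $\Size$-constraint ($2_\rho$), namely $\IS_f \geqslant \IS_{f,i}^1$, provides. For \eqref{xi4} the obligation $\max(\IS_f,\IS_{f,i}^j) \geqslant \max(\IS_f,\IS_{f,i}^{j+1})$ follows from $\Size$-constraint ($4_\rho$) together with monotonicity of $\max$. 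For \eqref{xi3} the right-hand side $[\xi_\top(r)]_\IS$ is dominated through $\max(\IS_f,\IS_{f,i}^k)\geqslant\IS_{f,i}^k\geqslant[\xi_\top(r)]_\IS$ using $\Size$-constraint ($3_\rho$), while for \eqref{xi1} it is dominated by $\Size$-constraint ($1_\rho$). For \eqref{xi5} the bound $\max(\IS_f,\IS_{f,i}^j) \geqslant \IS_f$ holds with no constraint at all, and \eqref{xi6} leaves the first $n$ arguments untouched, so its two $\Size$-components coincide.

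Finally, with $\I$ shown to be a compatible context-sensitive interpretation, the bound $\m{dh}(s,\to_{\Xi(\RR),\mu\NU}) \leqslant \Cost(\interpret{s}{})$ holds because $\Cost$ strictly decreases along non-administrative steps and weakly decreases otherwise. Combined with the observations at the end of Section~\ref{sec:transformation}, which identify $\m{cdc}_\RR$ and $\m{crc}_\RR$ with the derivation heights of the $\xi_\top$-images (taking $\mu\NU$ full on $\FF$ for the derivational case), and, for runtime complexity, with Theorem~\ref{thm:mapreplace}, this yields the two stated identities exactly as in the proofs of Lemmas~\ref{lem:recipe1} and~\ref{lem:recipe2}.
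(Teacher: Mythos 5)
Your proposal is correct and follows essentially the same route as the paper's proof: verify the monotonicity requirements componentwise under the product order, obtain the two identities by unfolding the definitions with $\Size(\top)=1$, and reduce each rule shape's orientation obligation to the $\Cost$- and $\Size$-constraints of Definition~\ref{compatibility constraints 2}, with rules \eqref{xi5} and \eqref{xi6} handled by $\Size(\bot)=0$ and $\max(\IS_f,\IS_{f,i}^j)\geqslant\IS_f$. Your treatment of how the $\max$ in the size component of the $f_i^j$-symbols is discharged by the size constraints $(2_\rho)$--$(4_\rho)$ is in fact spelled out more fully than in the paper, which only exhibits the cases \eqref{xi3} and \eqref{xi4} explicitly.
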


\begin{proof}
For the first part of the claim, it is not hard to see that $\I$
satisfies the monotonicity requirements: Every
interpretation function $\I_f$ is strictly
monotone in each argument position belonging to $\mu\NU(f) = \NU(f)$
(or $\{ 1, \dots, n \}$ for derivational complexity), and every
$\I_{f_i^j}$ is strictly monotone in argument position $n+i+j-1$.
The second part of the claim is obtained by writing out definitions.
As for compatibility, minimality of $\interpret{\bot}{\alpha}$
ensures that all constraints obtained from
clause \eqref{xi6} are satisfied, while those obtained from
clause \eqref{xi5} are oriented because 
\begin{align*}
\IC_{f,i}^j(\cdots) &\geqslant 0 = [\bot]_\IS
\cdot \IC_f^i(\cdots)
\intertext{and}
\max \{ \IS_f(\Size(\vec{x})),\IS_{f,i}^j(\cdots) \}
&\geqslant \IS_f(\Size(\vec{x}))
\end{align*}
always hold.
The requirements for the other rules follow from the compatibility
constraints, by expanding the inequalities $[\ell]_\IS \geqslant [r]_\IS$
and $[\ell]_\IC \geqslant [r]_\IC$ or $[\ell]_\IC > [r]_\IC$
depending on the cost of the rule.
For instance,
the actual size constraint for \eqref{xi3} is
\begin{align*}
\max(\IS_f(\overrightarrow{[\ell]_\IS}),
\IS_{f,i}^k(\overrightarrow{[\ell]_\IS},[b_1]_\IS,\dots,[b_k]_\IS))
&> [\xi_\top(r)]_\IS
\intertext{while for \eqref{xi4} we obtain}
\max(\IS_f(\overrightarrow{[\ell]_\IS}),
\IS_{f,i}^j(\overrightarrow{[\ell]_\IS},[b_1]_\IS,\dots,[b_j]_\IS))
&\geqslant \\
\max(\IS_f(\overrightarrow{[\ell]_\IS}), {}&
\IS_{f,i}^{j+1}(\overrightarrow{[\ell]_\IS},[b_1]_\IS,\dots,[b_j]_\IS,
[\xi_\top(a_{j+1})]_\IS))
\end{align*}
Both constraints are clearly implied by the compatibility
constraints of Definition~\ref{compatibility constraints 2}.
The claims on $\m{cdc}_\RR$ and $\m{crc}_\RR$ hold because
$\m{dh}(s,\to_{\Xi(\RR),\mu\NU}) \leqslant \Cost(\interpret{s}{})$.
\end{proof}

As with Lemma~\ref{lem:recipe1}, we can find bounds
on derivation heights without calculating $\xi_\top(t)$.

\begin{exa}
We derive an upper bound for the runtime complexity of
$\RR_{\m{fib}}$, detailing how we arrive at the chosen interpretation.
Recall the rules:
\begin{xalignat*}{2}
\m{0} + y &\to y &
\m{fib}(\m{0}) &\to \langle \m{0},\m{s}(\m{0}) \rangle \\
\m{s}(x) + y &\to \m{s}(x + y) &
\m{fib}(\m{s}(x)) &\to \langle z, w \rangle
~\Leftarrow~ \m{fib}(x) \approx \langle y, z \rangle,~ y + z \approx w
\end{xalignat*}
We take the same usable replacement map $\NU$ as in
Example~\ref{ex:fibruntimesimple}: $\NU(\m{s}) = \{ 1 \}$,
$\NU(+) = \NU(\langle \cdot, \cdot \rangle) = \{ 1, 2 \}$, and
$\NU(\m{fib}) = \varnothing$.
To facilitate understanding of the following constraints, we
present the rules in $\Xi(\RR_{\m{fib}})$ that derive from the conditional
rule
(but note that they are not necessary to apply the recipe): 
\begin{align*}
\m{fib}(\m{s}(x),c_1,\top) &\to
\m{fib}_2^1(\m{s}(x),c_1,\m{fib}(x,\top,\top)) \\
\m{fib}_2^1(\m{s}(x),c_1,\langle y, z \rangle) &\to
\m{fib}_2^2(\m{s}(x),c_1,\langle y, z \rangle,+(y,z,\top,\top)) \\
\m{fib}_2^2(\m{s}(x),c_1,\langle y, z \rangle,w) &\to
\langle z, w \rangle
\end{align*}
Following the recipe, let $N = \IS_{\m{0}}$, $S = \IS_{\m{s}}$,
$P = \IS_{\langle \cdot, \cdot \rangle}$,
$A = \IS_{+}$, $F = \IS_{\m{fib}}$,
$B = \IS_{\m{fib},2}^1$ and $C = \IS_{\m{fib},2}^2$.
The interpretation functions $S$, $P$ and $A$ must be weakly monotone
in all arguments, $B$ and $C$ only in the last argument, and $F$ does not
need to be weakly monotone due to $\NU$.
The requirements on the size component give rise to the constraints
\begin{align}
A(N,y) &\geqslant y
\label{8.1} \\
A(S(x),y) &\geqslant S(A(x,y))
\label{8.2} \\
F(N) &\geqslant P(N,S(N))
\label{8.3}
\intertext{for the unconditional rules and}
F(S(x)) &\geqslant B(S(x),F(x))
\label{8.4} \\
B(S(x),P(y,z)) &\geqslant C(S(x),P(y,z),A(y,z))
\label{8.5} \\
C(S(x),P(y,z),w) &\geqslant P(z,w)
\label{8.6}
\end{align}
for the conditional rule of $\RR_{\m{fib}}$.
For the cost component we will follow the guiding principle that
$\IC_f^0(\seq{x},\seq{y}) \leqslant x_1 + \dots + x_n$ for all
constructor symbols $f \in \FF_\CC$, which gives cost $0$ for ground
constructor terms. As $\IC_f^0$ must be strictly monotone in the
first $n$ arguments for $f \in \FF_\CC$, we fix
$\IC_{\m{0}} = 0$, $\IC_{\m{s}}(x,y) = x$ and
$\IC_{\langle \cdot, \cdot \rangle}(cx,cy,sx,sy) = cx + cy$. We also
fix $\IC_{\m{+}}^1(cx,cy,sx,sy) = \IC_{\m{+}}^2(cx,cy,sx,sy) =
\IC_{\m{fib}}^1(cx,sx) = 0$ since these are the
``conditional evaluation'' components for the unconditional rules.
For the remaining interpretation functions,
write $Q = \IC_{\m{+}}^0$, $G = \IC_{\m{fib}}^0$, $H = \IC_{\m{fib}}^2$,
$D = \IC_{\m{fib},2}^1$, and $E = \IC_{\m{fib},2}^2$,
which yields
\begin{align}
Q(0,cy,N,sy) &> cy
\label{8.7} \\
Q(cx,cy,S(sx),sy) &> Q(cx,cy,sx,sy)
\label{8.8} \\
G(0,N) &> 0
\label{8.9}
\intertext{for the unconditional rules and}
H(cx,S(sx)) &\geqslant D(cx,G(cx,sx) + {} \notag \\
&H(cx,sx),S(sx),F(sx))
\label{8.10} \\
D(cx,cy+cz,S(sx),P(sy,sz)) &\geqslant {} \notag \\
E(cx,cy+cz,Q(cy,cz,sy,sz),S(sx),&P(sy,sz),A(sy,sz))
\label{8.11} \\
G(cx,S(sx)) + E(cx,cy+cz,cw,S(sx),P(sy,sz),sw) &> cz + cw
\label{8.12}
\end{align}
for the conditional rule.
Here, $Q$ is strictly monotone in its
first two arguments and weakly in the last two, $D$ is
strictly monotone in argument $2$ and weakly in $4$, while $E$ is
strictly monotone in argument $3$ and weakly in $6$.
There is no monotonicity constraint for $G$ or $H$.

Choosing minimal polynomials to satisfy the 
constraints deriving from the rules for $+$,
we set $N = 0$, $S(x) = x + 1$, $A(x,y) = x + y$, and
$Q(cx,cy,sx,sy) = cx + cy + sx + 1$.
Since $G$ need not be monotone, we simply take $G(x,y) = 1$
to satisfy \eqref{8.9}. Further choosing
$P(x,y) = x+y$, the constraints simplify to
\begin{align}
F(0) &\geqslant 1
\tag{\ref{8.3}} \\
F(x+1) &\geqslant B(x+1,F(x))
\tag{\ref{8.4}} \\
B(x+1,y+z) &\geqslant C(x+1,y+z,y+z)
\tag{\ref{8.5}} \\
C(x+1,y+z,w) &\geqslant z+w
\tag{\ref{8.6}} \\
\qquad\qquad\qquad\qquad\qquad\qquad H(cx,sx+1) &\geqslant D(cx,H(cx,sx)+1,sx+1,F(sx))
\tag{\ref{8.10}} 
\end{align}

\begin{align}
D(cx,cy+cz,sx+1,sy+sz) &\geqslant \notag \\
E(cx,cy+cz,cy&{}+cz+sy+1,sx+1,sy+sz,sy+sz)
\tag{\ref{8.11}} \\
1 + E(cx,cy+cz,cw,sx+1,sy+sz,sw) &> cz + cw
\tag{\ref{8.12}}
\end{align}
The size constraints are satisfied if we choose
$C(x,y,z) = y + z$, $B(x,y) = 2y$, and $F(x) = 2^x$.
Choosing $E(cx,cy,cz,sx,sy,sz) = cy + cz$ and
$D(cx,cy,sx,sy) = 2cy + sy + 1$
takes care of \eqref{8.11} and~\eqref{8.12},
leaving only
\begin{align}
H(c,s+1) &\geqslant 2 \cdot (H(c,s) + 1) + 2^s + 1
\tag{\ref{8.10}}
\end{align}
This final constraint is satisfied for
$H(c,s) = (s+1) \cdot (2^{s+1} - 2)$ since
\begin{align*}
H(c,s+1)
&= (s+2) \cdot (2^{s+2} - 2)
= s \cdot 2^{s+2} + 8 \cdot 2^s - 2s - 4 \\
&= s \cdot 2^{s+2} + 5 \cdot 2^s - 2s - 4 + 3 \cdot 2^s
\geqslant s \cdot 2^{s+2} + 5 \cdot 2^s - 4s - 4 + 3 \\
&= 2 \cdot (s+1) \cdot 2^{s+1} - 4 \cdot (s+1) + 2^s + 3 
= 2 \cdot (s+1) \cdot (2^{s+1}-2) + 2^s + 3 \\
&= 2 \cdot H(c,s) + 2^s + 3 =
2 \cdot (H(c,s) + 1) + 2^s + 1
\end{align*}
Since all ground constructor terms $s$ have cost $0$ and size
at most $|s|$, for ground basic terms $s$ with $|s| \leqslant n$,
$\Cost(\interpret{s}{})$ is bounded by
$G(0,n-1) + H(0,n-1) = 1 + n \cdot 2^n - 2n$.
We conclude a runtime complexity of $\OO(n \cdot 2^n)$
by Lemma~\ref{lem:recipe3}.
\end{exa}

\section{Conclusions}
\label{sec:conclusion}

In this paper we have improved and extended the notion of complexity
for conditional term rewriting first introduced in \cite{KMS15}. This
notion takes failed calculations into account as any automatic
rewriting engine would. We have defined a transformation to 
unconditional left-linear context-sensitive TRSs whose complexity is the
same as the conditional complexity of the original system, and shown
how this transformation can be used to find bounds for conditional
complexity using traditional interpretation-based methods.

\subsection{Implementation and Experiments}
\label{subsec:experiments}

At present, we have not implemented the results of
Sections~\ref{sec:polynomial}, \ref{sec:ucs},
and~\ref{sec:splitsize}.
However, we did
implement the transformation from Section~\ref{sec:transformation}.
The resulting (context-sensitive) TRSs can be used as input to a
conventional TRS complexity tool, which by
Theorem~\ref{thm:transformsound} gives an upper bound for
conditional complexity. Although existing tools do not take advantage
of either information regarding the replacement map, nor of the
specific shape of the rules or the fact that only terms of the form
$\xi_\top(s)$ need to be considered, the results are often tight
bounds.

\smallskip

\setlength{\intextsep}{0mm}
\begin{wrapfigure}[6]{r}{0.18\textwidth}
\begin{tabular}{|l|r|@{}}
\hline
$\OO(1)$ & 16 \\
$\OO(n)$ & 10 \\
$\OO(n^2)$ & 3 \\
MAYBE & 33 \\
\hline
\end{tabular}
\end{wrapfigure}
We have used this approach with \TCT~\cite{MM13} as the
underlying complexity tool, to analyze
the runtime complexity of the 57 strong CCTRSs in the
current version of the termination problem database (TPDB 10.3),%
\footnote{See \url{http://termination-portal.org/wiki/TPDB} for more
details.}
along with 5 examples in this paper. The results are summarized
to the right. A full evaluation page is available at
\begin{center}
\url{http://cl-informatik.uibk.ac.at/experiments/2016/cc}
\end{center}
About half of the systems in our example set could not be
handled. This is largely due to the presence of non-terminating
CCTRSs as well as systems with
exponential runtime complexity, which existing complexity tools do
not support. Many benchmarks of conditional rewriting have rules
similar to our Example~\ref{ex:evenodd}, which lead to exponential
complexity due to failed evaluations, and consequently cannot be
handled. We do, however, obtain
a constant upper bound for Example~\ref{ex:poschoice}, a quadratic upper
bound for Example~\ref{ex:whenstop}, as well as the
tight bound $\OO(n)$ for Example~\ref{exa:odd}.

\subsection{Related Work}
\label{subsec:related work}

We are not aware of any other attempt to study the complexity of
conditional rewriting, but
numerous transformations from CTRSs to TRSs have been proposed in the
literature. They can roughly be divided into so-called
\emph{unravelings} and \emph{structure-preserving} transformations. The
former were coined by Marchiori~\cite{M96} and have been extensively
investigated (e.g.\ \cite{M97,NSS12,O99,O02,SG10}), mainly to
establish (operational) termination and confluence of the input CTRS.
The latter originate from Viry~\cite{V99} and improved versions were
proposed in \cite{ABH03,SR06,GN14}.

The transformations that are known to transform
CTRSs into TRSs such that (simple) termination of the latter 
implies quasi-decreasingness of the former, are natural candidates 
for study from a complexity perspective. We observe that unravelings are
not suitable in this regard, since they do not take the cost
for failed computations into account.
For instance, the unraveling from \cite{M97}
transforms the CCTRS $\RR_\m{even}$ into
\begin{xalignat*}{3}
\m{even}(0) &\to \m{true} &
\m{even}(\m{s}(x)) &\to \m{U_1}(\m{odd}(x),x) &
\m{U_1}(\m{true},x) &\to \m{true}
\\
&&
\m{even}(\m{s}(x)) &\to \m{U_2}(\m{even}(x),x) &
\m{U_2}(\m{true},x) &\to \m{false}
\\
\m{odd}(0) &\to \m{false} &
\m{odd}(\m{s}(x)) &\to \m{U_3}(\m{odd}(x),x) &
\m{U_3}(\m{true},x) &\to \m{false}
\\
&&
\m{odd}(\m{s}(x)) &\to \m{U_4}(\m{even}(x),x) &
\m{U_4}(\m{true},x) &\to \m{true}
\end{xalignat*}
This TRS has a linear runtime complexity, which is readily confirmed by
\TCT. As the conditional runtime complexity is exponential, the
transformation is not suitable for measuring conditional complexity.
The same holds for the transformation in \cite{O99}.

Structure-preserving transformations are better suited for studying
conditional complexity since they keep track of the conditions in all
applicable rules.

However, existing transformations of this kind are also unsuitable
for measuring conditional runtime complexity. For instance, the CCTRS
$\RR_{\m{even}}$ is transformed into the TRS
\begin{xalignat*}{3}
\m{even}(0,x,y) &\to \m{m}(\m{true})
&
\m{odd}(0,x,y) &\to \m{m}(\m{false})
\\
\m{even}(\m{s}(x),\bot,z)
&\to \m{even}(\m{s}(x),\m{c}(\m{m}(\m{odd}(x,\bot,\bot))),z)
&
\m{even}(\m{s}(x),\m{c}(\m{m}(\m{true})),z) &\to \m{m}(\m{true})
\\
\m{even}(\m{s}(x),y,\bot)
&\to \m{even}(\m{s}(x),y,\m{c}(\m{m}(\m{even}(x,\bot,\bot))))
&
\m{even}(\m{s}(x),y,\m{c}(\m{m}(\m{true})))
&\to \m{m}(\m{false})
\\
\m{odd}(\m{s}(x),\bot,z)
&\to \m{odd}(\m{s}(x),\m{c}(\m{m}(\m{even}(x,\bot,\bot))),z)
&
\m{odd}(\m{s}(x),\m{c}(\m{m}(\m{true})),z)
&\to \m{m}(\m{true})
\\
\m{odd}(\m{s}(x),y,\bot)
&\to \m{odd}(\m{s}(x),y,\m{c}(\m{m}(\m{odd}(x,\bot,\bot))))
&
\m{odd}(\m{s}(x),y,\m{c}(\m{m}(\m{true})))
&\to \m{m}(\m{false})
\\
\m{even}(\m{m}(x),y,z) &\to \m{m}(\m{even}(x,\bot,\bot))
&
\m{s}(\m{m}(x)) &\to \m{m}(\m{s}(x))
\\
\m{odd}(\m{m}(x),y,z) &\to \m{m}(\m{odd}(x,\bot,\bot))
&
\m{m}(\m{m}(x)) &\to \m{m}(x)
\end{xalignat*}
\newcommand{\even}{\m{s}^{21}(\m{0}) + \m{s}^{21}(\m{0})}%
by the transformation of
\c{S}erb\u{a}nu\c{t}\u{a} and Ro\c{s}u~\cite{SR06}.
\TCT reports a constant
runtime complexity, which is explained by the fact that the symbol
$\m{s}$ is turned into a defined symbol. Hence a term like
$\m{even}(\m{s}(\m{0}),\top,\top)$ is 
\emph{not} basic and thus
disregarded for runtime complexity.
The derivational complexity of the transformed TRS is harder to
confirm automatically, as it is exponential, but likely not to differ
much from the conditional derivational complexity of $\RR_\m{even}$.
However, in general, we may well obtain much greater bounds due
to the forced reevaluation of conditions when a subterm is reduced.
Consider for instance a term $\m{even}(\m{s}(t))$ with $t = \even$
in an extension of $\RR_{\m{even}}$ with rules for $+$. This term is
encoded as $\m{even}(\m{s}(t),\bot,\bot)$, the $\bot$s indicating that no
condition has been evaluated yet, and might be reduced as follows:
\begin{align*}
\m{even}(\m{s}(t),\bot,\bot)
&\to\phantom{^*}
\m{even}(\m{s}(t),\m{c}(\m{m}(\m{odd}(t,\bot,\bot))),\bot) \\
&\to^*
\m{even}(\m{s}(t),\m{c}(\m{m}(\m{false})),\bot) \\
&\to\phantom{^*}
\m{even}(\m{s}(t),\m{c}(\m{m}(\m{false})),
\m{c}(\m{m}(\m{even}(t,\bot,\bot)))) \\
&\to^*
\m{even}(\m{s}(t),\m{c}(\m{m}(\m{false})),\m{c}(\m{m}(\m{true}))) \\
&\to\phantom{^*}
\m{even}(\m{s}(\m{m}(\m{s}^{42}(\m{0}))),\m{c}(\m{m}(\m{false})),
\m{c}(\m{m}(\m{true}))) \\
&\to\phantom{^*}
\m{even}(\m{m}(\m{s}^{43}(\m{0})),\m{c}(\m{m}(\m{false})),
\m{c}(\m{m}(\m{true}))) \\
&\to\phantom{^*} \m{m}(\m{even}(\m{s}^{43}(\m{0}),\bot,\bot))
\end{align*}
We observe that
an evaluation in the instance $\m{s}(t)$ of the pattern $\m{s}(x)$
forces a reevaluation of $t$
when checking the second condition.
The fundamental difference with our approach is that we have used
Lemma~\ref{lem:local} to avoid such reevaluations.

Less recent, the transformation of
Antoy \textit{et al.}~\cite{ABH03} operates in
a more restrictive setting: weakly orthogonal constructor-based CTRSs
without extra variables in the conditions. Like the
transformation in~\cite{SR06}, it
blocks conditions when their evaluation fails; however,
conditions are not reevaluated when arguments are modified.
A crucial difference with our transformation $\Xi$ is that
different conditions in the same
conditional rule are not evaluated from left to right but combined
into a single condition, which has a negative impact on complexity.
As an extreme example, consider the CCTRS $\RR$ consisting of the four
rules
\begin{xalignat*}{2}
\m{f}(x) &\to \m{a} \,\Leftarrow\,
\m{c} \approx \m{d},~ \m{g}(x) \approx \m{a},~ \m{g}(x) \approx \m{b} &
\m{g}(\m{s}(x)) &\to \m{f}(x)
\\
\m{f}(x) &\to \m{b} \,\Leftarrow\, \m{c} \approx \m{e} &
\m{c} &\to \m{e}
\end{xalignat*}
The conditional runtime complexity of $\RR$ is linear, which is
confirmed by running \TCT\ on $\Xi(\RR)$. The transformation of
\cite{ABH03} produces the TRS
\begin{xalignat*}{3}
\m{f}(x,\bot,\bot) &\to
\m{f}(x,\langle \m{c}, \m{g}(x), \m{g}(x) \rangle,\m{c}) &
\m{f}(x,\langle \m{d}, \m{a}, \m{b} \rangle,z) &\to \m{a} &
\m{g}(\m{s}(x)) &\to \m{f}(x,\bot,\bot)
\\
&& \m{f}(x,y,\m{e}) &\to \m{b} &
\m{c} &\to \m{e}
\end{xalignat*}
whose runtime complexity is at least exponential due of the
rules $\m{f}(x,\bot,\bot) \to
\m{f}(x,\langle \m{c}, \m{g}(x),\linebreak \m{g}(x) \rangle,\m{c})$
and $\m{g}(\m{s}(x)) \to \m{f}(x,\bot,\bot)$. If the (undecidable)
weak orthogonality restriction in \cite{ABH03} is not imposed, the same
phenomenon may occur if rules have at most one condition.

However, it is worth noting also the similarities to our
method, especially when there is at most one condition.
Consider for example the result of transforming our CCTRS
$\RR_{\m{even}}$:
\begin{xalignat*}{3}
  \m{even}(\m{0},y,z) &\to \m{true} &
  \m{even}(\m{s}(x),\m{true},y) &\to \m{true} &
  \m{even}(\m{s}(x),y,\m{true}) &\to \m{false} \\
  \m{odd}(\m{0},y,z) &\to \m{false} &
  \m{odd}(\m{s}(x),\m{true},y) &\to \m{true} &
  \m{odd}(\m{s}(x),y,\m{true}) &\to \m{false} \\
  \m{even}(\m{s}(x),\bot,\bot) &\to
  \rlap{$\m{even}(\m{s}(x),\m{odd}(x,\bot,\bot),\m{even}(x,\bot,\bot))$} \\
  \m{odd}(\m{s}(x),\bot,\bot) &\to
  \rlap{$\m{odd}(\m{s}(x),\m{even}(x,\bot,\bot),\m{odd}(x,\bot,\bot))$}
\end{xalignat*}
This does not look too different
from the result of our transformation $\Xi$ if the set $\m{AP}$ is not
used. In addition, the method used could be generalised with some of the
ideas from~\cite{SR06}, for instance by evaluating multiple conditions
sequentially rather than in parallel.

Even ignoring the issue of multiple
conditions---or, for~\cite{SR06}, the issue of
reevaluation---there are some fundamental differences between our
transformation $\Xi$ and the structure-preserving transformations
of~\cite{ABH03,SR06}. In both of these,
the conditions for different rules may be evaluated in parallel, which we
do not permit. Moreover, neither transformation separates defined symbols
(e.g.\ $\m{even}$) from ``active'' symbols used to evaluate conditions
(e.g.\ $\m{even}_2^1$). This separation is necessary to impose a
context-sensitive replacement map as we have done here, and
makes it much easier to use traditional techniques such
as polynomial interpretations.
Most importantly, neither transformation \emph{defines}---or is
based on a formal definition of---conditional complexity; rather, they
define upper bounds for a reasonable evaluation strategy.

\subsection{Avenues for Future Work}
\label{subsec:future work}\enlargethispage{\baselineskip}

There are several possibilities to continue our research.

\subparagraph*{\textbf{Weakening restrictions.}}
An obvious direction for future research is to broaden the class of
CTRSs we consider. While it would make little sense
to consider CTRSs that are not deterministic or of type 3---as the
rewrite relation in these systems is undecidable in general---it may be
possible to drop the variable and constructor requirements.

The linearity requirements in \emph{strong} CCTRSs are an
obvious target for improvement. These requirements were not needed in the
definition or justification of our primary complexity notion, but
essential
for the correctness of the way we use the anti-pattern set $\m{AP}$.
However, if we are willing to lose completeness, we may drop the
anti-pattern set, replacing the use of $v$ in $\m{AP}(\ell_i)$ or
$\m{AP}(b_j)$ in Definition~\ref{def:transformrules} by a fresh
variable; doing so, the transformation would not preserve derivation
heights, but we would retain the possibility to obtain \emph{upper}
bounds. Alternatively, we might consider an infinite set of
transformed rules $\Xi'(\RR)$ instead.

As for the restrictions in general CCTRSs, the proof of
the important locality Lemma~\ref{lem:local} requires only that the
left-hand side $\ell$ of every rule $\ell \to r \Leftarrow c$ is a basic
term such that
$\Var(\ell) \cap \Var(c) = \varnothing$.
This can always be satisfied by altering the system
without changing the rewrite relation in an essential way,
replacing for instance $\m{f}(\m{g}(x),y) \to r$ by
$\m{f}(z,y) \to r \,\Leftarrow\, z \approx \m{g}(x)$.
However, in such cases, the definition of conditional complexity
needs to be revisited, as the restrictions on the conditions are
needed for Lemma~\ref{lem:whatisfailure}, which is important to
justify our complexity notion.
For example, if the right-hand sides of conditions were allowed to
be arbitrary terms, it would be possible to define a system with rules
\begin{xalignat*}{4}
\m{g}(x) &\to x & \m{h}(x) &\to \m{g}(x) & \m{h}(x) &\to x & 
\m{f}(z,y) &\to \m{a} \,\Leftarrow\, z \approx \m{g}(x)
\end{xalignat*}
In this CTRS, a term $\m{f}(\m{h}(\m{0}),\m{0})$ \emph{can} be
reduced by the last rule, but we would only find this out if we reduced
$\m{h}(\m{0})$ with the second rule, rather than with the third.
Thus, to accurately analyze such a system, we would likely need a
backtracking mechanism.
To drop the restriction that the right-hand sides of
conditions may not repeat variables, we would need the same, or
alternatively a strategy which enforces that left-hand sides of
conditions must always be reduced to normal form.
Similar revisions could be used to extend the definition to take
non-confluence into account, as discussed at the end of
Section~\ref{sec:analysis}.

Alternatively, we could weaken the restrictions only partially,
allowing for instance irreducible patterns---terms $b$ such that for
no instance $b\gamma$, a reduction step is possible at a position in
$\Pos(b)$--- as right-hand sides of conditions rather than only
constructor terms.

\subparagraph*{\textbf{Rules with branching conditions.}}
Consider the following variant of $\RR_{\m{even}}$:

\noindent
\hspace{-2mm}
\begin{minipage}{.5\textwidth}
\setcounter{equation}{0}
\begin{align}
\m{even}(\m{0}) &\to \m{true}
\\
\m{even}(\m{s}(x)) &\to \makebox[7mm]{$\m{true}$} \,\Leftarrow\,
 \m{odd}(x) \approx \m{true}
 \label{even:true}
\\
\m{even}(\m{s}(x)) &\to \makebox[7mm]{$\m{false}$} \,\Leftarrow\,
 \m{odd}(x) \approx \m{false}
 \label{even:false}
\end{align}
\end{minipage}
\begin{minipage}{.5\textwidth}
\begin{align}
\m{odd}(\m{0}) &\to \m{false}
\\
\m{odd}(\m{s}(x)) &\to \makebox[7mm]{$\m{true}$} \,\Leftarrow\,
 \m{even}(x) \approx \m{true}
 \label{odd:true}
\\
\m{odd}(\m{s}(x)) &\to \makebox[7mm]{$\m{false}$} \,\Leftarrow\,
 \m{even}(x) \approx \m{false}
 \label{odd:false}
\end{align}
\end{minipage} \\

\noindent
Unlike Example~\ref{ex:evenodd}, rules \eqref{even:true} and
\eqref{even:false}, and rules \eqref{odd:true} and \eqref{odd:false}
have very similar conditions. Currently, we do not exploit this.
Evaluating $\m{even}(\m{s}^9(\m{0}))$ with 
rule \eqref{even:true} causes the calculation of the normal form 
$\m{false}$ of $\m{odd}(\m{s}^8(\m{0}))$, before concluding that the
rule does not apply. In our definitions (of $\rightharpoonup$ and
$\Xi$), and in line with
the behavior of Maude, we would dismiss the
result and continue
trying the next rule. In this case, that means
recalculating the normal form of $\m{odd}(\m{s}^8(\m{0}))$, but now
to verify whether rule \eqref{even:false} applies.

This is wasteful, as there is clearly no benefit in recalculating
this normal form.
The rules are defined in a \emph{branching}
manner: If the condition evaluation gives one result, we should
apply rule
\eqref{even:true};
if it gives another, we should use rule
\eqref{even:false}.
A clever rewriting engine could use this branching,
and avoid recalculating obviously unnecessary results. Thus, future
extensions of the complexity notion might take such \emph{groupings}
of rules into account.

\subparagraph*{\textbf{Improving the transformation.}}
With regard to the transformation $\Xi$, it
is would be easy to obtain smaller resulting systems using
various optimizations, such as reducing the set $\m{AP}$ of anti-patterns
using typing considerations, or leaving defined symbols untouched when
they are only defined by unconditional rules.

\subparagraph*{\textbf{Implementation and further complexity methods.}}
The strength of our implementa\-tion---which relies simply on a
transformation to unconditional complexity---is necessarily limited
by the possibilities of existing complexity tools. Thus, we hope
that, in the future, developers of complexity tools will branch out
towards context-sensitive rewriting. Moreover, we encourage
developers to add support for exponential upper bounds.

To take full advantage of the initial conditional setting, it would be
ideal for complexity tools to directly support
conditional rewriting. This would enable tools to use methods like
Recipe~\ref{def:recipeC}, which uses a $\max$-interpretation to
immediately eliminate a large number of rules---an interpretation
which an automatic tool is unlikely to find by itself.
It is likely that other, non-interpretation-based methods, can be
optimized for the conditional setting as well.

\section*{Acknowledgement}

We thank the reviewers for their detailed comments, which led to many
improvements.

\bibliographystyle{plain}
\bibliography{references}

\appendix

\section{Proof of Theorem~\ref{thm:transformcomplete}}
\label{app:transformcomplete}

Recall the statement of Theorem~\ref{thm:transformcomplete}:

\medskip

\begin{quote}
Let $\RR$ be a strong CCTRS and $s \in \TT(\GG)$.
If $\zeta(s)$ is terminating and there exists a
context-sensitive reduction $\zeta(s) \to_{\Xi(\RR),\mu}^* t$ for
some $t$ with cost $N$, then there exists a complexity-conscious
reduction $s \xrightharpoonup{}^* t'$ with cost at least $N$.
If there exists an infinite $(\Xi(\RR),\mu)$ reduction
starting from $\zeta(s)$ then $s \xrightharpoonup{\infty\,}$.
\end{quote}

\medskip

\noindent
In this appendix we present the proof. We fix a strong CCTRS
$\RR$, and corresponding signatures $\FF$, $\GG$ and $\HH$.
In order to relate
certain reduction sequences in $(\Xi(\RR),\mu)$ to complexity-conscious
reductions with
$\xrightharpoonup{}$, we start by defining an inverse of $\zeta$.

\begin{defi}
A term $s \in \TT(\HH,\VV)$ is \emph{proper} if
\begin{itemize}
\item
$s$ is a variable, or
\item
$s = f(\seq{s})$ with $f$ a constructor symbol and proper subterms
$\seq{s}$, or
\item
$s = f(\seq{s},\seq[m_f]{c})$ with $f$ a defined symbol,
proper subterms $\seq{s}$, and $\seq[m_f]{c} \in \{ \bot, \top \}$.
\end{itemize}
We denote the set of all proper (ground) terms by $\TTp(\HH,\VV)$
($\TTp(\HH)$).
For proper terms $s$ we define
$\zeta^{-}(s) \in \TT(\GG,\VV)$
as follows.
If $s$ is a variable then $\zeta^-(s) = s$,
if $s = f(\seq{s})$ with $f$ a constructor then
$\zeta^-(s) = f(\zeta^-(s_1),\dots,\zeta^-(s_n))$, and
if $s = f(\seq{s},\seq[m_f]{c})$ with $f$ a defined symbol then
$\zeta^-(s) = f_R(\zeta^-(s_1),\dots,\zeta^-(s_n))$
for $R = \{ \smash{\rho^f_i} \mid c_i = \top \}$.
\end{defi}

Note that $\bot$-patterns (Definition~\ref{def:xi}) are proper.
The following lemma collects some easy properties of $\zeta^-$.

\begin{lem}
\label{lem:inverseprops}
\begin{enumerate}
\item\label{lem:inverseprops:inverseouter}
If $s \in \TT(\GG,\VV)$ then
$\zeta(s) \in \TTp(\HH,\VV)$ and $\zeta^-(\zeta(s)) = s$.
\item\label{lem:inverseprops:inverseinner}
If $t \in \TTp(\HH,\VV)$ then $\zeta(\zeta^-(t)) = t$.
\item\label{lem:inverseprops:substitution}
If $t \in \TTp(\HH,\VV)$ and $\tau\colon \VV \to \TTp(\HH,\VV)$
then $t\tau \in \TTp(\HH,\VV)$ and
$\zeta^-(t\tau) = \zeta^-(t)\tau_{\zeta^-}$
(where $\tau_{\zeta^-} = \zeta^- \circ \tau$).
\item\label{lem:inverseprops:top}
If $u \in \TT(\FF,\VV)$ and $\tau\colon \VV \to \TTp(\HH,\VV)$ then
$\xi_\top(u)\tau \in \TTp(\HH,\VV)$ and
$\zeta^-(\xi_\top(u)\tau) = \m{label}(u)\tau_{\zeta^-}$.
\item\label{lem:inverseprops:antipattern}
If $v \in \m{AP}(u)$ for some linear constructor term $u$
then $v \in \TTp(\HH,\VV)$ and $\zeta^-(v)$ is a
linear labeled normal form which does not unify with $u$.
\end{enumerate}
\end{lem}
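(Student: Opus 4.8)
The plan is to prove all five items by structural induction, using that both $\zeta$ and $\zeta^-$ recurse on the three possible shapes of a term (a variable, a constructor application, or a defined-symbol application). Items (1)--(3) I would establish by direct induction, and then derive (4) and (5) from them together with Lemma~\ref{zeta xi}.

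For (1) I induct on $s \in \TT(\GG,\VV)$. A variable is fixed by $\zeta$ and is proper. For $s = f(\seq{s})$ with $f$ a constructor, $\zeta(s)$ is a constructor applied to the $\zeta(s_i)$, which are proper with $\zeta^-(\zeta(s_i)) = s_i$ by the induction hypothesis, so $\zeta(s)$ is proper and $\zeta^-(\zeta(s)) = s$. For $s = f_R(\seq{s})$ the appended constants $c_i \in \{\bot,\top\}$ make $\zeta(s)$ proper, and the label recovered by $\zeta^-$, namely $\{\rho_i \mid c_i = \top\}$, is exactly $R$. Item (2) is the mirror induction on proper $t$; the only point to verify is that at a defined node the round trip $\zeta \circ \zeta^-$ returns the same constants $\seq[m_f]{c}$, because $c_i = \top$ if and only if $\rho_i$ lies in the recovered label. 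Item (3) is once more an induction on proper $t$: properness is preserved at every node since $\tau$ maps into proper terms, the constants $\bot,\top$ are untouched by substitution, and $\zeta^-$ commutes with substitution because the label extracted at a defined node does not depend on its arguments.

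Item (4) I would not prove afresh but read off from Lemma~\ref{zeta xi}, which gives $\zeta(\m{label}(u)) = \xi_\top(u)$. Hence $\xi_\top(u)$ is proper and $\zeta^-(\xi_\top(u)) = \zeta^-(\zeta(\m{label}(u))) = \m{label}(u)$ by (1); applying (3) then yields both $\xi_\top(u)\tau \in \TTp(\HH,\VV)$ and $\zeta^-(\xi_\top(u)\tau) = \m{label}(u)\tau_{\zeta^-}$. For (5), properness of $v$ is immediate because anti-patterns are $\bot$-patterns and $\bot$-patterns are proper. That $\zeta^-(v)$ is a linear labeled normal form follows by applying (2) to get $\zeta(\zeta^-(v)) = v$, which is a $\bot$-pattern, and invoking the second clause of Lemma~\ref{zeta xi}.

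The remaining, and \emph{main}, obstacle is the non-unification claim in (5), which I would prove by induction on the construction of $\m{AP}(u)$ in Definition~\ref{def:AP}, writing $u = f(\seq{u})$. If $v$ is $g(\seq[m]{x})$ with $g \ne f$ a constructor, or $g(\seq[m]{x},\bot,\dots,\bot)$ with $g$ defined, then $\zeta^-(v)$ has root $g$ or $g_\varnothing$, distinct from the constructor root $f$ of $u$, so unification fails at the root. If $v = f(\seq[i-1]{x},v',x_{i+1},\dots,x_n)$ with $v' \in \m{AP}(u_i)$, then $\zeta^-(v) = f(x_1,\dots,\zeta^-(v'),\dots,x_n)$ and the induction hypothesis gives that $\zeta^-(v')$ does not unify with $u_i$; since the remaining argument positions carry fresh, pairwise distinct variables that impose no constraint, any unifier of $\zeta^-(v)$ and $u$ would restrict to a unifier of $\zeta^-(v')$ and $u_i$, a contradiction. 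The care required here is purely bookkeeping: I must fix the fresh anti-pattern variables to be disjoint from $\Var(u)$ so that the subterm problem at position $i$ is genuinely independent of the other positions, and I must view both $\zeta^-(v)$ and $u$ as terms over $\GG$ so that a labeled defined symbol such as $g_\varnothing$ counts as distinct from the constructor $f$.
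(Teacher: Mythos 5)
Your proposal is correct and follows essentially the same route as the paper: items (1)--(3) by direct structural induction, item (4) read off from Lemma~\ref{zeta xi} together with items (1) and (3), and item (5) by structural induction following the construction of $\m{AP}(u)$ (which is exactly the induction the paper leaves implicit behind ``we easily obtain''). The details you supply for the non-unification claim --- root clash in the first two cases, restriction of a hypothetical unifier to position $i$ in the third, and the bookkeeping about fresh variables and about $g_\varnothing$ versus a constructor root in $\GG$ --- are the right ones.
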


\begin{proof}
The first three statements are proved by an obvious induction argument.
\begin{enumerate}
\setcounter{enumi}{3}
\item
We have $\xi_\top(u) = \zeta(\m{label}(u))$ by Lemma~\ref{zeta xi}. From
statements (\ref{lem:inverseprops:substitution}) and
(\ref{lem:inverseprops:inverseouter}) we infer
$\zeta(\m{label}(u))\tau \in \TTp(\HH,\VV)$ and
$\zeta^-(\zeta(\m{label}(u))\tau) =
\zeta^-(\zeta(\m{label}(u)))\tau_{\zeta^-} =
\m{label}(u)\tau_{\zeta^-}$.
\smallskip
\item
From the definition of $\m{AP}$ it follows that $v$ is a $\bot$-pattern
and thus proper. By structural induction on $v$ we easily obtain
that $\zeta^-(v)$ is a linear labeled normal form
which does not unify with $u$.
\qedhere
\end{enumerate}
\end{proof}

An important preliminary result is that terminating proper ground terms
have a $\bot$-pattern as normal form. This allows us to eliminate
$f_i^j$ symbols in selected (sub)terms, which is crucial for
transforming a $(\Xi(\RR),\mu)$ reduction into a complexity-conscious
reduction.

\begin{lem}
\label{lem:bottomnormalforms}
If $s \in \TTp(\HH)$ then any $(\Xi(\RR),\mu)$ normal form of $s$ is a
$\bot$-pattern.
\end{lem}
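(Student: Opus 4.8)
The plan is to isolate a syntactic invariant on ground $\HH$-terms that is satisfied by every proper term, is preserved by $\to_{\Xi(\RR),\mu}$, and forces every \emph{irreducible} such term to be a $\bot$-pattern. Let $\mathcal I$ be the smallest set of ground terms in $\TT(\HH)$ closed under the following three clauses: (i) $g(\seq[m]{t})$ with $g\in\FF_\CC$ and all $t_l\in\mathcal I$; (ii) $f(\seq{t},\seq[m_f]{c})$ with $f\in\FF_\DD$, all $t_l\in\mathcal I$ and every $c_m\in\{\bot,\top\}$; and (iii) a term with root $f_i^j$ whose $n$ value slots are $\mathcal I$-instances of $\ell_1,\dots,\ell_n$, whose non-active label slots lie in $\{\bot,\top\}$, whose $j-1$ already-passed condition slots are $\mathcal I$-instances of $b_1,\dots,b_{j-1}$, and whose single active slot (at position $n+i+j-1$) is an arbitrary element of $\mathcal I$, where $\rho_i^f\colon f(\seq\ell)\to r\Leftarrow a_1\approx b_1,\dots,a_k\approx b_k$ is a rule with at least $j$ conditions. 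Here an \emph{$\mathcal I$-instance} of a constructor term $u$ is a term $u\theta$ with $\theta$ mapping $\Var(u)$ into $\mathcal I$; such terms are themselves in $\mathcal I$ by clause (i), $\mathcal I$ is closed under taking active subterms, and (since $\Xi(\RR)$ has no extra variables) all reducts of the ground term $s$ remain ground. The lemma then follows at once: I would show (1) $\TTp(\HH)\subseteq\mathcal I$, (2) $\mathcal I$ is closed under $\to_{\Xi(\RR),\mu}$, and (3) every $\to_{\Xi(\RR),\mu}$-normal form in $\mathcal I$ is a $\bot$-pattern, so a normal form $t$ of $s$ lies in $\mathcal I$ by (1)+(2) and is a $\bot$-pattern by (3).

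For (1) I would argue by structural induction using only clauses (i) and (ii), since proper terms carry no $f_i^j$ symbol. For (2) I would first record the auxiliary fact that $\xi_\top(v)\theta\in\mathcal I$ whenever $v\in\TT(\FF,\VV)$ and $\theta$ maps $\Var(v)$ into $\mathcal I$ (an easy induction on $v$, the analogue of Lemma~\ref{lem:inverseprops}(\ref{lem:inverseprops:top})), and then verify closure rule by rule. As every step is at an active position, the contracted position is either strictly inside an active slot — handled by induction on its depth, using closure of $\mathcal I$ under replacing an active subterm by another $\mathcal I$-term — or at the root of an $\mathcal I$-subterm whose root carries the rule symbol. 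Rules \eqref{xi1} and \eqref{xi3} produce $\xi_\top(r)\theta\in\mathcal I$; rule \eqref{xi2} yields a clause-(iii) term with $j=1$ and active slot $\xi_\top(a_1)\theta\in\mathcal I$; rule \eqref{xi4} freezes the current active slot (now a $b_j$-instance) and installs $\xi_\top(a_{j+1})\theta$ as the new active slot; and rules \eqref{xi5} and \eqref{xi6} merely turn a label into $\bot$, re-exposing the frozen value slots (which are in $\mathcal I$) and landing in clause (ii). The crucial point throughout is that, because $\RR$ is a \emph{strong} CCTRS, the terms $\ell_1,\dots,\ell_n,b_1,\dots,b_k$ are linear with pairwise disjoint variables, so matching a left-hand side of $\Xi(\RR)$ reduces to matching the individual slots independently.

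For (3) I would prove by induction on the size of $t|_p$, over active positions $p$, that if $t|_p$ has no redex at any position active in $t$ then $t|_p$ is a $\bot$-pattern; applying this at $p=\epsilon$ gives the claim. If the root of $t|_p$ is a constructor, all arguments are active and hence $\bot$-patterns by the induction hypothesis. If the root is a defined $f$, the value arguments are active, hence $\bot$-patterns; were some label slot $\top$, then either $f$ applied to those arguments instantiates the corresponding $\ell$, so \eqref{xi1} or \eqref{xi2} fires at the active position $p$, or it does not, in which case Lemma~\ref{AP lemma} yields an anti-pattern in $\m{AP}(\ell_j)$ covering the offending argument and \eqref{xi6} fires — both contradicting irreducibility, so all labels are $\bot$ and $t|_p$ is a $\bot$-pattern. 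Finally an $f_i^j$-rooted term can never be irreducible: its unique active argument is a $\bot$-pattern by the induction hypothesis, and the case distinction ``this argument is, or is not, an instance of $b_j$'', combined with Lemma~\ref{AP lemma}, shows that exactly one of \eqref{xi3}, \eqref{xi4}, \eqref{xi5} applies at $p$.

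I expect the main obstacle to be the formulation and closure of $\mathcal I$ rather than step (3). Because $\to_{\Xi(\RR),\mu}$ is applied with no strategy, an outer application of \eqref{xi2} may fire while an inner condition is still only half-evaluated, so the frozen value and condition slots of an $f_i^j$ term may legitimately contain not-yet-normalised defined symbols and even nested $f_{i'}^{j'}$ symbols. The invariant must tolerate this — hence arbitrary \emph{$\mathcal I$-instances}, not $\bot$-patterns, in the frozen slots — and the delicate content of (3) is precisely that such frozen debris never blocks normalisation: an $f_i^j$ symbol stays reducible the moment its active argument becomes a $\bot$-pattern, so no normal form can contain one, and the debris is re-exposed (then normalised, or discarded) only when the enclosing conditional application is resolved through \eqref{xi3} or \eqref{xi5}.
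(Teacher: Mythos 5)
Your proposal is correct and follows essentially the same route as the paper: your set $\mathcal I$ is exactly the paper's notion of ``intermediate'' terms, and the three steps (proper terms are in the invariant, the invariant is closed under $\to_{\Xi(\RR),\mu}$ via a rule-by-rule case analysis using linearity, and every irreducible invariant term is a $\bot$-pattern via Lemma~\ref{AP lemma}) match the paper's proof structure. Your closing remark about why the frozen slots must be allowed to be arbitrary $\mathcal I$-instances rather than $\bot$-patterns is precisely the point the paper's definition of intermediate terms is designed to handle.
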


\begin{proof}
For the purpose of this proof, a ground term $u$ in $\TT(\HH)$ is said
to be an \emph{intermediate} term if
\begin{itemize}
\item
$u = f(\seq{u})$ with $f$ a constructor symbol and intermediate
arguments
$\seq{u}$, or
\item
$u = f(\seq{u},\seq[m_f]{c})$ with $f$ a defined symbol,
$\seq[m_f]{c} \in \{ \bot, \top \}$, and intermediate arguments $\seq{u}$,
or
\item
$u = f_i^j(\seq{\ell},\langle \seq[m_f]{c} \rangle[\seq[j-1]{b},v]_i)
\sigma$ with $\seq[m_f]{c} \in \{ \bot, \top \}$ and
intermediate terms $v$ and $\sigma(y)$ for all
$y \in \Var(\seq{\ell},\seq[j-1]{b})$,
whenever $\smash{\rho^f_i}\colon f(\seq{\ell}) \to r \Leftarrow c$
and $1 \leqslant j \leqslant k$. (Note that $v\sigma = v$ since
intermediate terms are ground.)
\end{itemize}
We use $\TTi(\HH)$ to denote the set of intermediate terms.
The following properties are easily established:
\begin{enumerate}[label=\({\alph*}]
\item
proper ground terms are intermediate terms,
\item
if $u$ is proper and the domain of $\sigma\colon \VV \to \TTi(\HH)$
includes $\Var(u)$ then $u\sigma$ is an intermediate term,
\item
if $u$ is proper and $u\sigma$ an intermediate term then
$\sigma(x)$ is an intermediate term for every $x \in \Var(u)$.
\end{enumerate}
Next we prove that intermediate terms are closed under
$(\Xi(\RR),\mu)$ reduction. So let $u \in \TTi(\HH)$ and
$u \to_{\Xi(\RR),\mu} u'$. We use induction on the size of $u$. 
\begin{itemize}
\item
Suppose $u = f(\seq{u})$ with $f$ a constructor symbol and
intermediate
arguments $\seq{u}$. The reduction step from $u$ to $u'$ must take place
in one of the arguments, so $u' = f(\seq[i-1]{u},u_i',u_{i+1},\dots,u_n)$
for some $1 \leqslant i \leqslant n$ with $u_i \to_{\Xi(\RR),\mu} u_i'$.
The term $u_i'$ is intermediate according to the induction hypothesis.
Hence $u'$ is intermediate by definition.
\item
Suppose $u = f(\seq{u},\seq[m_f]{c})$ with $f$ a defined symbol,
$\seq[m_f]{c} \in \{ \bot, \top \}$, and intermediate arguments $\seq{u}$.
If the reduction step takes place in one of the arguments
$\seq{u}$, we reason as in the case above.
Suppose the step takes place at the root. We distinguish three
subcases, depending on which kind of rule of $\Xi(\RR)$ is used.
\smallskip
\begin{enumerate}
\item
If a rule of type \eqref{xi1} is used then $u' = \xi_\top(r)\sigma$ for
some right-hand side of an unconditional rule $\ell \to r$ in
$\RR{\restriction}f$ such that $\ell\sigma = f(\seq{u})$. From
property \(c\ we infer that $\sigma(y)$ is intermediate for all
$y \in \Var(\ell)$. Since $\Var(r) \subseteq \Var(\ell)$ and
$\xi_\top(r)$ is proper by
Lemma~\ref{lem:inverseprops}(\ref{lem:inverseprops:top}),
$u'$ is intermediate by property \(b.
\smallskip
\item
If a rule of type \eqref{xi2} is used then
$u = f(\ell_1\sigma,\dots,\ell_n\sigma,\seq[m_f]{c})$ such that
$c_i = \top$ for some $1 \leqslant i \leqslant n$ with
$\rho_i\colon f(\seq{\ell}) \to r \Leftarrow c$ in $\RR{\restriction}f$.
We have $u' = f_i^j(\seq{\ell},\langle \seq[m_f]{c} \rangle
[\xi_\top(a_1)]_i)\sigma$ and from property \(c\ we infer that
$\sigma(y)$ is intermediate for all $y \in \Var(\ell)$. Since
$\Var(a_i) \subseteq \Var(\ell)$, the term $\xi_\top(a_1)\sigma$ is
intermediate by property \(b\ and thus also ground. Hence 
$u' = f_i^j(\seq{\ell},\langle \seq[m_f]{c} \rangle
[\xi_\top(a_1)\sigma]_i)\sigma$, which is of the required shape to be
intermediate.
\smallskip
\item
The final possibility is that a rule of type \eqref{xi6} is used.
In this case we have
$u' = f(\seq{u},\langle \seq[m_f]{c} \rangle[\bot]_i)$ for some
$1 \leqslant i \leqslant m_f$.  Since the arguments $\seq{u}$ are
intermediate, $u'$ is intermediate by definition.
\end{enumerate}
\smallskip
\item
Suppose $u = f_i^j(\seq{\ell},\langle\seq[m_f]{c}\rangle
[\seq[j-1]{b},v]_i)\sigma$. If the reduction step from $u$ to $u'$
takes place below the root, it must take place in 
$v\sigma = v$, due to restrictions on the replacement map $\mu$.
Hence the result follows from the induction hypothesis. Suppose the step
takes place at the root.
Note that the rule $\rho_i^f\colon f(\seq{\ell}) \to r \Leftarrow c$
must exist in $\RR{\restriction}f$.
We again distinguish three
subcases, depending on which kind of rule of $\Xi(\RR)$ is used.
\smallskip
\begin{enumerate}
\item
If a rule of type \eqref{xi3} is used then
$u = f_i^j(\seq{\ell},\langle \seq[m_f]{c}\rangle[\seq[k]{b}]_i)\tau$,
and $u' = \xi_\top(r)\tau$
for some substitution $\tau$
with $\dom(\tau) \subseteq \Var(\seq{\ell},\seq[k]{b})$.
Hence
$\ell_l\tau = \ell_l\sigma$ for all $1 \leqslant l \leqslant n$,
$b_l\tau = b_l\sigma$ for all $1 \leqslant l < j = k$, and
$v = b_k\tau$.
From property \(c\ we infer that $\sigma(y) = \tau(y)$ is intermediate
for all
$y \in \Var(\seq{\ell},\seq[k]{b}) \supseteq \Var(r)$. Hence
$u'$ is intermediate by \(b\ since
$\xi_\top(r)$ is proper by
Lemma~\ref{lem:inverseprops}(\ref{lem:inverseprops:top}).
\smallskip
\item
If a rule of type \eqref{xi4} is used then
$u = f_i^j(\seq{\ell},\langle \seq[m_f]{c}\rangle[\seq[j]{b}]_i)\tau$,
$j < k$, and $u' = f_i^{j+1}(\seq{\ell},\langle \seq[m_f]{c} \rangle
[\seq[j]{b},\xi_\top(a_{j+1})]_i)\tau$ for some substitution $\tau$
with $\dom(\tau) \subseteq \Var(\seq{\ell},\seq[j]{b})$.
Hence $\ell_l\tau = \ell_l\sigma$ for all $1 \leqslant l \leqslant n$,
$b_l\tau = b_l\sigma$ for all $1 \leqslant l < j$, and
$v = b_j\tau$. Therefore,
\[u' = f_i^{j+1}(\seq{\ell},\langle \seq[m_f]{c} \rangle
[\seq[j]{b},\xi_\top(a_{j+1}))\tau]_i\sigma\] and this suffices, if
$\xi_\top(a_{j+1})\tau$ is an intermediate term. This follows from
$\Var(a_{j+1}) \subseteq \Var(\seq{\ell},\seq[j]{b})$ together
with Lemma~\ref{lem:inverseprops}(\ref{lem:inverseprops:top}) and
properties \(b\ and \(c.
\smallskip
\item
The final possibility is that a rule of type \eqref{xi5} is used.
In this case we have
$u' = f_i^j(\seq{\ell},\langle\seq[m_f]{c}\rangle[\bot]_i)\sigma$ for
some $1 \leqslant i \leqslant m_f$.  As 
$\ell_1\sigma, \dots, \ell_n\sigma$ are intermediate, $u'$ is
intermediate by definition.
\end{enumerate}
\end{itemize}
Now suppose that $s$ has a normal form $t$ in
$(\Xi(\RR),\mu)$. We already know that $t$ is an intermediate term.
So it suffices to show that intermediate terms in normal form
are $\bot$-patterns. We show instead that any intermediate term
$t$ which is not a $\bot$-pattern is reducible, by induction on its
size.
\begin{itemize}
\item
Suppose $t = f(\seq{t})$ with $f$ a constructor symbol. One of the
arguments, say $t_i$, is not a $\bot$-pattern. The induction hypothesis
yields the reducibility of $t_i$. Since $i \in \mu(f)$, $t$ is reducible
as well.
\item
Suppose $t = f(\seq{t},\seq[m_f]{c})$ with $f$ a defined symbol and
$\seq[m_f]{c} \in \{ \bot, \top \}$. If one of the terms $\seq{t}$
is not a $\bot$-pattern, we reason as in the previous case.
Otherwise, $c_i = \top$ for some $1 \leqslant i \leqslant m_f$.
Consider $\rho_i^f\colon f(\seq{\ell}) \to r \Leftarrow c$. 
If $f(\seq{t})$ is an instance of $f(\seq{\ell})$ then $t$ is reducible
by rule \eqref{xi1} or \eqref{xi2}. If $f(\seq{t})$ is not an instance
of $f(\seq{\ell})$ then, using the linearity of $f(\seq{l})$,
there exists an argument position $1 \leqslant j \leqslant n$ such that
$t_j$ is not an instance of $\ell_j$. According to Lemma~\ref{AP lemma}
$t_j$ is an instance of an anti-pattern in $\m{AP}(\ell_j)$. Consequently,
$t$ is reducible by rule \eqref{xi6}.
\item
The final case is
$t = f_i^j(\vec{\ell},\langle \seq[m_f]{c} \rangle[\seq[j-1]{b},v]_i)
\sigma$ with $\seq[m_f]{c} \in \{ \bot, \top \}$.
Consider the intermediate subterm $v$.
If $v$ is not a $\bot$-pattern we reason as in the first case.
If $v$ is an instance of $b_j$ then rule \eqref{xi4} is applicable.
Otherwise, again using Lemma~\ref{AP lemma}, $v$ must be an instance of
an anti-pattern in $\m{AP}(\ell_j)$ and thus $t$
is reducible by rule \eqref{xi5}.
\qedhere
\end{itemize}
\end{proof}

\noindent The restriction to proper terms in Lemma~\ref{lem:bottomnormalforms}
is essential. For instance, $\m{even}(\m{0},\bot,\bot,\m{0})$ and
$\m{even}_2^1(\m{0},\bot,\m{true},\bot)$ are ground normal forms
(w.r.t.~Example~\ref{ex:Xi(even/odd)}) but not
$\bot$-patterns.

We have reached the point
where we can prove the main result,
for terminating proper terms. Since a term whose subterms contain
symbols $f_i^j$ has no parallel in the labeled setting, the proof
will require a fair bit of reshuffling; some steps must be postponed, while
other subterms must be eagerly evaluated. This is all done in Lemma
\ref{lem:completeness:terminating}.

In the following, $s \to^* t\:[N]$ or $s \xrightharpoonup{}^* t\:[N]$
indicates a reduction of cost $N$.

\begin{lem}
\label{lem:completeness:terminating}
Let $s \in \TTp(\HH)$ be a terminating term, $t$ a $\bot$-pattern, and
$\sigma\colon \Var(t) \to \TT(\HH)$. If
$s \to_{\Xi(\RR),\mu}^* t\sigma\:[N]$ then there exist a substitution
$\tau\colon \Var(t) \to \TTp(\HH)$ and numbers $K$ and $M$ with
$K + M \geqslant N$ such that
$\zeta^-(s) \xrightharpoonup{}^* \zeta^-(t\tau)\:[K]$ and
$t\tau \to_{\Xi(\RR),\mu}^* t\sigma\:[M]$.
\end{lem}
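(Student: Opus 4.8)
The plan is to prove the statement by well-founded induction while \emph{reorganising} the given context-sensitive reduction, so that its ``structural'' part becomes a complexity-conscious $\xrightharpoonup{}$-reduction ending in a \emph{proper} instance $t\tau$ of the pattern $t$, and all remaining activity is pushed into the substitution. Since $s$ is terminating and $(\HH,\mu)$ is finitely branching (our signatures always yield finitely branching systems), a longest $(\Xi(\RR),\mu)$-reduction from $s$ has a finite length; write $L(s)$ for this length. I would induct lexicographically on the pair $(L(s),|s|)$: the key facts are that $L(s')<L(s)$ whenever $s\to_{\Xi(\RR),\mu}^{+}s'$, and that $L(v)\leqslant L(u)$ whenever $v$ is a subterm of $u$, so that both proper subterms and proper reducts give smaller instances. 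For the base case, suppose $t$ is a variable $x$; then the whole reduction lives inside the substitution and I simply take $\tau=\{x\mapsto s\}$, $K=0$, $M=N$. Here $t\tau=s$ is proper because $s\in\TTp(\HH)$, so $\zeta^-(t\tau)=\zeta^-(s)$, the empty $\xrightharpoonup{}$-reduction has cost $0$, and $t\tau=s\to_{\Xi(\RR),\mu}^{*}t\sigma$ keeps cost $N$.

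For the inductive step with $t=f(u_1,\dots,u_n)$ (with all $m_f$ extra arguments equal to $\bot$ when $f$ is defined), I distinguish cases according to whether $s\to_{\Xi(\RR),\mu}^{*}t\sigma$ uses a step at the root. If it does \emph{not}, then context-sensitivity forces every step into an active argument; since the extra arguments are inactive they are never touched, and the root symbol is unchanged, so $s=f(s_1,\dots,s_n,\bot,\dots,\bot)$ already carries the frozen all-$\bot$ top of $t$ and $\zeta^-(s)=f_\varnothing(\zeta^-(s_1),\dots,\zeta^-(s_n))$. The reduction then splits into independent argument reductions $s_i\to_{\Xi(\RR),\mu}^{*}u_i\sigma$, each $s_i$ proper, terminating and strictly smaller in size; applying the induction hypothesis to each yields $\tau_i,K_i,M_i$ with $K_i+M_i\geqslant N_i$, which I assemble componentwise (the $u_i$ have disjoint variables by linearity of the $\bot$-pattern $t$), using $K+M=\sum_i(K_i+M_i)\geqslant\sum_i N_i=N$.

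If instead some step is at the root, I isolate the \emph{first} maximal root chain. As $s$ is proper its root is an original symbol, and reductions below the root keep it so until the first root step, which therefore uses a rule of type \eqref{xi1}, \eqref{xi2}, or \eqref{xi6}. The below-root steps preceding it reduce the first $n$ arguments to instances $\ell_i\theta$ of the $\bot$-patterns $\ell_i$ (recall that in a strong CCTRS each $\ell_i$ is a linear constructor term, hence $\xi_\bot(\ell_i)=\ell_i$); to these I apply the induction hypothesis (smaller size) to obtain proper matching arguments, deferring the leftover reductions into the extended substitution. A rule \eqref{xi1} then translates directly to an unconditional $\xrightharpoonup{}$-step and \eqref{xi6} to a $\xrightharpoonup{\bot\,}$ step via Lemma~\ref{AP lemma}. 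The crucial point for \eqref{xi2} is that once it fires the symbol becomes $f_i^j$, whose replacement map activates \emph{only} the condition-in-progress argument, so the original arguments are frozen and the conditions must be evaluated one at a time; each such condition evaluation is a reduction of strictly smaller $L$ (it sits inside a proper reduct of $s$) toward an instance of the $\bot$-pattern $b_j$, or, for failure, toward an anti-pattern of $b_{j+1}$, so the induction hypothesis turns it into a complexity-conscious sub-reduction reaching a proper term. The whole block \eqref{xi2}, some \eqref{xi4} steps, and a final \eqref{xi3} (success) or \eqref{xi5} (failure) thus collapses to one $\xrightharpoonup{}$-step $\zeta^-(s)\xrightharpoonup{}\zeta^-(s')$ with $L(s')<L(s)$, and I finish by the induction hypothesis on $s'\to_{\Xi(\RR),\mu}^{*}t\sigma$.

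The hard part will be exactly this disentangling together with the cost bookkeeping. Steps performed below the root in order to expose a left-hand side $f(\vec{\ell})$ or to evaluate a condition must be separated from steps whose effect merely ends up inside the final substitution $\sigma$: the former have to be pulled forward and grouped into single $\xrightharpoonup{}$-steps (their sub-reductions handled recursively, with any further reduction of an already-matched condition result postponed into the extended substitution), while the latter are postponed wholesale. Making this residual/commutation argument precise---verifying that the reshuffled steps still occur at active positions, that the forced chain structure is never interrupted, and that no cost-$1$ step (rules \eqref{xi1} and \eqref{xi3}) is lost---is the delicate part. It is also what forces the inequality $K+M\geqslant N$ rather than equality, since eagerly driving an auxiliary subterm (for instance a failing condition) to a proper form may contribute cost-bearing steps that the original reduction had not yet performed; this slack is harmless because in the eventual application of Theorem~\ref{thm:transformcomplete} the target is a $\bot$-pattern normal form, so $M=0$ and $K\geqslant N$.
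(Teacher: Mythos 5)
Your proposal follows essentially the same route as the paper's proof: the paper inducts on the well-founded order $({\to_{\Xi(\RR),\mu}} \cup {\rhd_\mu})^+$ (equivalent in effect to your lexicographic pair), handles the same variable and argument-decomposition cases, and in the root-step case isolates the first root chain, translating \eqref{xi1} to a successful labeled step, \eqref{xi6} to $\xrightharpoonup{\bot\,}$ via Lemma~\ref{AP lemma}, and the \eqref{xi2}--\eqref{xi4}--\eqref{xi3}/\eqref{xi5} block to a single labeled step, with eager evaluation to $\bot$-patterns (Lemma~\ref{lem:bottomnormalforms}) absorbing the cost of reductions in subterms that the applied rule erases. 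One small correction: your claim that $L(v)\leqslant L(u)$ for \emph{every} subterm $v$ of $u$ fails at inactive positions in context-sensitive rewriting, so the descent must be restricted to subterms at active positions ($\rhd_\mu$), which is all your argument actually uses.
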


\begin{proof}
We use induction on $s$ with respect to
${>} := ({\to_{\Xi(\RR),\mu}} \cup {\rhd_\mu})^+$, which is a
well-founded order on terminating terms. (Here $s \rhd_\mu t$ if
$t$ is a subterm of $s$ occuring at an active position.)
We distinguish a number of cases. First of all, if $t$ is a variable
then we can simply take $\tau = \{ t \mapsto s \}$,
$K = 0$, and $M = N$.
Next suppose $s = f(\seq{s})$ with $f$ a constructor symbol.
We have $\zeta^-(s) = f(\zeta^-(s_1),\dots,\zeta^-(s_n))$
and $t = f(\seq{t})$ with $s_i \to_{\Xi(\RR),\mu}^* t_i\sigma\:[N_i]$
for all $1 \leqslant i \leqslant n$, such that $N = N_1 + \cdots N_n$.
Fix $i$. Since $s \rhd_\mu s_i$, we can apply the induction hypothesis,
resulting in a substitution
$\tau_i\colon \Var(t_i) \to \TTp(\HH)$ and numbers $K_i$ and $M_i$ with
$K_i + M_i \geqslant N_i$ such that
$\zeta^-(s_i) \xrightharpoonup{}^* \zeta^-(t_i\tau_i)\:[K_i]$ and
$t_i\tau_i \to_{\Xi(\RR),\mu}^* t_i\sigma\:[M_i]$.
Since $\bot$-patterns are linear by definition, the
substitution $\tau := \tau_1 \cup \cdots \cup \tau_n$ is well-defined.
Let $ K = K_1 + \dots + K_n$ and $M = M_1 + \dots + M_n$. We clearly have
$K + M \geqslant N$. Furthermore, $\zeta^-(s) \xrightharpoonup{}^*
f(\zeta^-(t_1\tau),\dots,\zeta^-(t_n\tau)) = \zeta^-(t\tau)$ with cost
$K$ and $t\tau \to_{\Xi(\RR),\mu}^* t\sigma\:[M]$.

The remaining case for $s$ is $s = f(\seq[n]{s},\seq[m_f]{c})$
with $f$ a defined symbol. Let
$R = \{ \smash{\rho^f_i} \mid c_i = \top \}$. We have
$\zeta^-(s) = f_R(\zeta^-(s_1),\dots,\zeta^-(s_n))$.
If there is no root step in the reduction $s \to_{\Xi(\RR),\mu}^* t\sigma$
then the result is obtained exactly as in the preceding case. So suppose
the reduction contains a root step. We prove the following claim
($\ast$):
\smallskip
\begin{quote}
There exist a term $u \in \TTp(\HH)$ different from $s$ and numbers
$A$ and $B$ with $A + B \geqslant N$ such that
$\zeta^-(s) \xrightharpoonup{}^+ \zeta^-(u)\:[A]$ and
$u \to_{\Xi(\RR),\mu}^* t\sigma\:[B]$.
\end{quote}
\smallskip
The statement of the lemma follows from ($\ast$), as can be seen as
follows. We have
$s = \zeta(\zeta^-(s)) \to_{\Xi(\RR),\mu}^* \zeta(\zeta^-(u)) = u$ by
Lemma~\ref{lem:inverseprops}(\ref{lem:inverseprops:inverseinner}) and
Theorem~\ref{thm:transformsound}. Since $s \neq u$ we must have $s > u$
and thus we can apply the induction hypothesis to
$u \to_{\Xi(\RR),\mu}^* t\sigma$. This yields a substitution
$\tau\colon \Var(t) \to \TTp(\HH)$ and numbers $K$ and $M$ with
$K + M \geqslant B$ such that
$\zeta^-(u) \xrightharpoonup{}^* \zeta^-(t\tau)\:[K]$ and
$t\tau \to_{\Xi(\RR),\mu}^* t\sigma\:[M]$. Hence
$\zeta^-(s) \xrightharpoonup{}^* \zeta^-(t\tau)\:[A+K]$
and $(A + K) + M = A + (K + M) \geqslant A + B \geqslant N$.

To prove the claim, we distinguish a few subcases depending on
which rule of $\Xi(\RR)$ is applied in the first root step.
\begin{enumerate}[label=\({\alph*}]
\item
Suppose the first root step uses a rule of type \eqref{xi1} and
let $\rho_i\colon \ell = f(\seq{\ell}) \to r$ be the originating rule in
$\RR{\restriction}f$. (So $c_i = \top$ and $i \in R$.)
The reduction from $s$ to $t\sigma$ has the shape
\[
s \xrightarrow{\smash{> \epsilon}}^*
f(\ell_1\gamma,\dots,\ell_n\gamma,\seq[m_f]{c}) \xrightarrow{\epsilon}
\xi_\top(r)\gamma \to^* t\sigma
\]
for some substitution $\gamma$ with $\dom(\gamma) \subseteq \Var(\ell)$.
Fix $1 \leqslant j \leqslant n$ and let
$C_j$ be the cost of $s_j \to^* \ell_j\gamma$. Let
$C = C_1 + \cdots + C_n$. From the induction
hypothesis we obtain a substitution
$\delta_j\colon \Var(\ell_j) \to \TTp(\HH)$ and numbers $K_j$ and
$M_j$ with $K_j + M_j \geqslant C_j$ such that
$\zeta^-(s_j) \xrightharpoonup{}^* \zeta^-(\ell_j\delta_j)\:[K_j]$ and
$\ell_j\delta_j \to^* \ell_j\gamma\:[M_j]$.
Because $f(\seq{\ell})$ is
linear, the substitution $\delta := \delta_1 \cup \dots \cup \delta_n$ is
well-defined.
With help of
Lemma~\ref{lem:inverseprops}(\ref{lem:inverseprops:substitution}) we
obtain $\zeta^-(s) \xrightharpoonup{}^*
f_R(\ell_1\delta_{\zeta^-},\dots,\ell_n\delta_{\zeta^-})\:[K]$.
As $\seq{\ell}$ are constructor terms, the reductions
$\ell_j\delta_j \to^* \ell_j\gamma\:[M_j]$ take place in the
substitution part. Hence
for every $x \in \Var(\ell)$ we have $x\delta \to^* x\gamma\:[M_x]$
such that
$M := M_1 + \cdots + M_n = \sum \{ M_x \mid x \in \Var(\ell) \}$
and $K + M \geqslant C$, where $K = K_1 + \dots + K_n$.

After these preliminaries, we proceed as follows. Let
$V = \Var(\ell) \setminus \Var(r)$. For every $x \in V$ we fix a
$\bot$-pattern $u_x$ such that $\gamma(x) \to^* u_x$. The existence of
$u_x$ is guaranteed by Lemma~\ref{lem:bottomnormalforms} and the
termination of $\gamma(x)$, which follows because
$s \to^* \cdot \rhd_\mu \gamma(x)$. Define the substitution
$\eta\colon \Var(\ell) \to \TTp(\HH)$ as follows:
\[
\eta(x) = \begin{cases}
u_x & \text{if $x \in V$} \\
\delta(x) & \text{if $x \notin V$}
\end{cases}
\]
We divide $M$ into $M_V = \sum \{ M_x \mid x \in V \}$ and
$M_{\bar{V}} = \sum \{ M_x \mid x \notin V \} =  M - M_V$.
We have $\ell\delta \to^* \ell\eta$. Applying the induction hypothesis 
to this reduction (with $t = \ell\delta$ and empty substitution $\sigma$)
yields $\zeta^-(\ell\delta) \xrightharpoonup{}^* \zeta^-(\ell\eta)\:[L]$
for some $L \geqslant M_V$.
Let $u = \xi_\top(r)\eta$. 
Lemma~\ref{lem:inverseprops} yields
$\zeta^-(u) = \m{label}(r)\eta_{\zeta^-}$.
Hence $\zeta^-(s) \xrightharpoonup{}^* \zeta^-(u)\:[A]$
with $A = K + L + 1$. We clearly have $s \neq u$.
In order to conclude ($\ast$), it remains to show that
$u \to^* t\sigma\:[B]$ for some $B \geqslant N - A$.
We have $u = \xi_\top(r)\delta$ due to the definitions of $V$ and $\eta$.
Hence $u \to^* \xi_\top(r)\gamma\:[D]$ for some $D \geqslant M_{\bar{V}}$
and thus $u \to^* t\sigma\:[B]$ with
$B := D + N - (C + 1) \geqslant M_{\bar{V}} + N - (C + 1) \geqslant
M_{\bar{V}} + N - (K + M + 1) = N - (K + M_V + 1)
\geqslant N - (K + L + 1) \geqslant N - A$.
\smallskip
\item
Suppose the first root step uses a rule of type \eqref{xi6} and let
$f(\seq{\ell})$ be the left-hand side of the rule in $\RR$
that gave rise to this rule. The reduction
from $s$ to $t\sigma$ has the following shape:\enlargethispage{\baselineskip}
\[
s \xrightarrow{\smash{> \epsilon}}^*
f(\seq{u},\seq[m_f]{c}) \xrightarrow{\epsilon}
f(\seq{u},\langle \seq[m_f]{c} \rangle[\bot]_i) \to^* t\sigma
\]
with $u_j$ an instance of an anti-pattern $v \in \m{AP}(\ell_j)$,
so $u_j = v\gamma$ for some substitution $\gamma$ and fixed $j$.
We have $s_i \to^* u_i$ for all $1 \leqslant i \leqslant n$.
By postponing the steps in arguments different from $j$, we obtain
\begin{alignat*}{2}
s &\xrightarrow{\smash{\mathrel{\geqslant} j}}^*
f(s_1,\dots,u_j,\dots,s_n,\seq[m_f]{c}) &\quad& [\,A\,] \\
&\xrightarrow{\phantom{\smash{>\epsilon}}}\phantom{^*}
f(s_1,\dots,u_j,\dots,s_n,\langle \seq[m_f]{c} \rangle[\bot]_i)
&& [\,0\,] \\
&\xrightarrow{\smash{> \epsilon}}\phantom{^*}
f(u_1,\dots,u_j,\dots,u_n,\langle \seq[m_f]{c} \rangle[\bot]_i)
\to^* t\sigma && [\,N-A\,]
\end{alignat*}
Since $s \rhd_\mu s_j \to^* v\gamma$, we can apply
the induction hypothesis to obtain
a substitution $\delta\colon \Var(v) \to \TTp(\HH)$ and numbers $K$ and
$M$ with $K + M \geqslant A$ such that
$\zeta^-(s_j) \xrightharpoonup{}^* \zeta^-(v\delta)\:[K]$ and
$v\delta \to^* v\gamma\:[M]$.
Lemma~\ref{lem:inverseprops}(\ref{lem:inverseprops:antipattern})
yields $\zeta^-(v\delta) = \zeta^-(v)\delta_{\zeta^-}$ and from
Lemma~\ref{AP lemma} we know that $\zeta^-(v)$ is a linear
labeled normal form which does not unify with $\ell_j$. Therefore
\begin{alignat*}{2}
\zeta^-(s)
\xrightharpoonup{}^* {} &
f_R(\zeta^-(s_1),\dots,\zeta^-(v)\delta_{\zeta^-},\dots,\zeta^-(s_n))
&& [\,K\,] \\[-.75ex]
\xrightharpoonup{\bot}\phantom{^*} {} & f_{R \setminus \{ \rho_i \}}
(\zeta^-(s_1),\dots,\zeta^-(v)\delta_{\zeta^-},\dots,\zeta^-(s_n))
&\quad& [\,0\,]
\end{alignat*}
The latter term equals $\zeta^-(u)$ where
$u = f(s_1,\dots,v\delta,\dots,s_n,\langle \seq[m_f]{c} \rangle[\bot]_i)$.
Furthermore,
\begin{alignat*}{2}
u &\to^* f(s_1,\dots,v\gamma,\dots,s_n,\langle\seq[m_f]{c}\rangle[\bot]_i) 
&\quad& [\,M\,] \\
&\to^* t\sigma && [\,N-A\,]
\end{alignat*}
Hence $\zeta^-(s) \xrightharpoonup{}^+ \zeta^-(u)\:[K]$ and
$u \to^* t\sigma\:[M+N-A]$ with
$M + N - A \geqslant M + N - (K + M) = N - K$. Since $s \neq u$,
this proves ($\ast$).
\smallskip
\item
In the remaining case, the first root step in reduction from
$s$ to $t\sigma$ uses a rule of type \eqref{xi2}. Let
$\rho = \rho_i\colon \ell = f(\seq{\ell}) \to r \Leftarrow c$
Since $t$ is a non-variable $\bot$-pattern, $t\sigma$ cannot
have some $f_j^i$ as root symbol. Hence
the application of \eqref{xi2} will be followed by
(possibly zero) root steps of type
\eqref{xi4}, for $j = 1, \dots, m-1$,
until either a step of type \eqref{xi3} with cost $Q = 1$
(when $m = k$)
or a step of type \eqref{xi5} with cost $Q = 0$ is used at the
root position. We have
\begin{alignat}{2}
[C] &&
\qquad s \xrightarrow{\smash{> \epsilon}}^* {} &
f(\seq{\ell},\langle \seq[m_f]{c} \rangle[\top]_i)\gamma
\notag \\
[0] &&
\xrightarrow{\smash{\phantom{>} \!\!\epsilon\,\,}}\phantom{^*} {} &
f_i^1(\seq{\ell},\langle \seq[m_f]{c} \rangle[\xi_\top(a_1)]_i)\gamma
\tag{\ref{xi2}} \\
[D_1] &&
\xrightarrow{\smash{> \epsilon}}^* {} &
f_i^1(\seq{\ell},\langle \seq[m_f]{c} \rangle[b_1]_i)\gamma
\notag \\
[0] &&
\xrightarrow{\smash{\phantom{>} \!\!\epsilon\,\,}}\phantom{^*} {} &
f_i^2(\seq{\ell},\langle \seq[m_f]{c} \rangle[b_1,\xi_\top(a_2)]_i)\gamma
\tag{\ref{xi4}} \\
&&
\xrightarrow{\smash{> \epsilon}}^* {} &
\cdots \notag{} \\
[0] &&
\xrightarrow{\smash{\phantom{>} \!\!\epsilon\,\,}}\phantom{^*} {} &
f_i^m(\seq{\ell},\langle \seq[m_f]{c} \rangle
[\seq[m-1]{b},\xi_\top(a_m)]_i)\gamma
\tag{\ref{xi4}} \\
[D_m] &&
\xrightarrow{\smash{> \epsilon}}^* {} &
f_i^m(\seq{\ell},\langle \seq[m_f]{c} \rangle[\seq[m-1]{b},v]_i)\gamma
\notag \\
[Q] &&
\xrightarrow{\smash{\phantom{>} \!\!\epsilon\,\,}}\phantom{^*} {} &
w \tag*{\eqref{xi3} or \eqref{xi5}} \\
[E] &&
\xrightarrow{\smash{\phantom{> \epsilon}}}^* {} &
t\sigma \notag
\end{alignat}\enlargethispage{\baselineskip}
for some substitution $\gamma$, $\bot$-pattern $v$, ground term $w$,
and numbers $C, \seq[m]{D}, E$ such that
$N = C + D_1 + \cdots + D_m + E + Q$.
(Here we use the fact that $b_j$ does not
share variables with $\seq{\ell},\seq[j-1]{b}$, for
$1 \leqslant j < m$. Moreover, $b_m$ as well as members of $\m{AP}(b_m)$
are $\bot$-patterns.)
Like in case \(a, we obtain a substitution
$\delta\colon \Var(\ell) \to \TTp(\HH)$ 
and numbers $K_j$ and $M_j$ such that
$\zeta^-(s_j) \xrightharpoonup{}^* \zeta^-(\ell_j\delta)\:[K_j]$ and
$\ell_j\delta \to^* \ell_j\gamma\:[M_j]$. Moreover,
$K + M \geqslant C$ where $K = K_1 + \cdots + K_n$ and
$M = M_1 + \cdots + M_n$.
We now distinguish two cases, depending on whether
\eqref{xi3} or \eqref{xi5} is used in the step to $w$.
\smallskip
\begin{itemize}
\item
Suppose the step to $w$ uses \eqref{xi3}.
In this case we have $Q = 1$, $v = b_m$ and $w = \xi_\top(r)\gamma$.
Let $V = \Var(\nulseq[m]{b}) \setminus \Var(\seq[m+1]{a})$.
(Recall that $b_0 = \ell$ and $a_{k+1} = r$).
For every $x \in V$ we fix a
$\bot$-pattern $u_x$ such that $\gamma(x) \to^* u_x$. The existence of
$u_x$ is guaranteed by Lemma~\ref{lem:bottomnormalforms} and the
termination of $\gamma(x)$, which follows from
$s \to^* \cdot \rhd_\mu \xi_\top(a_j)\gamma$
for all $1 \leqslant j \leqslant m$.
We inductively define substitutions $\nulseq[m]{\eta}$
with $\eta_j\colon \Var(\nulseq[j]{b}) \to \TTp(\HH)$
as well as numbers $\nulseq[m]{L}$ and $G_x$ for all
$x \in \Var(\nulseq[m]{b}) \setminus V$
such that
\smallskip
\begin{enumerate}[label=\({\alph*}]
\item
$\eta_j(x) \to^* \gamma(x)\:[G_x]$ for all
$0 \leqslant j \leqslant m$ and $x \in \Var(b_j) \setminus V$,
\smallskip
\item
$\zeta^-(\ell\delta) \xrightharpoonup{}^* \zeta^-(\ell\eta_0)\:[L_0]$
with $L_0 \geqslant M - \sum \{ G_x \mid x \in \Var(b_0) \setminus V \}$,
and
\smallskip
\item
$\zeta^-(\xi_\top(a_j)\eta_{j-1}) \xrightharpoonup{}^* \zeta^-(b_j\eta_j)
\:[L_j]$ with
$L_j \geqslant D_j + \sum \{ G_x \mid x \in \Var(a_j) \} -
\sum \{ G_x \mid x \in \Var(b_j) \setminus V \}$
for all $0 < j \leqslant m$.
\end{enumerate}\bigskip

\begin{itemize}
\item
Let $j = 0$. We define
\[
\eta_0(x) = \begin{cases}
u_x &\text{if $x \in \Var(b_0) \cap V$} \\
\delta(x) &\text{if $x \in \Var(b_0) \setminus V$}
\end{cases}
\]
We obtain $\eta_0(x) \to^* \gamma(x)$
for all $x \in \Var(b_0) \setminus V$ from
$\ell\delta \to^* \ell\gamma$, and define $G_x$ as the cost of this
reduction. This establishes property \(a.
Applying the induction hypothesis to the reduction
$\ell\delta \to^* \ell\eta_0$ (with $t = \ell\eta_0$ and $\sigma$ the
empty substitution) yields
$\zeta^-(\ell\delta) \xrightharpoonup{}^* \zeta^-(\ell\eta_0)\:[L_0]$
for some $L_0 \geqslant \sum \{ \m{cost}(\delta(x) \to^* \eta_0(x)) \mid
x \in \Var(b_0) \cap V \}$.
Note that
$L_0 + \sum \{ G_x \mid x \in \Var(b_0) \setminus V \} \geqslant M$.
Hence property \(b\ holds.
Property \(c\ holds vacuously.
\smallskip
\item
Consider $0 < j \leqslant m$.
Since $\Var(a_j) \subset \Var(\nulseq[j-1]{b}) \setminus V$
we obtain $\xi_\top(a_j)\eta_{j-1} \to^* \xi_\top(a_j)\gamma\:[G_j]$
for some $G_j \geqslant \sum \{ G_x \mid x \in \Var(a_j) \}$.
(Equality need not hold if $a_j$ is a non-linear term.)
We apply the induction hypothesis to
$\xi_\top(a_j)\eta_{j-1} \to^* \xi_\top(a_j)\gamma \to^*
b_j\gamma\:[G_j + D_j]$, yielding a substitution
$\delta_j\colon \Var(b_j) \to \TTp(\HH)$ 
and numbers $L'$ and $N'$ with $L' + N' \geqslant G_j + D_j$
such that $\zeta^-(\xi_\top(a_j)\eta_{j-1}) \xrightharpoonup{}^*
\zeta^-(b_j\delta_j)\:[L']$ and
$b_j\delta_j \to^* b_j\gamma\:[N']$.
We divide $N'$ into $X + Y$ where
\begin{align*}
\qquad
X &= \sum \{ \m{cost}(\delta_j(x) \to^* \gamma(x)) \mid
x \in \Var(b_j) \cap V \} \\
Y &= \sum \{ \m{cost}(\delta_j(x) \to^* \gamma(x)) \mid
x \in \Var(b_j) \setminus V \}
\end{align*}
and define the substitution $\eta_j$ as follows:
\[
\eta_j(x) = \begin{cases}
\eta_{j-1}(x) &\text{if $x \in \Var(\nulseq[j-1]{b})$} \\
u_x &\text{if $x \in \Var(b_j) \cap V$} \\
\delta_j(x) &\text{if $x \in \Var(b_j) \setminus V$}
\end{cases}
\]
Since $b_j$ is a constructor term, from
$b_j\delta_j \to^* b_j\gamma$ we infer $\eta_j(x) \to^* \gamma(x)$
for all $x \in \Var(b_j) \setminus V$, at a cost we can safely define
as $G_x$. Hence property \(a\ holds. Property \(b\ holds vacuously.
Note that $Y = \sum \{ G_x \mid x \in \Var(b_j) \setminus V \}$.
Applying the induction hypothesis to
$b_j\delta_j \to^* b_j\eta_j$
(with $t = b_j\eta_j$ and $\sigma$ the empty substitution)
yields
$\zeta^-(b_j\delta_j) \xrightharpoonup{}^* \zeta^-(b_j\eta_j)\:[Z]$
for some number
\[
\qquad\qquad
Z \geqslant \sum \{ \m{cost}(\delta_j(x) \to^* \gamma(x) \to^* u_x) \mid
x \in \Var(b_j) \cap V \} \geqslant X
\]
Let $L_j = L' + Z$.
So $\zeta^-(\xi_\top(a_j)\eta_{j-1}) \xrightharpoonup{}^*
\zeta^-(b_j\eta_j)\:[L_j]$.
We have
\begin{align*}
\qquad\qquad\qquad
L_j &\geqslant L' + X = L' + N' - Y \geqslant G_j + D_j - Y \\
&\geqslant D_j + \sum \{ G_x \mid x \in \Var(a_j) \} -
\sum \{ G_x \mid x \in \Var(b_j) \setminus V \}
\end{align*}
establishing property \(c.
\smallskip
\end{itemize}
Let $\eta = \eta_m$.
Since $\eta$ coincides with $\eta_j$ on
$\Var(\nulseq[j]{b})$ for all $0 \leqslant j \leqslant m$, we obtain
\[
\m{label}(a_j)\eta_{\zeta^-} = \zeta^-(\xi_\top(a_j)\eta)
\xrightharpoonup{}^* \zeta^-(b_j\eta) = b_j\eta_{\zeta^-}\:[L_j]
\]
for $1 \leqslant j \leqslant m$. Hence
\[
\zeta^-(s) \xrightharpoonup{}^*
f_R(\ell_1\eta_{\zeta^-},\dots,\ell_n\eta_{\zeta^-})
\xrightharpoonup{}
\m{label}(r)\eta_{\zeta^-}\:[A]
\]
with $A = (K + L_0) + L_1 + \cdots + L_m + 1$. Let $u = \xi_\top(r)\eta$.
Lemma~\ref{lem:inverseprops} yields
$\zeta^-(u) = \m{label}(r)\eta_{\zeta^-}$. To establish the claim
($\ast$), it remains to show $u \to^* t\sigma\:[B]$ for some
$B$ such that $A+B \geqslant N$.
Because $\Var(r) \subseteq \Var(\nulseq[m]{b}) \setminus V$,
we obtain
\[
u = \xi_\top(r)\eta \to^* \xi_\top(r)\gamma = w \to^* t\sigma\:[B]
\]
with $B \geqslant \sum \{ G_x \mid x \in \Var(r) \} + E$. We have
\begin{align*}
\qquad\qquad
A + B \geqslant {}&
K + L_0 + L_1 + \cdots + L_m + 
\sum \{ G_x \mid x \in \Var(r) \} + E + 1 \\
\geqslant {}& (C - M) +
(M - \sum \{ G_x \mid x \in \Var(b_0) \setminus V \})
+ D_1 + \cdots + D_m
\\
&{} + \sum \{ G_x \mid x \in \Var(\seq[m+1]{a}) \} \\
&{} - \sum \{ G_x \mid x \in \Var(\seq[m]{b}) \setminus V \} + E + 1 \\
\geqslant {}& C + D_1 + \cdots + D_m +
\sum \{ G_x \mid x \in \Var(\seq[m+1]{a}) \} \\
&{} - \sum \{ G_x \mid x \in \Var(\nulseq[m]{b}) \setminus V \}
+ E + 1 \\
\geqslant {}& C + D_1 + \cdots + D_m + E + 1 = N
\end{align*}\enlargethispage{\baselineskip}%
where the last inequality follows from 
$(\Var(\nulseq[m]{b}) \setminus V) \subseteq \Var(\seq[m+1]{a})$.
\item
Suppose the step to $w$ uses \eqref{xi5}.
In this case we have $Q = 0$, $v \in \m{AP}(b_m)$ and
$w = f(\seq{\ell},\langle \seq[m_f]{c} \rangle[\bot]_i)\gamma$.
Let $V = \Var(\seq[m-1]{b},v) \setminus \Var(\seq[m]{a})$.
For every $x \in V$ we fix a
$\bot$-pattern $u_x$ such that $\gamma(x) \to^* u_x$. The existence of
$u_x$ is guaranteed by Lemma~\ref{lem:bottomnormalforms} and the
termination of $\gamma(x)$, which follows from
$s \to^* \cdot \rhd_\mu \xi_\top(a_j)\gamma$
for all $1 \leqslant j \leqslant m$.
We inductively define substitutions $\nulseq[m]{\eta}$
with $\eta_j\colon \Var(\nulseq[j]{b}) \to \TTp(\HH)$
for $1 \leqslant j < m$ and
$\eta_m\colon \Var(v) \to \TTp(\HH)$
as well as numbers $\seq[m]{L}$ and $G_x$ for all
$x \in \Var(\nulseq[m-1]{b}) \setminus V$
such that
\smallskip
\begin{enumerate}[label=\({\alph*}]
\item
$\eta_j(x) \to^* \gamma(x)\:[G_x]$ for all
$0 \leqslant j \leqslant m$ and $x \in \Var(b_j) \setminus V$,
\smallskip
\item
$\zeta^-(\xi_\top(a_j)\eta_{j-1}) \xrightharpoonup{}^* \zeta^-(b_j\eta_j)
\:[L_j]$ with
$L_j \geqslant D_j + \sum \{ G_x \mid x \in \Var(a_j) \} -
\sum \{ G_x \mid x \in \Var(b_j) \setminus V \}$
for all $0 < j < m$.
\smallskip
\item
$\zeta^-(\xi_\top(a_m)\eta_{m-1}) \xrightharpoonup{}^*
\zeta^-(v\eta_m) \:[L_m]$ with
$L_m \geqslant \sum \{ G_x \mid x \in \Var(a_m) \} + D_m$.
\smallskip
\end{enumerate}\bigskip

\begin{itemize}
\item
We define $\eta_0 = \delta$. We obtain $\eta_0(x) \to^* \gamma(x)$
for all $x \in \Var(\ell) = \Var(b_0) \setminus V$ from
$\ell\delta \to^* \ell\gamma$, and define $G_x$ as the cost of this
reduction. This establishes property \(a.
Note that $\sum \{ G_x \mid x \in \Var(b_0) \} = M$.
\smallskip
\item
The case $0 < j < m$ is exactly the same as for \eqref{xi3},
establishing properties \(a\ and \(b.
\item
\smallskip
For $j = m$ we have
$\xi_\top(a_m)\eta_{m-1} \to^* \xi_\top(a_m)\gamma \to^* v\gamma$.
Let $G_m$ be the cost of
$\xi_\top(a_m)\eta_{m-1} \to^* \xi_\top(a_m)\gamma$, so
$G_m \geqslant \sum \{ G_x \mid x \in \Var(a_m) \}$.
The induction hypothesis yields a substitution
$\delta_m\colon \Var(v) \to \TTp(\HH)$ and 
numbers $L'$ and $N'$ with $L' + N' \geqslant G_m + D_m$
such that
$\zeta^-(\xi_\top(a_m)\eta_{m-1}) \xrightharpoonup{}^*
\zeta^-(v\delta_m)\:[L']$ and $v\delta_m \to^* v\gamma\:[N']$.
We define the substitution $\eta_m$ as follows:
\[
\eta_m(x) = \begin{cases}
\eta_{m-1}(x) &\text{if $x \in \Var(\nulseq[m-1]{b})$} \\
u_x &\text{if $x \in \Var(v)$}
\end{cases}
\]
Applying the induction hypothesis to $v\delta_m \to^* v\eta_m$
(with $t = v\eta_m$ and $\sigma$ the
empty substitution) yields
$\zeta^-(v\delta_m) \xrightharpoonup{}^* \zeta^-(v\eta_m)\:[Z]$
for some number $Z \geqslant N'$. Let $L_m = L' + Z$.
Thus, $\zeta^-(\xi_\top(a_{m})\eta_m)
\xrightharpoonup{}^* \zeta^-(v\eta_m)\:[L_m]$.
We have $L_m \geqslant L' + N' \geqslant
G_m + D_m \geqslant \sum \{ G_x \mid x \in \Var(a_m) \} + D_m$.
Hence property \(c\ holds.
\end{itemize}
\smallskip
Let $\eta = \eta_m$. Since $\eta$ coincides with $\eta_j$ on
$\Var(\nulseq[j]{b})$ for all $0 \leqslant j < m$, we obtain

\begin{itemize}
\item
$\zeta^-(s) = f_R(\zeta^-(s_1),\dots,\zeta^-(s_n))
\xrightharpoonup{}^* f_R(\seq{\ell})\eta_{\zeta^-}\:[K]$
\smallskip
\item
$\m{label}(a_j)\eta_{\zeta^-} = \zeta^-(\xi_\top(a_j)\eta)
\xrightharpoonup{}^* \zeta^-(b_j\eta) = \zeta^-(b_j)\eta_{\zeta^-}
\:[L_j]$
for $1 \leqslant j < m$
\smallskip
\item
$\m{label}(a_m)\eta_{\zeta^-} = \zeta^-(\xi_\top(a_m)\eta)
\xrightharpoonup{}^* \zeta^-(v\eta) = \zeta^-(v)\eta_{\zeta^-}\:[L_m]$,
with $\zeta^-(v)$ a $\bot$-pattern that does not unify with $v$
according to Lemma~\ref{AP lemma}.
\end{itemize}
\smallskip
Let $u = f(\seq{\ell},\langle \seq[m_f]{c} \rangle[\bot]_i)\eta$.
We have
\[
\zeta^-(s) \xrightharpoonup{}
\zeta^-(f_{R \setminus \{ \rho_i \}}(\zeta^-(s_1),\dots,\zeta^-(s_n)))
= \zeta^-(u)\:[K + L]
\]
for $L = L_1 + \dots + L_m$.
Furthermore, $u \to^* w \to^* t\sigma\:[M + E]$.
It remains to show that $K + L + M + E \geqslant N$.
Since $K + M \geqslant C$, this amounts to showing
$L \geqslant D_1 + \cdots + D_m$.
We have
\begin{align*}
\qquad\qquad\qquad
L &\geqslant 
\sum_{j = 1}^{m-1} 
\biggl(
D_j + \sum \{ G_x \mid x \in \Var(a_j) \} -
\sum \{ G_x \mid x \in \Var(b_j) \setminus V \} \biggr) + L_m \\
&\geqslant \sum_{j = 1}^{m} D_m +
\sum \{ G_x \mid x \in \Var(\seq[m]{a}) \} \\[-2ex]
&\phantom{{} \geqslant \sum_{j = 1}^{m} D_m}
- \sum \{ G_x \mid x \in \Var(\seq[m-1]{b}) \setminus V \} \\[-2ex]
&\geqslant 
\sum_{j = 1}^{m} D_m
\end{align*}
where the last inequality follows from 
$\Var(\seq[m-1]{b}) \setminus V \subseteq \Var(\seq[m]{a})$.
Since $s \neq u$, we established ($\ast$).
\qedhere
\end{itemize}
\end{enumerate}
\end{proof}

\noindent Thus, we proved the main part of Theorem~\ref{thm:transformcomplete}
for \emph{terminating} terms. For non-terminating terms, we can use this
result, as we will see in the proof of
Lemma~\ref{lem:nontermtranslate:main}.
The following lemma handles the main step.

\begin{lem}
\label{lem:nontermtranslate:main}
For every minimal non-terminating term $s \in \TTp(\HH)$
there exists a non-terminating term $t \in \TTp(\HH)$ such that
$\zeta^-(s) \rightharpoonup^+ \zeta^-(t)$ or
$\zeta^-(s) \rightharpoonup^* \cdot \rhrhd \zeta^-(t)$.
\end{lem}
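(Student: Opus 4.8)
The plan is to fix an infinite $(\Xi(\RR),\mu)$ reduction starting from $s$, trace where its non-termination originates, and translate the terminating ``prefix'' of this reduction into the labeled world by repeatedly invoking Lemma~\ref{lem:completeness:terminating}, closing with a single productive $\xrightharpoonup{}$ step or a $\rhrhd$ step. First I would use minimality: every proper subterm of $s$ at an active position terminates, so $s$ is neither a variable nor a constructor-headed term (such a term with terminating active arguments would itself terminate), whence $s = f(s_1,\dots,s_n,c_1,\dots,c_{m_f})$ with $f$ defined. A root step must occur, since otherwise the whole reduction lives in a terminating active argument. Reducing the (terminating) active arguments first, the reduction reaches a first root step $s \xrightarrow{>\epsilon}^* \cdot \xrightarrow{\epsilon} v$ with $v$ non-terminating, and I would split on the rule used. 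If it uses rule~\eqref{xi1} then $v = \xi_\top(r)\gamma$, and applying Lemma~\ref{lem:completeness:terminating} to the reductions $s_j \to^* \ell_j\gamma$ yields, exactly as in case \(a\ of that lemma, a substitution $\eta$ with $\zeta^-(s) \xrightharpoonup{}^+ \zeta^-(\xi_\top(r)\eta)$; since $\xi_\top(r)\eta \to^* \xi_\top(r)\gamma$ the target is non-terminating, so $t = \xi_\top(r)\eta$ works. If it uses rule~\eqref{xi6} the argument is identical but closes with a $\xrightharpoonup{\bot\,}$ step, mirroring case \(b.

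The interesting case is rule~\eqref{xi2}, which enters the condition evaluation of some rule $\rho_i\colon f(\vec\ell)\to r \Leftarrow a_1\approx b_1,\dots,a_k\approx b_k$; note $\rho_i \in R$ because the labels only move from $\top$ to $\bot$ along a reduction, so the slot used by~\eqref{xi2} was already $\top$ in $s$. Here I would let $l^\ast$ be the \emph{smallest} index such that $\xi_\top(a_{l^\ast})\gamma$ is non-terminating, if such an index exists among the conditions the reduction initiates. When no such $l^\ast$ exists, every initiated condition evaluation terminates, so by Lemma~\ref{lem:bottomnormalforms} and Lemma~\ref{AP lemma} each resolves to a $b_l$-instance or an anti-pattern; hence the reduction either completes all conditions and escapes by~\eqref{xi3} into a non-terminating $\xi_\top(r)\gamma$, or it fails a condition by~\eqref{xi5} and returns to a non-terminating $f(\dots,\langle\vec c\rangle[\bot]_i)$. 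In both situations every sub-reduction handed to Lemma~\ref{lem:completeness:terminating} is terminating, and I can replay the constructions of its subcases~\eqref{xi3} and~\eqref{xi5} to obtain a single successful or failed $\xrightharpoonup{}$ step landing on $\zeta^-(t)$ for a non-terminating $t$.

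When $l^\ast$ exists, minimality of $l^\ast$ guarantees that conditions $1,\dots,l^\ast-1$ evaluate by terminating reductions, and since the reduction initiated $\xi_\top(a_{l^\ast})\gamma$ it must first have driven those conditions to $b_l$-instances. I would build a substitution $\eta$ together with labeled reductions $\m{label}(a_l)\eta_{\zeta^-} \xrightharpoonup{}^* b_l\eta_{\zeta^-}$ for $l < l^\ast$ via Lemma~\ref{lem:completeness:terminating}, reusing the ``used-later variables'' set $V$ and the $\bot$-pattern fill-ins from its subcase~\eqref{xi5}, so that $\zeta^-(s) \xrightharpoonup{}^* f_R(\vec\ell\,\eta_{\zeta^-})$. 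A single $\rhrhd$ step with rule $\rho_i$, whose successful conditions are $a_1,\dots,a_{l^\ast-1}$ (the side conditions of Definition~\ref{cc infinite} being exactly these labeled evaluations), then reaches $\m{label}(a_{l^\ast})\eta_{\zeta^-} = \zeta^-(\xi_\top(a_{l^\ast})\eta)$; as $\xi_\top(a_{l^\ast})\eta \to^* \xi_\top(a_{l^\ast})\gamma$ the term $t = \xi_\top(a_{l^\ast})\eta$ is non-terminating, yielding the required $\zeta^-(s) \xrightharpoonup{}^* \cdot \rhrhd \zeta^-(t)$.

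The main obstacle I anticipate is precisely the termination hypothesis of Lemma~\ref{lem:completeness:terminating}: a condition that the chosen path completes may nonetheless admit a divergent reduction, so one cannot naively feed ``all conditions up to the escape'' into that lemma. Choosing $l^\ast$ as the earliest genuinely non-terminating condition is what resolves this, making every condition actually used a terminating one while relocating the divergence into the single condition targeted by the $\rhrhd$ step. The rest is bookkeeping: verifying that the $\eta$-substitution constructions of Lemma~\ref{lem:completeness:terminating} transfer unchanged, since they never relied on termination of the \emph{final} condition, and that properness of $t$ is inherited from $\xi_\top(r)$ and $\xi_\top(a_{l^\ast})$ being proper (Lemma~\ref{lem:inverseprops}) applied to proper substitutions.
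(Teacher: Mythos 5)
Your overall strategy---isolate the first root step, split on the type of the rule used there, translate the terminating argument reductions via Lemma~\ref{lem:completeness:terminating}, and in the \eqref{xi2} case iterate through the conditions until the divergence is located---is the same as the paper's, and your treatment of the \eqref{xi1} and \eqref{xi6} cases is correct. The gap is in how you locate the divergent condition. You define $l^\ast$ as the least index with $\xi_\top(a_{l^\ast})\gamma$ non-terminating and claim this makes ``every condition actually used a terminating one.'' But the terms you hand to Lemma~\ref{lem:completeness:terminating} are not the $\gamma$-instances: they are the instances $\xi_\top(a_l)\eta_{l-1}$ under the substitutions you have built, which only satisfy $\eta_{l-1}(x) \to^* \gamma(x)$. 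Termination of $\xi_\top(a_l)\gamma$ does not imply termination of $\xi_\top(a_l)\eta_{l-1}$: since $\Xi(\RR)$ need not be confluent (confluence is not assumed here), a term can be non-terminating while possessing a terminating reduct---already the strong CCTRS with rules $\m{h}(\m{c}) \to \m{h}(\m{c})$, $\m{b} \to \m{c}$, $\m{b} \to \m{d}$ exhibits this for $\m{h}(\m{b})$ versus its reduct $\m{h}(\m{d})$. So for some $l < l^\ast$ the termination hypothesis of Lemma~\ref{lem:completeness:terminating} may simply fail and your construction of the labeled evaluations $\m{label}(a_l)\eta_{\zeta^-} \xrightharpoonup{}^* b_l\eta_{\zeta^-}$ breaks down; the branch where no $l^\ast$ exists suffers from the same defect.

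The repair---and what the paper actually does---is to make the test adaptive rather than pre-committed on $\gamma$: at stage $j$, test whether $\xi_\top(a_j)\tau_j$ (the instance under the substitution actually constructed so far) is non-terminating. If it is, stop immediately and emit the $\rhrhd$ step to $t = \xi_\top(a_j)\tau_j$; this is legitimate because the lemma only asks for \emph{some} non-terminating target, not for the condition at which the original $(\Xi(\RR),\mu)$ reduction ``really'' diverges. If it is terminating, Lemma~\ref{lem:completeness:terminating} applies and you proceed to stage $j+1$; since rules of type \eqref{xi4} can only be applied finitely often, the iteration terminates. With this one change your argument goes through; the remaining bookkeeping in your proposal matches the paper's proof.
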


Here a minimal non-terminating term is a non-terminating term
with the property that every proper subterm at an active position is
terminating.

\begin{proof}
We must have $s = f(\seq{s},\seq[m_f]{c})$ for some defined
function symbol $f$. Let $R = \{ \smash{\rho^f_i} \mid c_i = \top \}$. We
have $\zeta^-(s) = f_R(\zeta^-(s_1),\dots,\zeta^-(s_n))$.
Since the terms $\seq{s}$ are terminating by minimality, any
infinite reduction starting at $s$ must contain a root step. So
\[
s \xrightarrow{\smash{> \epsilon}}^* u\gamma \xrightarrow{\epsilon}
v\gamma
\]
for some rule $u \to v$ of $\Xi(\RR)$ and substitution $\gamma$
such that $v\gamma$ is non-terminating. Inspecting the applicable
rules in $\Xi(\RR)$, it follows that $u$ is a linear basic term of the
form
$u = f(\seq{u},\langle \seq[m_f]{y} \rangle[\top]_i)$.
Let $\delta$ be the restriction of $\gamma$ to $\{ \seq[m_f]{y} \}$
We have $\delta(y_j) = c_j$ for all $1 \leqslant j \leqslant m_f$.
Let $u' = u\delta$ and $v' = v\delta$.
Clearly $u'\gamma = u\gamma$ and $v'\gamma = v\gamma$, while $u'$ is a
proper linear term. Because the terms $\seq{s}$ are terminating by
minimality, Lemma~\ref{lem:completeness:terminating} provides
substitutions $\seq{\tau}$ with $\tau_j\colon \Var(u_i) \to \TTp(\HH)$
such that $\zeta^-(s_j) \xrightarrow{}^* \zeta^-(u_j\tau_j)$ and
$u_j\tau_j \to^* u_j\gamma$. Since $u$ is linear, the substitution
$\tau = \tau_1 \cup \dots \cup \tau_n$ is well-defined.
We obtain
\[
\zeta^-(s) \xrightharpoonup{}^* \zeta^-(u'\tau) =
\zeta^-(u')\tau_{\zeta^-} =
f_R(\zeta^-(u_1),\dots,\zeta^-(u_n))\tau_{\zeta^-}
\]
with $\tau(x) \to^* \gamma(x)$ for all $x \in \Var(u')$.
We now distinguish three cases, depending on the nature of the rule
$u \to v$. Let $\rho_i\colon f(\seq{\ell}) \to r$ be the rule in $\RR$
that give rise to $u \to v$.
\begin{enumerate}
\item
Suppose $u \to v$ is a rule of type \eqref{xi6}.
There exists $1 \leqslant j \leqslant n$ such that
$u_j \in \m{AP}(\ell_j)$. We have
$v = f(\seq{u},\langle \seq[m_f]{x} \rangle[\bot]_i)$.
According to
Lemma~\ref{lem:inverseprops}(\ref{lem:inverseprops:antipattern})
$\zeta^-(u_j)$ is a linear labeled normal form which does not unify with
$\ell_j$. Hence
\[
\zeta^-(u'\tau) \xrightharpoonup{\bot\,}
f_{R \setminus \{ \rho_i \}}(\zeta^-(u_1\tau),\dots,\zeta^-(u_n\tau)) =
\zeta^-(v'\tau)
\]
Since all variables in $v'$ are at active positions,
we have $v'\tau \to^* v'\gamma = v\gamma$. It follows that $v'\tau$ is
non-terminating and thus we can take $v'\tau$ for $t$ to satisfy the first
possibility of the statement of the lemma.
\smallskip
\item
Suppose $u \to v$ is a rule of type \eqref{xi1}.
So $u_j = \ell_j$ for all $1 \leqslant j \leqslant n$ and
$v' = \xi_\top(r)$.
Using Lemma~\ref{lem:inverseprops} we obtain
$\zeta^-(u_j\tau) = u_j\tau_{\zeta^-}$ for
$1 \leqslant j \leqslant n$ as well as
$\zeta^-(v'\tau) = \m{label}(r)\tau_{\zeta^-}$.
Hence $\zeta^-(u'\tau) = f_R(u_1\tau_{\zeta^-},\dots,u_n\tau_{\zeta^-})
\xrightharpoonup{} \zeta^-(v'\tau)$
and we conclude as in the preceding case.
\smallskip
\item
Suppose $u \to v$ is a rule of type \eqref{xi2}.
So $u_j = \ell_j$ for all $1 \leqslant j \leqslant n$ and
$v' = f_i^1(\seq{\ell},\langle \seq[m_f]{c} \rangle[\xi_\top(a_1)]_i)$.
We have $\zeta^-(s) \xrightharpoonup{}^* f_R(\seq{\ell})\tau_{\zeta^-}$.
We will define a number $1 \leqslant m \leqslant k$,
substitutions $\tau_1, \gamma_1, \dots, \tau_m, \gamma_m$, and terms
$\seq[m]{r}$ such that
\begin{enumerate}[label=\({\alph*}]
\item
$\tau_j\colon \Var(\nulseq[j-1]{b}) \to \TTp(\HH)$,
\item
$r_j = f_i^j(\seq{\ell},\langle \seq[m_f]{c} \rangle
[\seq[j-1]{b},\xi_\top(a_j)]_i)$,
\item
$\m{label}(a_l)(\tau_j)_{\zeta^-} \rightharpoonup^* b_l(\tau_j)_{\zeta^-}$
for all $1 \leqslant l < j$,
\item
$\tau_j(x) \to^* \gamma_j(x)$ for all $x \in \Var(\nulseq[j-1]{b})$,
\item
$r_j\gamma_j$ is non-terminating, and
\item
$\zeta^-(s) \rightharpoonup^* f_R(\seq{\ell})(\tau_j)_{\zeta^-}$
\smallskip
\end{enumerate}
for all $1 \leqslant j \leqslant m$.
By defining $\tau_1 = \tau$, $\gamma_1 = \gamma$, and $r_1 = v'$, the
above properties are clearly satisfied for $j = 1$.
Consider $\xi_\top(a_j)\tau_j$, which is a ground proper term by
Lemma~\ref{lem:inverseprops}(\ref{lem:inverseprops:top}).
If $\xi_\top(a_j)\tau_j$ is non-terminating then we let $m = j$
and define
$t = \xi_\top(a_j)\tau_j$. In this case we have
$\zeta^-(t) = \m{label}(a_j)(\tau_j)_{\zeta^-}$ by the same lemma 
and thus $f_R(\seq{\ell})(\tau_j)_{\zeta^-} \rhrhd \zeta^-(t)$
by property \(c, establishing the second possibility of the statement of
the lemma.

So assume that $\xi_\top(a_j)\tau_j$ is terminating. We have
$\xi_\top(a_j)\tau_j \to^* \xi_\top(a_j)\gamma_j$, so the latter term
is terminating as well. Since $\xi_\top(a_j)\gamma_j$ is the only
active argument in $r_j\gamma_j$, the infinite reduction starting
from the latter term must contain a root step. So
$\smash{r_j\gamma_j \xrightarrow{>\epsilon\,} \ell'\gamma_{j+1}
\xrightarrow{\epsilon} r'\gamma_{j+1}}$ for some rule
$\ell' \to r' \in \Xi(\RR)$ and substitution
$\gamma_{j+1}$ with $\dom(\gamma_{j+1}) = \Var(\ell')$ such that
$r'\gamma_{j+1}$ is non-terminating.
Since $\m{root}(r_j\gamma_j) = f_i^j$, 
$\ell' = f_i^j(\seq{\ell},\langle \seq[m_f]{x} \rangle [\seq[j-1]{b},w]_i)$
for some $\bot$-pattern $w$ ($w = b_j$ when $\ell' \to r'$ is a rule
of type \eqref{xi3} or \eqref{xi4} and $w \in \m{AP}(b_j)$ when
$\ell' \to r'$ is a rule of type \eqref{xi5}) which has no
variables in common with $\seq{\ell},\seq[j-1]{b}$. We have
$\xi_\top(a_j)\tau_j \to^* \xi_\top(a_j)\gamma_j \to^* w\gamma_{j+1}$.
From Lemma~\ref{lem:completeness:terminating} we obtain a
substitution $\tau\colon \Var(w) \to \TTp(\HH)$ such that
$\zeta^-(\xi_\top(a_j)\tau_j) \rightharpoonup^* \zeta^-(w\tau)$
and $w\tau \to^* w\gamma_{j+1}$.
Let $\tau_{j+1} = \tau_j \cup \tau$.
We have
$\tau_{j+1}\colon \Var(\nulseq[j-1]{b},w) \to \TTp(\HH)$
as well as
$\zeta^-(\xi_\top(a_j)\tau_j) = \zeta^-(\xi_\top(a_j)\tau_{j+1})
= \m{label}(a_j)(\tau_{j+1})_{\zeta^-}$ by
Lemma~\ref{lem:inverseprops}(\ref{lem:inverseprops:top}).
Furthermore,
$\tau_{j+1}(x) \to^* \gamma_{j+1}(x)$ for all
$x \in \Var(\nulseq[j-1]{b},w)$.
We distinguish three subcases, depending of the
type of the rule $\ell' \to r'$. In the first and third case,
we obtain the statement of the lemma. In the second case, we
establish the properties \(a--\(f\ for $j+1$. Since rules of
type \eqref{xi4} can be used only finitely many times, this
concludes the proof.
\smallskip
\begin{itemize}[label=($1_\rho$)]
\item[\eqref{xi3}]
In this case we have $j = k$, $w = b_j$, and $r' = \xi_\top(r)$.
So $\m{label}(a_l)(\tau_{j+1})_{\zeta^-} \rightharpoonup^*
b_l(\tau_{j+1})_{\zeta^-}$ for all $1 \leqslant l \leqslant k$.
Since all variables in $\xi_\top(r)$ occur at active positions,
$r'\tau_{j+1} \to^* r'\gamma_{j+1}$ and thus $r'\tau_{j+1}$
is non-terminating. According to
Lemma~\ref{lem:inverseprops}(\ref{lem:inverseprops:top})
$r'\tau_{j+1}$ is proper and
$\zeta^-(r'\tau_{j+1}) = \m{label}(r)(\tau_{j+1})_{\zeta^-}$.
So we choose $t = r'\tau_{j+1}$ to obtain a successful
reduction step
$f_R(\seq{\ell})(\tau_{j+1})_{\zeta^-} \rightharpoonup \zeta^-(t)$. Hence
$\zeta^-(s) \rightharpoonup^+ \zeta^-(t)$ and thus
the first possibility of the statement of the lemma holds.
\item[\eqref{xi4}]
In this case, $w = b_j$ and $r' = f_i^{j+1}(\seq{\ell},
\langle \seq[m_f]{y} \rangle [\seq[j]{b},\xi_\top(a_{j+1})]_i)$ with
$j < k$. We have $\m{label}(a_l)(\tau_{j+1})_{\zeta^-} \rightharpoonup^*
b_l(\tau_{j+1})_{\zeta^-}$ for all $1 \leqslant l < j+1$.
Let $r_{j+1} = r'\delta$. One easily checks that the properties
\(a--\(f\ are satisfied for $j+1$.
\item[\eqref{xi5}]
In this case, $w \in \m{AP}(b_j)$
and $r' = f(\seq{\ell},\langle \seq[m_f]{y} \rangle[\bot]_i)$.
According to
Lemma~\ref{lem:inverseprops},
$\zeta^-(w)$ is a $\bot$-pattern which does not unify with
$b_j$ and $\zeta^-(w)(\tau_{j+1})_{\zeta^-} = \zeta^-(w\tau_{j+1})$.
Since $\m{label}(a_j)(\tau_{j+1})_{\zeta^-} \rightharpoonup^*
\zeta^-(w)(\tau_{j+1})_{\zeta^-}$ and
$\m{label}(a_l)(\tau_{j+1})_{\zeta^-} \rightharpoonup^*
b_l(\tau_{j+1})_{\zeta^-}$ for all $1 \leqslant l < j$,
the conditions for a failing step are satisfied and thus
$f_R(\seq{\ell})(\tau_{j+1})_{\zeta^-} \rightharpoonup
f_{R \setminus \{ \rho_i \}}(\seq{\ell})(\tau_{j+1})_{\zeta^-}$.
The term $t = r'\delta\tau_{j+1}$ is proper and since all variables in $r'\delta$ occur
at active positions, $t \to^* r'\delta\gamma_{j+1}$ and thus
$t$ is non-terminating. Since
$\zeta^-(t) = \zeta^-(r'\delta)(\tau_{j+1})_{\zeta^-} =
f_{R \setminus \{ \rho_i \}}(\seq{\ell})(\tau_{j+1})_{\zeta^-}$,
we obtain $\zeta^-(s) \rightharpoonup^+ \zeta^-(t)$ to satisfy
the first possibility of the statement of the lemma.
\qedhere
\end{itemize}
\end{enumerate}
\end{proof}

\begin{lem}
\label{lem:nontermtranslate}
If $s \in \TTp(\HH)$ is non-terminating then
$\zeta^-(s) \xrightharpoonup{\infty\,}$.
\end{lem}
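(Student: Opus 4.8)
The plan is to bootstrap from Lemma~\ref{lem:nontermtranslate:main}, which only covers \emph{minimal} non-terminating terms, up to arbitrary non-terminating proper terms. The key is a one-step progress claim: for every non-terminating $s \in \TTp(\HH)$ there is a non-terminating $s' \in \TTp(\HH)$ with $\zeta^-(s) \,(\xrightharpoonup{} \cup \rhrhd)^{+}\, \zeta^-(s')$. Granting this, I iterate from $s_0 := s$ to obtain proper non-terminating terms $s_0, s_1, s_2, \dots$ together with nonempty $(\xrightharpoonup{} \cup \rhrhd)$-paths from each $\zeta^-(s_i)$ to $\zeta^-(s_{i+1})$; concatenating the paths yields an infinite $(\xrightharpoonup{} \cup \rhrhd)$-sequence out of $\zeta^-(s)$, i.e.\ $\zeta^-(s) \xrightharpoonup{\infty\,}$.

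First I would record two facts about contexts. (i) The relation $\xrightharpoonup{}$ (including its sub-case $\xrightharpoonup{\bot\,}$) is closed under contexts, since each clause of Definition~\ref{cc reduction} acts at a chosen position $p$ and leaves the surrounding term untouched. (ii) The relation $\rhrhd$ \emph{lifts} through contexts: if $w \rhrhd v$ then $C[w] \rhrhd v$ for every context $C$, because the redex of $w$ sits at some position $q$, which is also a position of $C[w]$, and $\rhrhd$ returns only the condition term $\m{label}(a_{j+1})\sigma = v$, discarding everything around its redex.

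To prove the claim, I extract from $s$ a minimal non-terminating proper subterm $\hat s$ at an active position $p$: descend into non-terminating active arguments until none remain, which halts as terms are finite, and active subterms of proper terms are proper. Since $\zeta^-$ acts homomorphically above active positions and erases only the $\{\bot,\top\}$ arguments, the position $p$ transfers verbatim to $\zeta^-(s)$, giving a factorization $\zeta^-(s) = C[\zeta^-(\hat s)]$ where $C$ is obtained by applying $\zeta^-$ to $s$ around the hole at $p$. Applying Lemma~\ref{lem:nontermtranslate:main} to $\hat s$ yields a non-terminating $t \in \TTp(\HH)$ with either $\zeta^-(\hat s) \xrightharpoonup{}^{+} \zeta^-(t)$ or $\zeta^-(\hat s) \xrightharpoonup{}^{*} \cdot \rhrhd \zeta^-(t)$. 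In the first case fact (i) gives $\zeta^-(s) \xrightharpoonup{}^{+} C[\zeta^-(t)] = \zeta^-(s[t]_p)$, where $s[t]_p$ is proper (substituting the proper $t$ at the active position $p$, cf.\ Lemma~\ref{lem:inverseprops}) and non-terminating, so I set $s' := s[t]_p$. In the second case fact (i) gives $\zeta^-(s) \xrightharpoonup{}^{*} C[w]$ where $\zeta^-(\hat s) \xrightharpoonup{}^{*} w \rhrhd \zeta^-(t)$, and fact (ii) then yields $C[w] \rhrhd \zeta^-(t)$; here I set $s' := t$. Either branch establishes the claim.

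The step I expect to be most delicate is the interplay between these context moves and the preservation of properness along the iteration. I must check that replacing the minimal redex $\zeta^-(\hat s)$ by $\zeta^-(t)$ inside $\zeta^-(s)$ really yields $\zeta^-$ of a proper term, so that the construction can be repeated, and, in the $\rhrhd$ branch, that discarding the entire outer context $C$ is harmless precisely because $\rhrhd$ emits the condition term independently of its surroundings. Both points rely on all the surgery happening at active positions, where properness and the homomorphic behaviour of $\zeta^-$ are retained; the remainder is routine bookkeeping.
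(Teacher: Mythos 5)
Your proposal is correct and follows essentially the same route as the paper: extract a minimal non-terminating subterm at an active position, apply Lemma~\ref{lem:nontermtranslate:main}, push the resulting $\xrightharpoonup{}$-steps through the context while letting $\rhrhd$ discard it, and iterate to build the infinite $(\xrightharpoonup{} \cup \rhrhd)$-sequence. The only cosmetic difference is that you merge the paper's two $\rhrhd$-subcases (with and without preceding $\xrightharpoonup{}$-steps) into one.
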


\begin{proof}
We construct an infinite sequence of non-terminating proper ground
terms $s_0$, $s_1$, $s_2$, $\dots$
with $s_0 = s$ such that
$\zeta^-(s_i) \mathrel{(\rightharpoonup \cup \rhrhd)}^+ \zeta^-(s_{i+1})$
for all $i \geqslant 0$.
Suppose $s_j$ has been defined. Since $s_j$ is non-terminating, it
contains a minimal non-terminating subterm $u$, say at position $p$
$\in \Pos_\mu(s_j)$.
According to Lemma~\ref{lem:nontermtranslate:main} there
exists a non-terminating term $v \in \TTp(\HH)$ such that
$\zeta^-(u) \rightharpoonup^+ \zeta^-(v)$ or
$\zeta^-(u) \rightharpoonup^* \cdot \rhrhd \zeta^-(v)$.
We distinguish three cases.
\begin{itemize}
\item
If $\zeta^-(u) \rightharpoonup^+ \zeta^-(v)$ then
$\zeta^-(s_i) = \zeta^-(s_i[u]_p) = \zeta^-(s_i)[\zeta^-(u)]_p$ by
Lemma~\ref{lem:inverseprops}(\ref{lem:inverseprops:substitution})
and thus $\zeta^-(s_i)
\rightharpoonup^+ \zeta^-(s_i)[\zeta^-(v)]_p = \zeta^-(s_i[v]_p)$.
Note that $s_i[v]_p$ is non-terminating. Hence we can take
$s_{i+1} = s_i[v]_p$.
\item
Suppose $\zeta^-(u) \rhrhd \zeta^-(v)$. We have
$\zeta^-(s_i) = \zeta^-(s_i[u]_p) = \zeta^-(s_i)[\zeta^-(u)]_p$ and thus
$\zeta^-(s_i) \rhrhd \zeta^-(v)$ by the definition of $\rhrhd$.
Hence we define $s_{i+1} = v$.
\item
Suppose $\zeta^-(u) \rightharpoonup^+ w \rhrhd \zeta^-(v)$.
We have $\zeta^-(s_i) \rightharpoonup^+ \zeta^-(s_i)[w]_p
\rhrhd \zeta^-(v)$ and hence also in this case we take
$s_{i+1} = v$.
\qedhere
\end{itemize}
\end{proof}

\begin{proof}[Proof of Theorem~\ref{thm:transformcomplete}]
Let $\RR$ be a strong CCTRS and $s \in \TT(\GG)$. We have
$\zeta(s) \in \TTp(\HH)$ by
Lemma~\ref{lem:inverseprops}(\ref{lem:inverseprops:inverseouter}).
First suppose that $\zeta(s)$ is terminating and there
exists a context-sensitive reduction
$\zeta(s) \to_{\Xi(\RR),\mu}^* t\:[N]$.
Let $u$ be a normal form of $t$.
Obviously, $\zeta(s) \to_{\Xi(\RR),\mu}^* u\:[M]$ for some
$M \geqslant N$. According to
Lemma~\ref{lem:bottomnormalforms} the term $u$ is a $\bot$-pattern.
Lemma~\ref{lem:completeness:terminating} yields
$s = \zeta^-(\zeta(s)) \rightharpoonup^* \zeta^-(u)\:[K]$ with
$K \geqslant M$.
Next suppose the existence of
an infinite $(\Xi(\RR),\mu)$ reduction starting from
$\zeta(s)$. In this case
$s = \zeta^-(\zeta(s)) \xrightharpoonup{\infty\,}$ by
Lemma~\ref{lem:nontermtranslate}.
\end{proof}

\end{document}